\def\paragraph#1{{\bf #1\ }}
\newtheorem{lemma}{Lemma}[section]  
\newtheorem{theorem}[lemma]{Theorem}
\newtheorem{proposition}[lemma]{Proposition}
\newtheorem{Formal Statement}{Formal Statement} 
\newenvironment{proof}[1][Proof]{\begin{trivlist}
\item[\hskip \labelsep {\bfseries #1}]}{\end{trivlist}}
\def\Box{\leavevmode\vbox{\hrule
     \hbox{\vrule\kern4pt\vbox{\kern4pt}%
           \vrule}\hrule}}
\def\blackbox{\leavevmode\vrule height 5pt width 4pt depth 0pt\relax}
\def\qed{\null\hfill {$\blackbox$}\bigskip}
\def\R{\mathbb{R}}
\def\S{\mathbb{S}}
\def\eps{\varepsilon}
\title{Congestion in a macroscopic model of self-driven particles modeling gregariousness}
\author{Pierre Degond$^{1,2}$,
        Laurent Navoret$^{1,2}$,       
        Richard Bon$^{3,4}$,						
        David Sanchez$^{1,2}$}
\date{}
\begin{document}

\maketitle

\vspace{0.5 cm}

\begin{center}
              1. Universit\'e de Toulouse; UPS, INSA, UT1, UTM; Institut de Math\'ematiques de Toulouse; F-31062 Toulouse, France \\
              2. CNRS; Institut de Mathmatiques de Toulouse UMR 5219; F-31062 Toulouse, France.\\
              email: pierre.degond, laurent.navoret, david.sanchez @math.univ-toulouse.fr     
\end{center}
           
\begin{center}             
3. Universit\'e de Toulouse; UPS, INSA, UT1, UTM; Centre de Recherches sur la Cognition Animale; F-31062 Toulouse, France \\
              4. CNRS; Centre de Recherches sur la Cognition Animale UMR 5169; F-31062 Toulouse, France.\\
              email: rbon@cict.fr     
\end{center}

\begin{abstract}
We analyze a macroscopic model with a maximal density constraint which describes short range repulsion in biological systems. This system aims at modeling finite-size particles which cannot overlap and repel each other when they are too close. The parts of the fluid where the maximal density is reached behave like incompressible fluids while lower density regions are compressible. This paper investigates the transition between the compressible and incompressible regions. To capture this transition, we study a one-dimensional Riemann problem and introduce a perturbation problem which regularizes the compressible-incompressible transition. Specific difficulties related to the non-conservativity of the problem are discussed.
\end{abstract}

{\bf Keywords:} Congestion, Riemann problem, incompressible-compressible transition, clusters dynamics, gregariousness, steric constraints

\section{Introduction}
\label{intro}
We consider a macroscopic model of self-driven particles which describes the dynamics of a large number of social interactive agents. More specifically, we are interested in modeling short range repulsion effects due to the fact that finite-size agents (e.g. sheep in a herd) cannot overlap (non-overlapping or steric constraints). To this aim, we derive a hyperbolic problem with a density constraint as a limit of an unconstrained system with a repulsive force which turns on suddenly when the density becomes close to the maximal one. The limit model requires transmission conditions at the transition between an unclustered region (where the maximal density is not reached) and a clustered region. In unclustered regions, the fluid is compressible while it becomes incompressible in the clustered ones. Therefore, this paper aims at providing a description of this transition between a compressible and an incompressible fluid. Unfortunately, the formal perturbative approach which we implement does not directly provide information about these transmission conditions. In order to retrieve this information, we rigorously analyze special solutions of the perturbation problem: the Riemann problem. These solutions are explicitely known  and allow us to carry out the limit rigorously and to recover the required transmission conditions. We postulate that these conditions, which are rigorously proven only for Riemann problem solutions, do extend to all solutions. However, being non-rigorous for general solutions, these conditions are stated as formal conditions in the "formal statement 1" below, which constitutes the main result of the present paper. Still, the rigorous analysis of Riemann problem solutions is quite technical and the proofs of many statements are deferred to appendices. 

The modeling of biological systems undergoing flocking or herding dynamics has been the subject of a vast literature. A first class of models relies on the alignement interaction between neighbouring self-propelled particles. The simplest of these models is an individual-based (or microscopic) model proposed by Vicsek \cite{95_Vicsek_PhasTrans2d,2004_GregChate_CohesivMotion}. A macroscopic version of the Vicsek model is derived in \cite{2008_ContinuumLimit_DM} and a collisional Vicsek model is proposed in \cite{2006_BoltzSelfPropel_BertinGregoire}. A variant of the Vicsek model has been proposed by Cucker and Smale \cite{2007_MathEmergence_CuckerSmale,2007_EmergenceFlocks_CuckerSmale} (see also \cite{2008_Flocking_HaTadmor,2009_CuckerSmale_CarilloToscani} for recent mathematical results). By incorporating long-range attractive and short-range repulsive forces to the Vicsek model, one obtains the three zones model of Aoki \cite{1982_SimuStud_Aoki,1987_FlocksHerdsSchools_Reynolds,2002_Couzin_CollectMem}, originally devised to describe fish schools. Models with repulsive-attractive interaction only (without alignement interaction) have been studied in \cite{2003_IntPot_MogKesh,2006_OrsognaBertozzi,2007_BertozziOrsogna}. Such models have been used for pedestrian interactions \cite{1995_SocialForce_HelbingMolnar,2009_ExpStudy_Moussaid}. Other kinds of macroscopic models of drift-diffusion type have been analyzed in \cite{2001_TrafficReview_Helbing,1999_NonLocalModel_MogKesh,06_Nonlocal_TopazBertozzi,2008_CoarsInterfac_Slepcev} and different hyperbolic models are compared in \cite{2008_ModCrowd_Bellomo}. For biological reviews, we can refer to \cite{2003_Vertebrates_CouzinKrause,2009_InformProcess_MoussaidHelbing}.

As outlined above, we focus here on the congestion constraint: animals or individuals cannot overlap (steric constraint). As a consequence this congestion constraint leads to the existence of a maximal density $\rho^{\ast}$, which cannot be exceeded inside the flock. This problem has been analyzed before and schematically two methods have been proposed. A first one consists in modeling repulsion through forces or diffusion terms \cite{2003_IntPot_MogKesh,2006_OrsognaBertozzi,2007_BertozziOrsogna,1999_NonLocalModel_MogKesh,06_Nonlocal_TopazBertozzi}. However, in this approach, the individuals are point particles  and their finite size is not explicitely described. So the maximal density constraint is not explicitely taken into account. To explicitely take this maximal density constraint inco account, in \cite{2007_contact_MauryVenel,2008_Crowd_MauryVenel},  the authors have developped an alternative approach: the particles are first evolved freely over one time step and then projected towards the "closest" admissible non-overlapping configuration. This leads to non-local interaction between the particles which contradicts the local character of the interactions in most biological systems. By contrast, we developped a third route inspired by multi-phase flows \cite{2PhaseFlow_Bouchut_al} and traffic jam modeling \cite{2008_Traffic_DegondRascle,2008_Trafficflow_Berthelin_D}. The repulsive force is  modeled by a nonlinear pressure law $p(\rho)$ which becomes singular as the density approaches the maximal density $\rho^{\ast}$. Additionnally a small parameter $\eps$  allows to describe the fact that the regularized pressure is very small of order $\eps$ as long as the density $\rho$ is smaller than $\rho^{\ast}$ and turns on suddenly to a finite or even large value when $\rho$ becomes close to $\rho^{\ast}$. In the limit $\eps \rightarrow 0$ of this model, two distinct phases  appear: a pressureless compressible phase which describes free motion in unclustered regions and an incompressible phase which describes the motion inside the clusters. The major difficulty is to find the transmission conditions between the compressible and incompressible phases. 

The present paper is a multi-dimensional extension of the methodology presented in \cite{2PhaseFlow_Bouchut_al,2008_Traffic_DegondRascle,2008_Trafficflow_Berthelin_D} for multi-phase flows or traffic. However, an additional difficulty arises due to the non-conservative character of the original hyperbolic  model. Indeed, momentum is not a conserved quantity because the particles in the underlying particle system  are self-propelled particles which have constant (in-time) and uniform (in-space) velocities. Therefore, the model which is at the starting point of this paper is a non-conservative hyperbolic system which as such presents an ambiguity in the definition of weak solutions. We will show that this ambiguity  can be partly removed for one-dimensional Riemann problem solutions. We believe that the strategy developped in this paper to analyze congestion effects can apply to other systems such as bacteria populations \cite{ClusterRod_PDB}, economic systems like supply chains \cite{InteractMachine_ADR} or physical systems like granular materials \cite{1995_SizeSegregegation_PoschelHermann,2005_GranularCollective_Barrat_al}.  

The organization of this article is as follows. In section 2, we present the perturbation model and its limit. We also provide the connection conditions between the compressible and incompressible phases of the limit model, which are the main result of the paper. A remark on collision of clusters is also formulated. With these informations, we show that the available information is sufficient to provide a well-defined dynamics at least in the case of a single cluster. Section 3 is devoted to the study of the one-dimensional Riemann problem for the perturbed problem and the limits of its solutions as $\eps \rightarrow 0$.  As stated above, this analysis provides a strong support for (but not a proof of) the postulated transmission conditions at the compressible-incompressible interface  which are provided in section 2. Appendix A  provides a formal derivation of the initial model from an individual based model with long-range attractive and short-range repulsive interactions, which describes the aggregation of gregarious animals like sheep. Appendices B to E provide proofs of technical lemmas and theorems needed in the analysis of the Riemann problem.

\section{Model and goals}
\label{incomp_comp_trition}

\subsection{The model and its rescaled form}

Our starting point is the following model, written in dimensionless form:
\begin{eqnarray}
&&\partial_{t}\rho + \nabla_{\vec{x}}\cdot(\rho\Omega) = 0,\label{Eq:rho}\\
&&\partial_{t}\Omega + (\Omega\cdot\nabla_{\vec{x}})\Omega + (\mbox{Id} - \Omega\otimes\Omega)\nabla p(\rho) = 0.\label{Eq:Omega}
\end{eqnarray}
where $\rho = \rho(\vec{x},t)$ is the particle density and $\Omega = \Omega(\vec{x},t)$ is the particle velocity. The problem is posed on the 2-dimensional plane $\vec{x} \in \R^{2}$ and $t > 0$ is the time. The velocity $\Omega(\vec{x},t) \in \R^{2}$ is supposed to satisfy the normalization constraint
\begin{equation}
|\Omega(\vec{x},t)| = 1,\ \forall \vec{x} \in \R^{2},\ \forall t > 0.
\end{equation} 
Therefore, $\Omega(\vec{x},t) \in \S^{1}$, the unit sphere, at any point in space-time. The function $p(\rho)$ is an increasing function such that $p(\rho) \sim \rho^{\gamma}$ when $\gamma \ll 1$ and $p(\rho) \rightarrow + \infty$ when $\rho \rightarrow \rho^{\ast}$ where $\rho^{\ast}$ is the so-called congestion density. In this paper, we will consider
\begin{equation}
p(\rho) = \frac{1}{\left(\frac{1}{\rho} - \frac{1}{\rho^{\ast}}\right)^{\gamma}},
\end{equation}
for simplicity but any other function with similar behaviour would lead to similar results. The operators $\nabla_{\vec{x}}\cdot$ and $(\Omega\cdot\nabla_{\vec{x}})$ are defined, for a vector field $\vec{A} = (A_{1},A_{2})(\vec{x})$, by
\begin{eqnarray}
&&\nabla_{\vec{x}}\cdot \vec{A} = \partial_{x_{1}}A_{1} + \partial_{x_{2}}A_{2},\\
&&(\Omega\cdot\nabla_{\vec{x}})\vec{A} = ((\Omega_{1}\partial_{x_{1}} + \Omega_{2}\partial_{x_{1}})A_{1},(\Omega_{1}\partial_{x_{1}} + \Omega_{2}\partial_{x_{1}})A_{2})^{T}, 
\end{eqnarray} 
where $T$ denotes the transpose operator. Finally, $(\mbox{Id} - \Omega\otimes\Omega)$ is the projection matrix onto the line spanned by $\Omega^{\perp}$, where $\Omega^{\perp}$ is the vector $\Omega$ rotated by the angle $\pi/2$. Alternatively, we have, for a vector $\vec{A}$:
\begin{equation}
(\mbox{Id} - \Omega\otimes\Omega)\vec{A} = \vec{A} - (\Omega\cdot \vec{A})\vec{A},
\end{equation}
where $(\Omega\cdot \vec{A})$ is the dot product $\Omega\cdot \vec{A} = \Omega_{1}A_{1} + \Omega_{2}A_{2}$.

We show in appendix $A$ that this model well describes the behaviour of a system of particles subjected to long-range attraction and short-range repulsion in the spirit of a model proposed by Aoki \cite{1982_SimuStud_Aoki} or Couzin et al \cite{2002_Couzin_CollectMem} for modelling gregariousness and swarming. More precisely, in appendix $A$, we derive this system from such a particle system through successive changes of scales via mean-field and hydrodynamic theories. In the form (\ref{Eq:Omega}), we have dropped the force term describing long-range attraction. Indeed, this force term would add the quantity $(\mbox{Id} - \Omega\otimes\Omega)\vec{\vec{\xi}}_{a}$ at the right-hand side of (\ref{Eq:Omega}), with 
\begin{equation*}
\vec{\xi_{a}}(\vec{x},t) = \frac{\int K_{a}(|\vec{y} -\vec{x}|)(\vec{y}-\vec{x})\rho(\vec{y},t)d\vec{y}}{\int K_{a}(|\vec{y} -\vec{x}|)\rho(\vec{y},t)d\vec{y}},
\end{equation*}   
where $K_{a}$ is a bounded positive kernel. This terms does not add any differential operator and all the subsequent analysis will stay unaltered by adding this term. 

Our main concern is the study of the congestion effects brought by the singularity of $p(\rho)$ near the congestion density $\rho^{\ast}$. Indeed, a herd of animals can be viewed, at large scales, as a domain of space where the density $\rho$ is close to the saturation density $\rho^{\ast}$. Therefore, the geometrical domain occupied by the herd at time $t$ can be identified to a set $H_t = \{ x \in {\mathbb R}^2 \, | \, \rho^* - \delta \rho < \rho(x,t) < \rho^* \}$ where the parameter $\delta \rho >0$ must be suitably tuned. Therefore, with the initial model (\ref{Eq:rho}), (\ref{Eq:Omega}), the definition of a herd depends on an arbitrary parameter $\delta \rho$, which makes it ambiguous. 

A way to unambiguously define the herd is to force the system (\ref{Eq:rho})-(\ref{Eq:Omega}) to make clear-cut phase transitions from unclustered $\rho < \rho^{\ast}$ to clustered $\rho = \rho^{\ast}$ phases. In the spirit of the works \cite{2008_Traffic_DegondRascle,2008_TraficNum_D_Delitala,2008_Trafficflow_Berthelin_D} for traffic, this can be achieved in an asymptotic regime which amounts to supposing that there is merely no repulsive interactions at all as long as $\rho < \rho^{\ast}$, and that repulsive "pressure" forces turn on suddenly when $\rho$ hits the congestion density $\rho^{\ast}$. This can be done by rescaling $p(\rho)$ into $\eps p(\rho)$ where $\eps \ll 1$ is a small parameter. In this way, repulsive interactions are $O(\eps)$ as long as $\rho < \rho^{\ast}$, but become $O(1)$ when $\rho  = \rho^{\ast}$ (see fig. \ref{Fig:pressure}).

\begin{figure}
\begin{center}\null
\hfill
\psfrag{rho}{$\rho$}
\psfrag{p(rho)}{$p(\rho)$}
\subfigure[$\eps = 1$]{\includegraphics[width=0.45\textwidth]{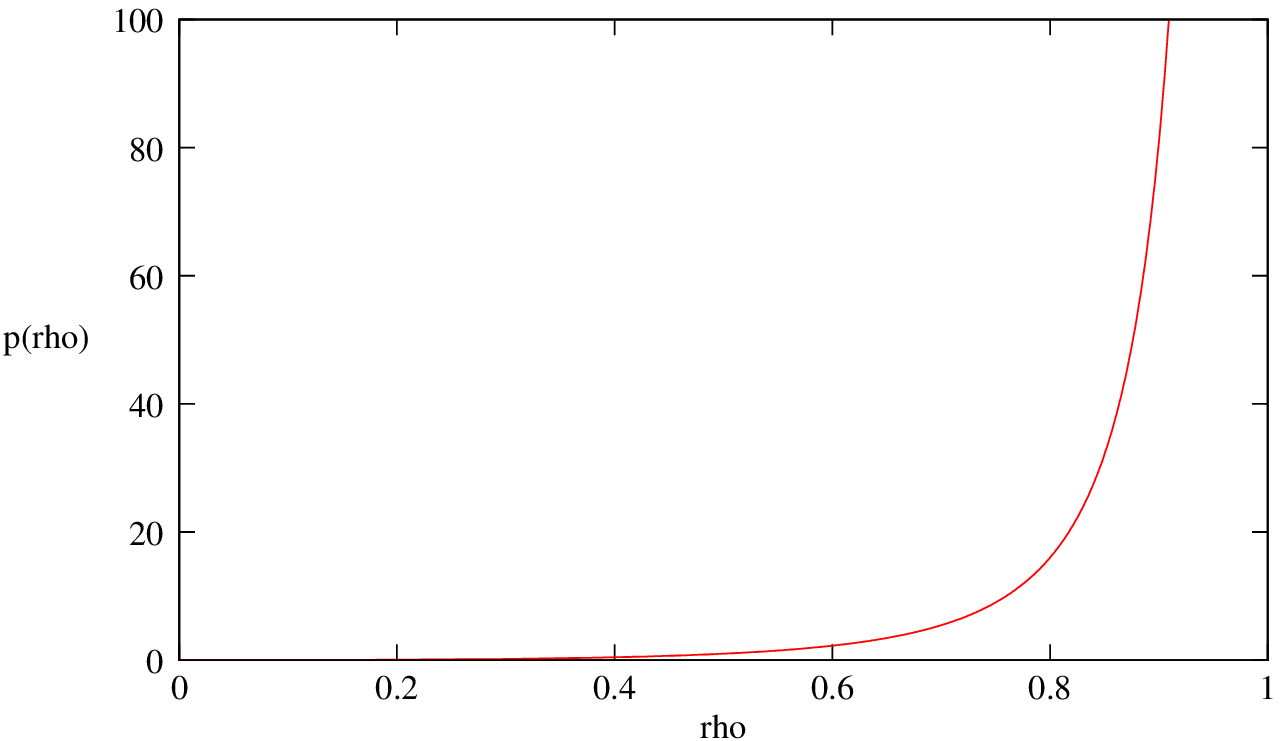}}
\hfill
\psfrag{rho}{$\rho$}
\psfrag{p(rho)}{$\eps p(\rho)$}
\subfigure[$\eps = 10^{-2}$]{\includegraphics[width=0.45\textwidth]{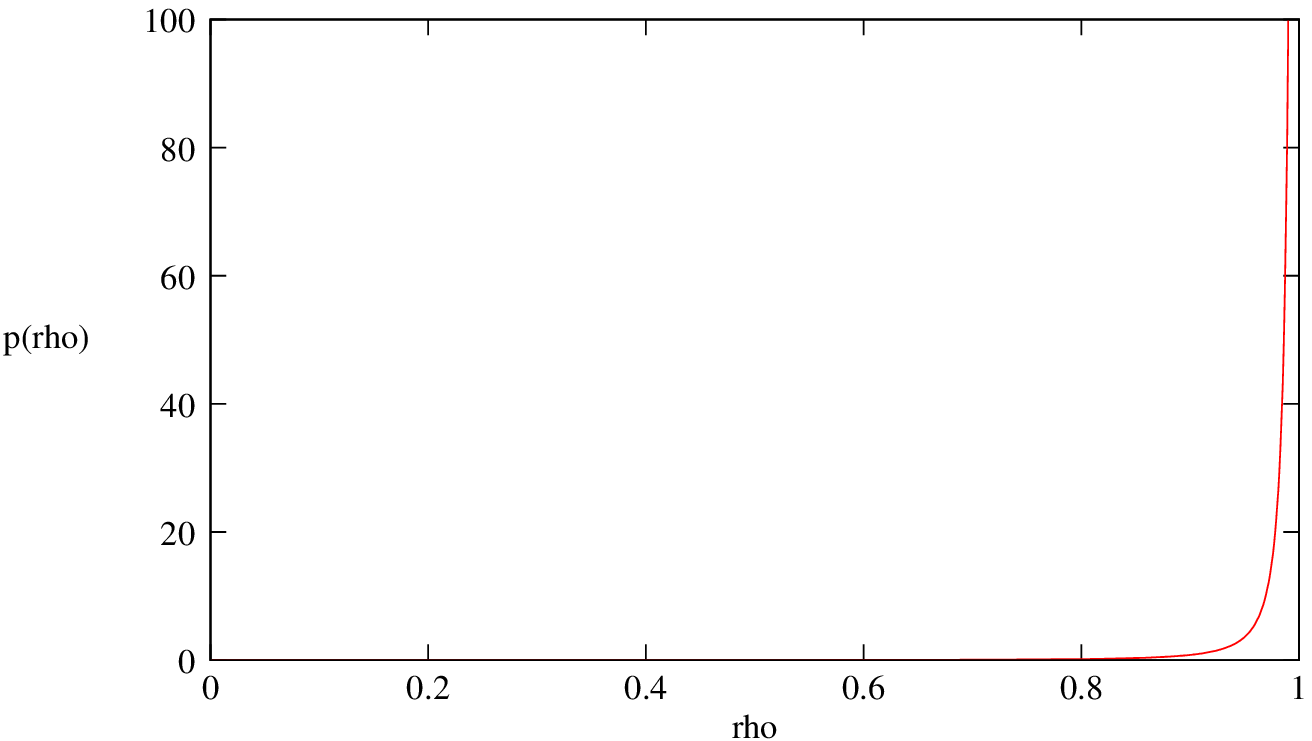}}
\hfill
\null
\caption{The "potential" for repulsive interaction $p(\rho)$ (left) and $\eps p(\rho)$ (right) after scaling by a small parameter $\eps = 10^{-2}$, with $\gamma = 2$ and $\rho^{\ast} = 1$. From the right picture, it is clear that the repulsive interaction turns on only when $\rho$ is very close to $\rho^{\ast}$}
\label{Fig:pressure}
\end{center}
\end{figure}
Biologically, this assumption amounts to saying that the animals do not change their directed motion by the presence  of their neighbours unless they touch them  and need to modify their trajectory to bypass them. The parameter $\eps \ll 1$ is related to the time scale at which this change of trajectory occurs  and is therefore supposed small. Let us also note that our model considers that all animals move with speed unity and never stop. Obviously the model will require improvements by taking into account the fact that a certain fraction of animals are steady, while foraging or resting. 

Therefore, our main concern in this paper is the study of the following perturbation problem:
\begin{eqnarray}
&&\partial_{t}\rho^{\eps} + \nabla_{\vec{x}}\cdot(\rho^{\eps}\Omega^{\eps}) = 0,\label{Eq:rho_eps}\\
&&\partial_{t}\Omega^{\eps} + (\Omega^{\eps}\cdot\nabla_{\vec{x}})\Omega^{\eps} + \eps(\mbox{Id} - \Omega^{\eps}\otimes\Omega^{\eps})\nabla_{\vec{x}} p(\rho^{\eps}) = 0,\label{Eq:Omega_eps}\\
&&|\Omega^{\eps}| = 1.\label{Eq:constraint_eps}
\end{eqnarray} 
We will be interested in the formal limit $\eps \rightarrow 0$. A rigorous theory of this type of problems is unfortunately still out of reach up to our knowledge. In the following section, we show that the limit $\eps \rightarrow 0$ leads to a phase transition between compressible and incompressible regimes.

\subsection{The singular limit $\eps \rightarrow 0$: transition between compressible and incompressible motion}

As $\eps \rightarrow 0$, $\eps p(\rho^{\eps})$ becomes significant only where the convergence $\rho^{\eps} \rightarrow \rho^{\ast}$ is fast enough. Therefore, in the limit, either $\rho^{\eps} \rightarrow \rho < \rho^{\ast}$ and $\eps p(\rho^{\eps}) \rightarrow 0$ or $\rho^{\eps} \rightarrow \rho^{\ast}$ and $\eps p(\rho^{\eps}) \rightarrow \bar{p}$ with $\bar{p}$ possibly non zero. In other words, the equation $(\rho^{\ast} - \rho)\bar{p} = 0$ holds in the limit. If additionally $\bar{p} < +\infty$, straighforward inspection shows that  
\begin{equation}
\rho^{\ast} - \rho^{\eps} = O(\eps^{\frac{1}{\gamma}}).
\label{speedconvergence}
\end{equation}
Therefore, the formal limit $\eps \rightarrow 0$ of system (\ref{Eq:rho_eps})-(\ref{Eq:Omega_eps})-(\ref{Eq:constraint_eps}) is given by the following system:
\begin{eqnarray} 
&&\partial_{t}\rho + \nabla_{\vec{x}}\cdot (\rho\Omega) = 0,\label{Eq:rho_lim}\\
&&\partial_{t}\Omega + \Omega \cdot \nabla_{\vec{x}}\Omega + (\mbox{Id} -  \Omega\otimes\Omega)\nabla_{\vec{x}}\bar{p}
= 0,\label{Eq:Omega_lim}\\
&&|\Omega | = 1\label{Eq:constraint_Omega_lim},\\
&&(\rho^{\ast} -\rho)\bar{p} = 0.\label{Eq:constraint_rho_lim}
\end{eqnarray}
In the non-congested domain $\rho < \rho^{\ast}$, the system reduces to a pressureless compressible gaz dynamics model with a speed constraint
\begin{eqnarray}
&&\partial_{t}\rho + \nabla_{\vec{x}}\cdot \rho\Omega = 0,\\
&&\partial_{t}\Omega + \Omega \cdot \nabla_{\vec{x}}\Omega
= 0,\\
&&|\Omega | = 1.
\end{eqnarray}
This system describes the behaviour of the system outside the congested region. It is a compressible system. Biologically, it describes the behaviour of dispersed animals outside the herd.  Mathematical studies of this system are outside the scope of this article and the reader can refer to \cite{PGD_Bouchut} for standard pressureless gas dynamics models (without speed constraint). We note that this system exhibits vacuum regions where $\rho = 0$ as it will be seen below. 

\subsection{Study of the congested region}

The congested part of the flow is defined as the region where the congestion constraint $\rho = \rho^{\ast}$ is reached. Biologically, it defines the domain of space occupied by the herd. Its connected components will be called "clusters". In the congested domain, system (\ref{Eq:rho_lim})-(\ref{Eq:constraint_rho_lim}) turns into an incompressible Euler model with speed constraint:
\begin{eqnarray} 
&&\nabla_{\vec{x}}\cdot\Omega = 0,\label{Eq:Incomp}\\
&&\partial_{t}\Omega + \Omega \cdot \nabla_{\vec{x}}\Omega 
+ (\mbox{Id} -  \Omega\otimes\Omega)\nabla_{\vec{x}}\bar{p} = 0.\label{Eq:Omega_lim_bis}\\
&&|\Omega | = 1,\label{Eq:constraint_Omega_lim_bis}\\
&&\rho = \rho^{\ast}\label{Eq:constraint_rho_max}, 
\end{eqnarray}

We first note that smooth incompressible vector fields of constant norm in $\R^{2}$ have a very special structure which is outlined in the following. 

\begin{proposition} Let $\Omega(x)$ be a smooth  vector field on a domain $\Theta \subseteq \R^{2}$ with values in $\S^{1}$ and which satisfies the incompressibility constraint $\nabla_{\vec{x}}\cdot\Omega = 0$. Then the integral lines of $\Omega^{\perp}$ are straight lines and $\Omega$ is constant along these lines (where $\Omega^{\perp}$ is rotated by an angle of $\pi/2$) and the integral lines of $\Omega$ are parallel curves to each other.\label{Prop:OmegCongest}
\end{proposition}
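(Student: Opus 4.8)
\emph{Proof proposal.}
The plan is to turn the two pointwise constraints into a single scalar quantity --- the signed curvature of the integral curves of $\Omega$ --- and to read off all three assertions from it. Differentiating $|\Omega|^{2}=1$ gives $\Omega\cdot\partial_{x_{j}}\Omega=0$, so $\partial_{x_{j}}\Omega$ is parallel to $\Omega^{\perp}$; write $\partial_{x_{j}}\Omega=a_{j}\,\Omega^{\perp}$ with $a_{j}$ smooth, and note that differentiating $\Omega\cdot\Omega^{\perp}=0$ and $|\Omega^{\perp}|^{2}=1$ forces $\partial_{x_{j}}\Omega^{\perp}=-a_{j}\,\Omega$. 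Putting $\vec{a}=(a_{1},a_{2})$, the incompressibility constraint reads $\nabla_{\vec{x}}\cdot\Omega=\vec{a}\cdot\Omega^{\perp}=0$, hence $\vec{a}$ is parallel to $\Omega$, say $\vec{a}=\kappa\,\Omega$ with $\kappa:=\vec{a}\cdot\Omega$. (In a local angle representation $\Omega=(\cos\theta,\sin\theta)$ one has $\vec{a}=\nabla_{\vec{x}}\theta$ and $\kappa=\Omega\cdot\nabla_{\vec{x}}\theta$; I avoid $\theta$ so as not to assume anything about the topology of $\Theta$.)

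From these relations I would record the four directional derivatives $(\Omega^{\perp}\cdot\nabla_{\vec{x}})\Omega=(\vec{a}\cdot\Omega^{\perp})\Omega^{\perp}=0$, $(\Omega^{\perp}\cdot\nabla_{\vec{x}})\Omega^{\perp}=-(\vec{a}\cdot\Omega^{\perp})\Omega=0$, $(\Omega\cdot\nabla_{\vec{x}})\Omega=\kappa\,\Omega^{\perp}$ and $(\Omega\cdot\nabla_{\vec{x}})\Omega^{\perp}=-\kappa\,\Omega$. The first identity already yields one assertion: $\Omega$ (hence $\Omega^{\perp}$) is constant along every integral curve of $\Omega^{\perp}$. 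For the second assertion, let $t\mapsto\gamma(t)$ be an integral curve of $\Omega^{\perp}$; it is traversed at unit speed and $\ddot{\gamma}=(\Omega^{\perp}\cdot\nabla_{\vec{x}})\Omega^{\perp}=0$, so $\gamma$ is a straight segment --- and since $\dot{\gamma}=\Omega^{\perp}(\gamma(t))$ is constant along $\gamma$, explicitly $\gamma(t)=\gamma(0)+t\,\Omega^{\perp}(\gamma(0))$ as long as it stays in $\Theta$. The last two identities identify $\kappa$ as the signed curvature of the integral curves of $\Omega$.

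For the parallelism assertion I would take an arc-length parametrized integral curve $s\mapsto\delta(s)$ of $\Omega$ and, for $d$ in a small interval around $0$, define the offset curve $\gamma_{d}(s):=\delta(s)+d\,\Omega^{\perp}(\delta(s))$. Differentiating and using $(\Omega\cdot\nabla_{\vec{x}})\Omega^{\perp}=-\kappa\Omega$ gives $\gamma_{d}'(s)=\bigl(1-d\,\kappa(\delta(s))\bigr)\,\Omega(\delta(s))$. Now $\gamma_{d}(s)$ lies on the straight line $\{\delta(s)+t\,\Omega^{\perp}(\delta(s))\}$, which by the previous step is exactly the integral curve of $\Omega^{\perp}$ through $\delta(s)$; since $\Omega$ is constant along it, $\Omega(\delta(s))=\Omega(\gamma_{d}(s))$, and therefore $\gamma_{d}'(s)=\bigl(1-d\,\kappa(\delta(s))\bigr)\,\Omega(\gamma_{d}(s))$. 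Hence, wherever $\gamma_{d}(s)\in\Theta$ and $d\,\kappa(\delta(s))\neq 1$, the curve $\gamma_{d}$ is regular and everywhere tangent to $\Omega$, i.e. a reparametrization of an integral curve of $\Omega$. As $d$ varies, the curves $\gamma_{d}$ are obtained from one another by moving a fixed distance $|d|$ along the unit normal field $\Omega^{\perp}$, which is precisely the statement that the integral lines of $\Omega$ are parallel curves, the straight lines above being their common normals.

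The only delicate point is the bookkeeping in the last step: one must keep track of whether $\Omega$ and $\kappa$ are evaluated at $\delta(s)$ or at the displaced point $\gamma_{d}(s)$ --- which is settled by the ``constant along $\Omega^{\perp}$-lines'' property --- and one must state the domain of validity honestly, since the offset curve degenerates where $d\kappa=1$ (the focal distance) and may leave $\Theta$. Everything else is elementary; I would isolate the four directional-derivative identities as a short lemma, since they are reused in the analysis of the congested region.
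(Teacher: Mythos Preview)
Your argument is correct, but it is considerably more elaborate than the paper's one-line sketch. The paper simply introduces a local angle $\theta$ with $\Omega=(\cos\theta,\sin\theta)$ and observes that the divergence-free condition $-\sin\theta\,\partial_{x_{1}}\theta+\cos\theta\,\partial_{x_{2}}\theta=0$ is nothing but $(\Omega^{\perp}\cdot\nabla_{\vec{x}})\theta=0$; from this it reads off that $\theta$ (hence $\Omega$ and $\Omega^{\perp}$) is constant along the integral curves of $\Omega^{\perp}$, so those curves are straight, and leaves the parallel-curve statement as an immediate geometric consequence.

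Your route is genuinely different in two respects. First, you work coordinate-free via the Frenet-type relations $\partial_{x_{j}}\Omega=a_{j}\Omega^{\perp}$, $\partial_{x_{j}}\Omega^{\perp}=-a_{j}\Omega$ and the identification $\vec{a}=\kappa\,\Omega$; this avoids any appeal to a globally defined angle $\theta$, which is a real (if minor) gain on domains that are not simply connected. Second, you actually carry out the parallelism assertion by constructing the offset curves $\gamma_{d}$ and verifying that they reparametrize integral curves of $\Omega$, with an honest caveat about the focal locus $d\kappa=1$. The paper's approach is shorter and perfectly adequate for its purposes, since everything here is ultimately a local statement and a local $\theta$ always exists; your approach is more self-contained and your packaging of the four directional-derivative identities as a lemma is indeed natural, as the paper reuses exactly this structure when deriving the elliptic equation for $\bar{p}$.
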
 
The proof of this proposition simply results from introducing the angle $\theta$ so that $\Omega(\vec{x},t) = (\cos(\theta(\vec{x},t)),\sin(\theta(\vec{x},t)))$ and noting that $\theta$ satisfies the "transport equation"
\begin{equation*}
\partial_{x_{2}}\theta - (\tan\theta)\partial_{x_{1}}\theta= 0.
\end{equation*}
This property implies that the knowledge of $\Omega$ on the cluster boundaries suffices to know $\Omega$ everywhere inside the clusters. 

The integral curves of $\Omega$ provide a mathematical description of the animal files in the herd. These curves being parallel to each other, they are consistant with the intuition and the observation of animal files in a herd (see fig. \ref{Fig:herd}).

\begin{figure}
\begin{center}
\null
\hfill
\psfrag{rhoast}{$\rho = \rho^{\ast}$}
\psfrag{rho<rhoast}{$\rho < \rho^{\ast}$}
\subfigure{\includegraphics[width=0.45\textwidth]{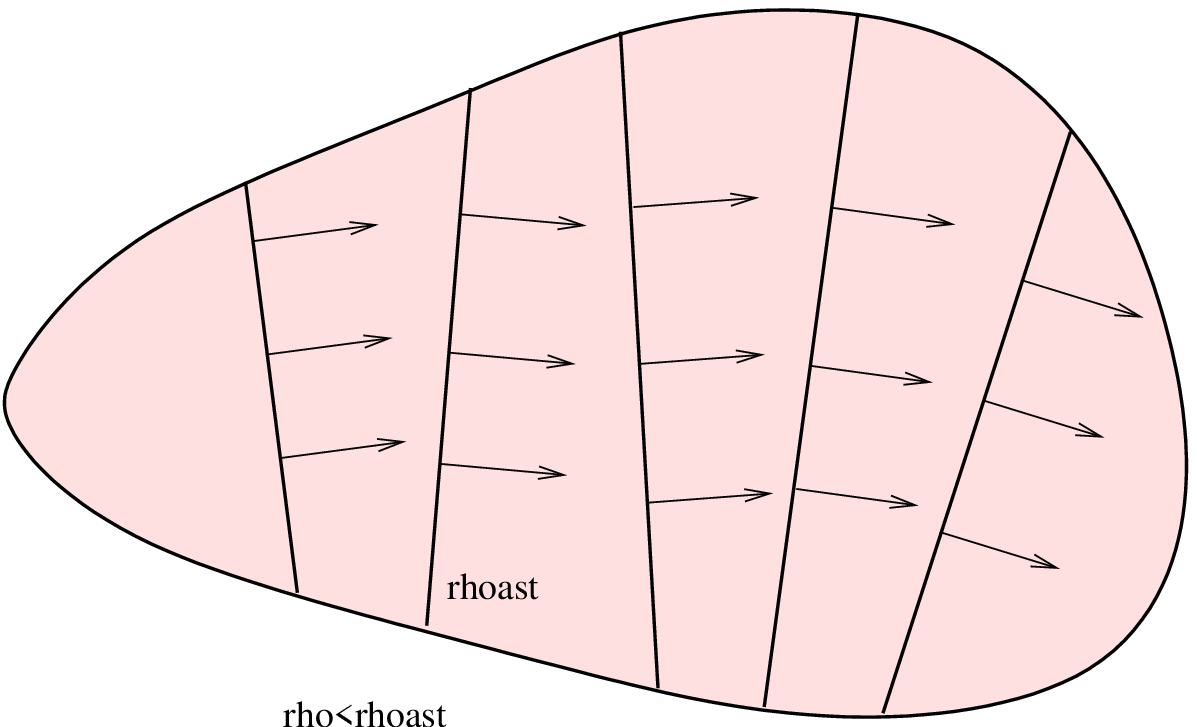}}
\hfill
\psfrag{n}{\textcolor{blue}{$n$}}
\psfrag{Ol}{$\Omega_{\ell}$}
\psfrag{Or}{$\Omega_{r}$}
\psfrag{rhoast}{$\rho_{r} = \rho^{\ast}$}
\psfrag{rho<rhoast}{$\rho_{\ell} < \rho^{\ast}$}
\psfrag{thl}{$\theta_{\ell}$}
\psfrag{thr}{$\theta_{r}$}
\psfrag{x2}{\footnotesize{$x_{2}$}}
\psfrag{x1}{\footnotesize{$x_{1}$}}
\subfigure{\includegraphics[width=0.45\textwidth]{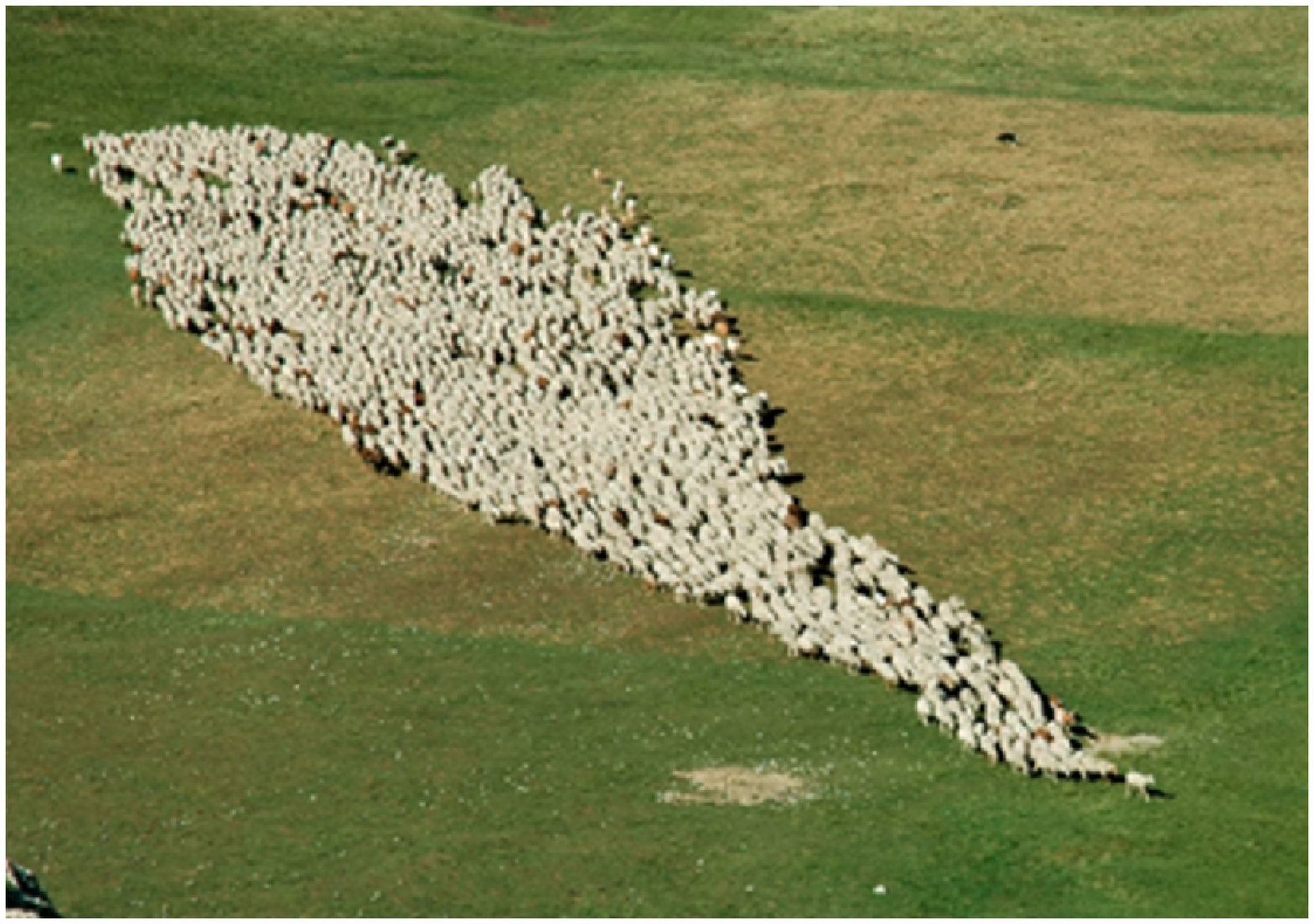}}
\hfill
\null
\caption{Left: schematic figure of a congested zone, where the arrows design the vectors $\Omega$. Right: picture of a sheep herd.}
\label{Fig:herd}
\end{center}
\end{figure}

The pressure $\bar{p}$ satisfies an elliptic equation. Indeed, by taking the divergence of the equation (\ref{Eq:Omega_lim_bis}) and after easy computations, we get
\begin{equation}
\nabla_{\vec{x}}\cdot\left(\left(\mbox{Id} -  \Omega\otimes\Omega\right)\nabla_{\vec{x}}\bar{p}\right) = \text{Tr}((\nabla_{\vec{x}}\Omega)(\nabla_{\vec{x}}\Omega)^{T}),
\label{Eq:pressure}
\end{equation}
where Tr is the trace of a matrix and the exponent $T$ denotes the transpose operator. This equation can be equivalently written:
\begin{equation}
- (\Omega^{\perp}\cdot\nabla_{\vec{x}})^{2}\bar{p} - (\nabla_{\vec{x}}\cdot\Omega^{\perp})(\Omega^{\perp}\cdot\nabla_{\vec{x}})\bar{p} = -\text{Tr}((\nabla_{\vec{x}}\Omega)(\nabla_{\vec{x}}\Omega)^{T}),
\label{Eq:pressure_bis}
\end{equation}
and only involves the operator $(\Omega^{\perp}\cdot\nabla_{\vec{x}})$  applied to $\bar{p}$. Since the integral lines of $\Omega^{\perp}$ are straight lines, equation (\ref{Eq:pressure_bis}) is just a one-dimensional elliptic problem for $\bar{p}$ posed on this straight line. Knowing the boundary values of $\bar{p}$ where this straight line meets the boundary of the cluster allows to compute $\bar{p}$ everywhere on this lines and consequently inside the cluster (see fig. \ref{Fig:herd}). Hence, once $\Omega$ is known inside the cluster, the resolution of this equation only requires the knowledge of the boundary conditions for $\bar{p}$ at the boundaries of the cluster. 

To close the system, i.e. to determine how the solution in the congested domain evolves, we need to determine these boundary conditions. They are not given by the formal limit and, in order to determine them, we need to explore another route. For this pupose we look at the solutions of the Riemann problem for the perturbed and limit systems. Note that if we abandon the constraint of constant norm $|\Omega | = 1$, the non conservative term $(\Omega\otimes\Omega)\nabla_{\vec{x}}\bar{p}$ in the momentum conservation equation (\ref{Eq:Omega_lim}) drops out, and we recover a conservative model expressing mass and momentum equation. Then, the Rankine-Hugoniot conditions across the boundary between the compressible and incompressible regions provide the boundary conditions for the pressure at the cluster boundary. The constant norm constraint prevents from using this strategy. Therefore, we need to find a different route to specify these boundary conditions. 

\subsection{Conditions at the boundary of the clusters}

To find the boundary conditions on the cluster boundaries, we need to extract more information from the perturbation system (\ref{Eq:rho_eps})-(\ref{Eq:constraint_eps}) than the mere limit system (\ref{Eq:rho_lim})-(\ref{Eq:constraint_rho_lim}). As such, this system is underdetermined. The strategy is to extract such information by passing to the limit $\eps \rightarrow 0$ in some special solutions of this system. To underline the difficulty resulting from the non-conservativity, let us first look at the Rankine-Hugoniot conditions. We have the following proposition.  
\begin{proposition} 
\begin{enumerate}
	\item If $\rho$ and $\Omega$ are smooth on both sides of a dicontinuity line $\Gamma$, then we have 
	\begin{equation}
	   \left[\rho(\Omega\cdot \vec{n} - \sigma)\right]_{\Gamma} = 0,
	\end{equation}
	where $\vec{n}$ is the normal to $\Gamma$ and $\sigma$ is the speed of $\Gamma$.
	\item If $\Omega$ is smooth (i.e. $\mathcal{C}^{1}$) across $\Gamma$ and $\rho$ is smooth on both sides of $\Gamma$, then we have 
	\begin{equation}
	   \left[ \bar{p}\right]_{\Gamma}(\Omega\cdot \vec{n}^{\bot}) = 0,
	\end{equation}
	where $\vec{n}^{\bot}$ is a unit vector tangent to $\Gamma$.
\end{enumerate}
\label{Prop:RH_partiel}
\end{proposition}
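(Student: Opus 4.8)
\emph{Part 1.} This is the classical Rankine--Hugoniot relation for the mass equation (\ref{Eq:rho_lim}), which is a genuine conservation law with conserved density $\rho$ and flux $\rho\Omega$; any acceptable notion of weak solution of the limit system must at least make (\ref{Eq:rho_lim}) hold in the distributional sense. I would take a scalar test function $\varphi$ supported near a point of $\Gamma$, write $\int(\rho\,\partial_t\varphi+\rho\Omega\cdot\nabla_{\vec x}\varphi)\,d\vec x\,dt=0$, split the integral over the two regions separated by $\Gamma$, integrate by parts on each side using that $\rho,\Omega$ are smooth and solve (\ref{Eq:rho_lim}) classically there, and collect the boundary term on $\Gamma$. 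Parametrising $\Gamma$ by its normal $\vec n$ and normal speed $\sigma$, that term is $\int_\Gamma [\rho(\Omega\cdot\vec n-\sigma)]_\Gamma\,\varphi\,d\Gamma$, whose vanishing for all $\varphi$ gives the asserted jump relation.

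\emph{Part 2.} The new point is the non-conservative product $(\mbox{Id}-\Omega\otimes\Omega)\nabla_{\vec x}\bar p$ in (\ref{Eq:Omega_lim}), and the hypothesis $\Omega\in\mathcal C^{1}$ across $\Gamma$ is precisely what makes it harmless. Under that hypothesis $\partial_t\Omega$ and $(\Omega\cdot\nabla_{\vec x})\Omega$ are bounded, hence carry no singular part on $\Gamma$; on the other hand $\bar p$ has one-sided traces on $\Gamma$ (it is $0$ on the uncongested side and solves the elliptic equation (\ref{Eq:pressure_bis}) on the congested side), so $\nabla_{\vec x}\bar p$ is a measure whose singular part is $[\bar p]_\Gamma\,\vec n\,\delta_\Gamma$, and multiplying it by the \emph{continuous} matrix $(\mbox{Id}-\Omega\otimes\Omega)$ is unambiguous. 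Reading (\ref{Eq:Omega_lim}) in the sense of distributions and matching singular parts on $\Gamma$ forces
\[
(\mbox{Id}-\Omega\otimes\Omega)\,\vec n\;[\bar p]_\Gamma=0\qquad\text{on }\Gamma .
\]
It then remains to compute $(\mbox{Id}-\Omega\otimes\Omega)\vec n=\vec n-(\Omega\cdot\vec n)\Omega$: decomposing $\vec n$ in the orthonormal frame $(\Omega,\Omega^{\perp})$ gives $(\mbox{Id}-\Omega\otimes\Omega)\vec n=(\vec n\cdot\Omega^{\perp})\,\Omega^{\perp}$, and since a simultaneous rotation by $\pi/2$ preserves the dot product one has $\vec n\cdot\Omega^{\perp}=-\,\Omega\cdot\vec n^{\bot}$. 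Therefore the condition collapses to $[\bar p]_\Gamma\,(\Omega\cdot\vec n^{\bot})=0$, which is the claim. Alternatively, to avoid mentioning singular parts one can test (\ref{Eq:Omega_lim}) against a vector test function near $\Gamma$, use the symmetry of $\mbox{Id}-\Omega\otimes\Omega$ to transfer it onto the test function, integrate the $\nabla_{\vec x}\bar p$ term by parts twice (once globally, once region by region, the $\mathcal C^{1}$ regularity of $\Omega$ killing the contribution of $\partial_t\Omega+(\Omega\cdot\nabla_{\vec x})\Omega$), and read off the boundary term $\int_\Gamma[\bar p]_\Gamma\,\vec n\cdot(\mbox{Id}-\Omega\otimes\Omega)\varphi\,d\Gamma$.

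\emph{Main obstacle.} There is no analytic estimate to fight here; the only genuinely delicate point is conceptual, namely fixing the notion of (weak) solution for this non-conservative system so that the above manipulations are meaningful. Concretely one must make sure that $\bar p$ has one-sided traces on $\Gamma$ and specify in what sense (\ref{Eq:Omega_lim}) is required to hold across a discontinuity; the $\mathcal C^{1}$ assumption on $\Omega$ in Part 2 is exactly the minimal regularity under which the non-conservative product is defined and the argument closes. Everything beyond that is the short linear-algebra identity above and the routine integration by parts of Part 1.
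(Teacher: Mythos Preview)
The paper explicitly omits the proof of this proposition (``The proof of this proposition is omitted''), so there is no original argument to compare against. Your proposal is correct and is the natural approach the authors presumably had in mind: Part~1 is the standard Rankine--Hugoniot derivation for the genuine conservation law~(\ref{Eq:rho_lim}), and Part~2 correctly exploits the $\mathcal C^{1}$ regularity of $\Omega$ to give meaning to the non-conservative product, isolates the singular part $[\bar p]_\Gamma\,\vec n\,\delta_\Gamma$ of $\nabla_{\vec x}\bar p$, and then reduces the vector condition $(\mbox{Id}-\Omega\otimes\Omega)\vec n\,[\bar p]_\Gamma=0$ to the scalar one via the identity $(\mbox{Id}-\Omega\otimes\Omega)\vec n=(\vec n\cdot\Omega^{\perp})\Omega^{\perp}$ together with $\vec n\cdot\Omega^{\perp}=-\Omega\cdot\vec n^{\bot}$. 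Your closing remark that the only genuine subtlety is fixing the notion of weak solution so that $\bar p$ has one-sided traces is apt and matches the spirit of the paper, which treats this proposition as routine.
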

The proof of this proposition is omitted. The second relation provides us information when the mean velocity is not tangent to the cluster. In this condition, if the mean velocity is continuous, the pressure is also continuous. This implies that the pressure is zero on a cluster boundary if the mean velocity is continuous. This fact will be supported by the forthcoming analysis. However, as regards the interface dynamics, such an analysis is incomplete because the second equation supposes that $\Omega$ is continuous. 

\begin{figure}
\begin{center}
\null
\hfill
\psfrag{n}{\textcolor{blue}{$n$}}
\psfrag{Ol}{$\Omega_{\ell}$}
\psfrag{Or}{$\Omega_{r}$}
\psfrag{rhoast}{$\rho_{r} = \rho^{\ast}$}
\psfrag{rho<rhoast}{$\rho_{\ell} < \rho^{\ast}$}
\psfrag{thl}{$\theta_{\ell}$}
\psfrag{thr}{$\theta_{r}$}
\psfrag{x2}{\footnotesize{$x_{2}$}}
\psfrag{x1}{\footnotesize{$x_{1}$}}
\subfigure{\includegraphics[scale=0.55]{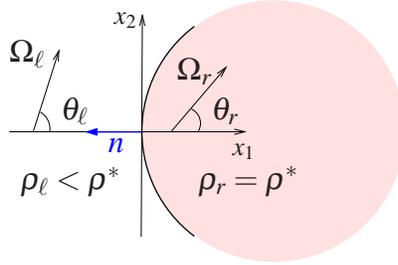}}
\hfill
\null
\caption{Notations at the interface.}
\label{Fig:notation_interface}
\end{center}
\end{figure}

So as to capture the correct boundary conditions for the pressure $\bar{p}$ and the velocity $\Omega$ at a cluster boundary where $\rho$ and $\Omega$ may be discontinuous, we consider a one dimensional problem  in the normal direction $n$ to this boundary (cf. figure \ref{Fig:notation_interface}). In order to justify this simplication, we introduce the coordinate system $(x_{1},x_{2})$ in the normal and tangent direction to the boundary. The angle $\theta$ is defined so that $\Omega(\vec{x},t) = (\cos(\theta(\vec{x},t)),\sin(\theta(\vec{x},t)))$ in this basis. System (\ref{Eq:rho_eps})-(\ref{Eq:Omega_eps}) then becomes (the index $\eps$ is omitted):
\begin{eqnarray}
&&\hspace{-1.2cm}\partial_{t}\rho + \partial_{x_{1}}(\rho\cos\theta) + \partial_{x_{2}}(\rho\sin\theta)  = 0,\label{Eq:rho_eps_theta}\\
&&\hspace{-1.2cm}\left[\partial_{t}\theta + (\cos\theta\partial_{x_{1}}\theta + \sin\theta\partial_{x_{2}}\theta)
+ \left(-\sin\theta\partial_{x_{1}}\eps p(\rho) + \cos\theta\partial_{x_{2}} \eps p(\rho)\right)\right]\left(\begin{array}{c} -\sin\theta\\\cos\theta \end{array}\right) = 0.\label{Eq:theta_eps_theta}
\end{eqnarray}
We suppose that all quantities have locally smooth variations in the direction tangent to the boundary and we focus on the possible sharp variations or discontinuities in the normal direction. To analyze this situation, we perform a coordinate dilation in the $x_{1}$ direction and in time: $x_{1}' = \delta x_{1}$, $x_{2}' = \delta x_{2}$, $t' = \delta t$, with $\delta \ll 1$. In these new variables, all $x_{1}$ and $t$ derivatives are multiplied by $1/\delta$. Letting $\delta \rightarrow 0$, we are led to the following one-dimensional system with $x_{1} = x$:
\begin{eqnarray}
&&\partial_{t}\rho + \partial_{x}(\rho\cos\theta) = 0,\label{Eq:rho_1d}\\
&&\partial_{t}\theta + \cos\theta\partial_{x}\theta + \eps\sin^{2}\theta\partial_{x}\eps p(\rho)  = 0.\label{Eq:theta_1d}
\end{eqnarray}
Hyperbolic systems like (\ref{Eq:rho_1d})-(\ref{Eq:theta_1d}) have analytical solutions which are those of the Riemann problem. These solutions are associated to initial conditions which consist of a discontinuity between two constant states. We will construct the solutions of the Riemann problem for system (\ref{Eq:rho_1d})-(\ref{Eq:theta_1d}) and analyze their limits as $\eps \rightarrow 0$. This analysis will give rise to jump conditions at the cluster boundaries for these solutions. We will then postulate that these jump conditions are generic and valid for all solutions of the limit problem (\ref{Eq:rho_lim})-(\ref{Eq:constraint_rho_lim}).

As underlined above, the non-conservative form of system (\ref{Eq:rho_1d})-(\ref{Eq:theta_1d}) induces a lack of information about the jump conditions across a boundary. In order to waive the ambiguity, we have to make further assumptions. One of them is to consider the following conservative system as a way to select discontinuities
\begin{eqnarray}
&&\partial_{t}\rho + \partial_{x}(\rho\cos\theta) = 0,\label{Eq:rho_1D_cons}\\
&&\partial_{t}\Psi(\cos(\theta)) + \partial_{x}(\Phi(\cos\theta) + \eps p(\rho)) = 0.\label{Eq:theta_1D_cons}
\end{eqnarray}
where $\Psi(\cos\theta) = - \ln|\tan(\theta/2)|$ and $\Phi(\cos\theta) = - \ln|\sin\theta|$. It is the simplest conservation form that system (\ref{Eq:rho_1d})-(\ref{Eq:theta_1d}) can take. It is obtained by dividing (\ref{Eq:theta_1d}) by $\sin^{2}\theta$. The functions $\Psi$ and $\Phi$ satisfy:
\begin{equation} 
\frac{d}{d\theta}(\Psi(\cos\theta)) = \frac{1}{\sin^{2}\theta},\quad \frac{d}{d\theta}(\Phi(\cos\theta)) = \frac{\cos\theta}{\sin^{2}\theta}.
\end{equation} 
Other conservative forms of (\ref{Eq:rho_1d})-(\ref{Eq:theta_1d}) do exist (see appendix \ref{Appendix:conservative_laws}) but we consider this form because it is the simplest. Note that this conservative form is not equivalent to the original form (\ref{Eq:rho_1d})-(\ref{Eq:theta_1d}) because $\Psi(\cos\theta)$ is an even function of $\theta$. Hence it does not provide information on the sign of $\theta$. However, this information will easily be recovered at the end. We remind that, if all conservative forms are equivalent for smooth solutions, they differ for weak solutions. Therefore, the choice of a particular conservative form must be made on physical considerations. Such physical considerations are not available here. In front of this lack of information, the choice of the simplest of these conservative forms seems to be the most natural one. 

Classical hyperbolic system theory will enable us to solve the Riemann problem for (\ref{Eq:rho_1D_cons})-(\ref{Eq:theta_1D_cons}) and to take the limit $\eps \rightarrow 0$ of these solutions. The limit solutions will satisfy some jump relations which we will assume generic of all solutions of the limit problem (\ref{Eq:rho_lim})-(\ref{Eq:constraint_rho_lim}). We now present the result of this analysis for such generic solutions.  We call "unclustered" region (UC) the domains where $0<\rho<\rho^{\ast}$, by contrast to vacuum (V) where $\rho = 0$ or clusters (C) where $\rho = \rho^{\ast}$. 
\begin{Formal Statement}\label{Formal_Statement}
The boundary conditions at cluster boundaries or vacuum boundaries of system (\ref{Eq:rho_lim})-(\ref{Eq:constraint_rho_lim}) are as follows:
\begin{itemize} 
\item {\bf Interface (C)-(UC).} The pressure jump is given by
\begin{equation}
[\bar{p}] = \frac{\left[\Psi(\cos\theta)\right]\left[\rho\cos\theta\right]}{\left[\rho\right]}  - \left[\Phi(\cos\theta)\right]
\label{Eq:FS_CUC_pressure}
\end{equation}
 and the shock speed is given by the Rankine-Hugoniot relation 
\begin{equation}
\sigma = [\rho\cos(\theta)]/[\rho],
\label{Eq:FS_CUC_speed}
\end{equation}
where the angle brackets denote the jumps across the interface. We note that $p_{UC} = 0$ and that specifying $[\bar{p}]$ actually specifies the boundary value of $\bar{p}$ at the cluster boundary.
\item {\bf Interface (UC)-(V).} The interface speed $\sigma$ is equal to the fluid normal speed $\sigma = \cos\theta = \Omega\cdot \vec{n}$ at the boundary of the (UC)  region
\begin{equation}
\sigma = (\cos\theta)_{UC},
\label{Eq:FS_UCV_speed}
\end{equation}
and the pressure $\bar{p}$ is identically zero. 
\item {\bf Interface (C)-(V).} The interface speed is equal to the normal speed $\cos\theta = \Omega\cdot \vec{n}$ at the cluster boundary  and the boundary value of $\bar{p}$ is zero
 \begin{equation}
\sigma = (\cos\theta)_{C},\quad \bar{p}_{C} = 0.
\label{Eq:FS_CV_speed_pressure}
\end{equation}
\item {\bf Interface (UC)-(UC).} This is a contact discontinuity between two regions of different $\rho$. The normal velocity  is continuous and equal to the speed of the discontinuity 
\begin{equation}
[\cos\theta] = 0,\quad \sigma = \cos\theta,
\label{Eq:FS_UCUC_speed}
\end{equation}
and the pressure is identically zero.
\end{itemize}
\end{Formal Statement}
We note that all these statements are consistent with proposition \ref{Prop:RH_partiel}. Section \ref{Chap:Riemann_Pb} provides the  detailed analysis which leads to these relations. The dynamics of the interface between two clusters (C)-(C) does not follow from the analysis of the Riemann problem. We provide a separate analysis of it by introducing the so-called cluster dynamics.       

We note that (C)-(UC) interfaces or contact discontinuities (UC)-(UC) may incorporate a flip of the sign of $\theta$ in the velocity jump. This has no influence on the boundary values of $\bar{p}$ at the cluster boundary which is the quantity we wish to determine by this analysis. The Formal Statement \ref{Formal_Statement} is illustrated in figure \ref{Fig:Formal_Statement}.
\begin{figure}
\begin{center}
\null
\hfill
\psfrag{rho}{$\rho$}
\psfrag{x}{$x$}
\psfrag{rhoast}{$\rho^{\ast}$}
\psfrag{thetal}{$\theta_{\textrm{\tiny  C}}$}
\psfrag{thetar}{$\theta_{\textrm{\tiny  UC}}$}
\psfrag{pl}{$\bar{p}_{\textrm{\tiny  C}}$}
\psfrag{pr}{$\bar{p}_{\textrm{\tiny  UC}} = 0$}
\psfrag{s}{$s$}
\subfigure[Interface (C)-(UC). $\bar{p}_{\textrm{\tiny  C}}$ and $s$ are given by (\ref{Eq:FS_CUC_pressure})-(\ref{Eq:FS_CUC_speed}).]{\includegraphics[width=0.45\textwidth]{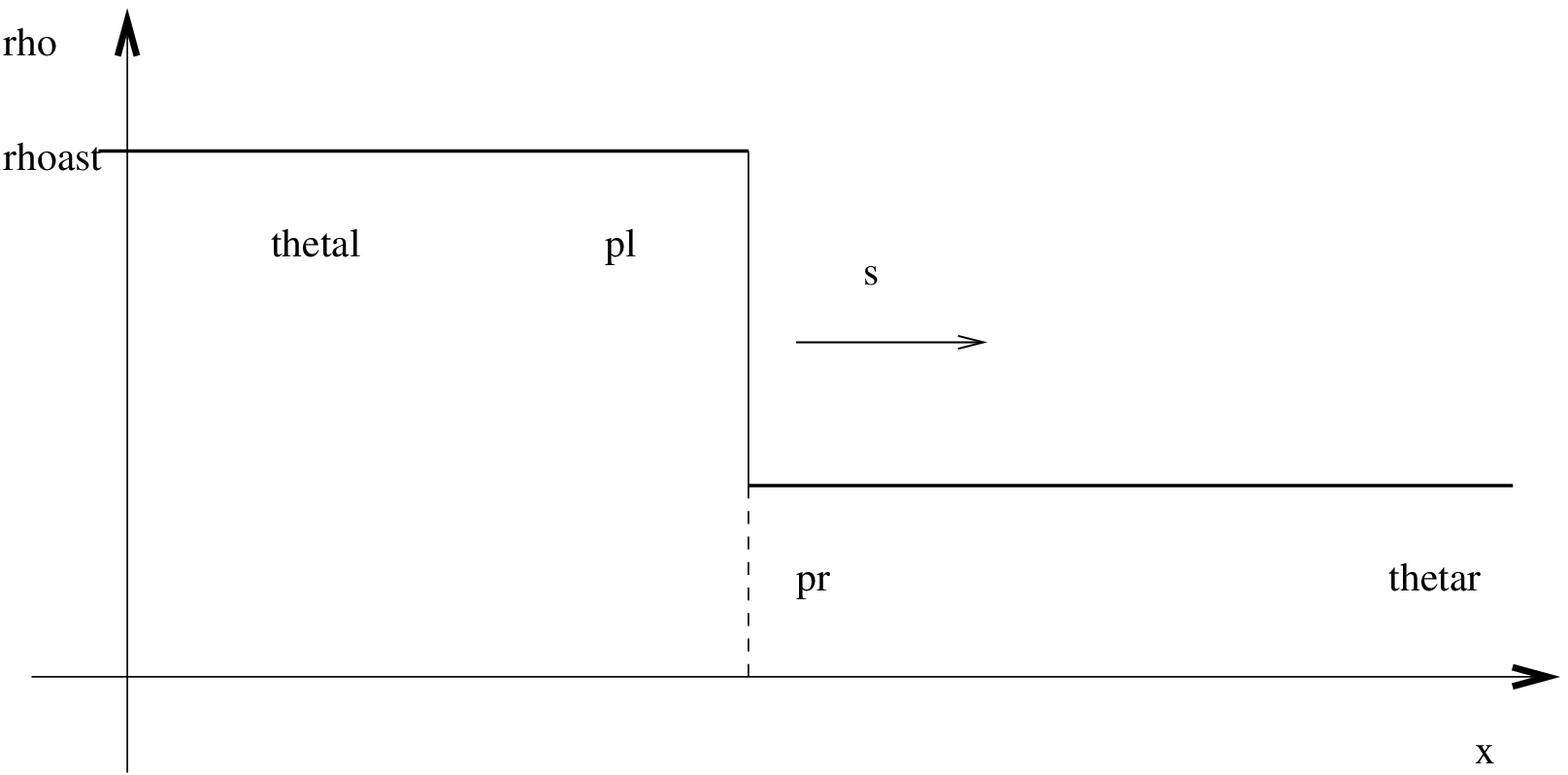}}
\hfill
\psfrag{rho}{$\rho$}
\psfrag{x}{$x$}
\psfrag{rhoast}{$\rho^{\ast}$}
\psfrag{thetal}{$\theta_{\textrm{\tiny  UC}}$}
\psfrag{rhol}{$\rho_{\textrm{\tiny  UC}}$}
\psfrag{pl}{$\bar{p}_{\textrm{\tiny  UC}} = 0$}
\psfrag{s}{$\cos\theta_{\textrm{\tiny  UC}}$}
\subfigure[Interface (UC)-(V)]{\includegraphics[width=0.45\textwidth]{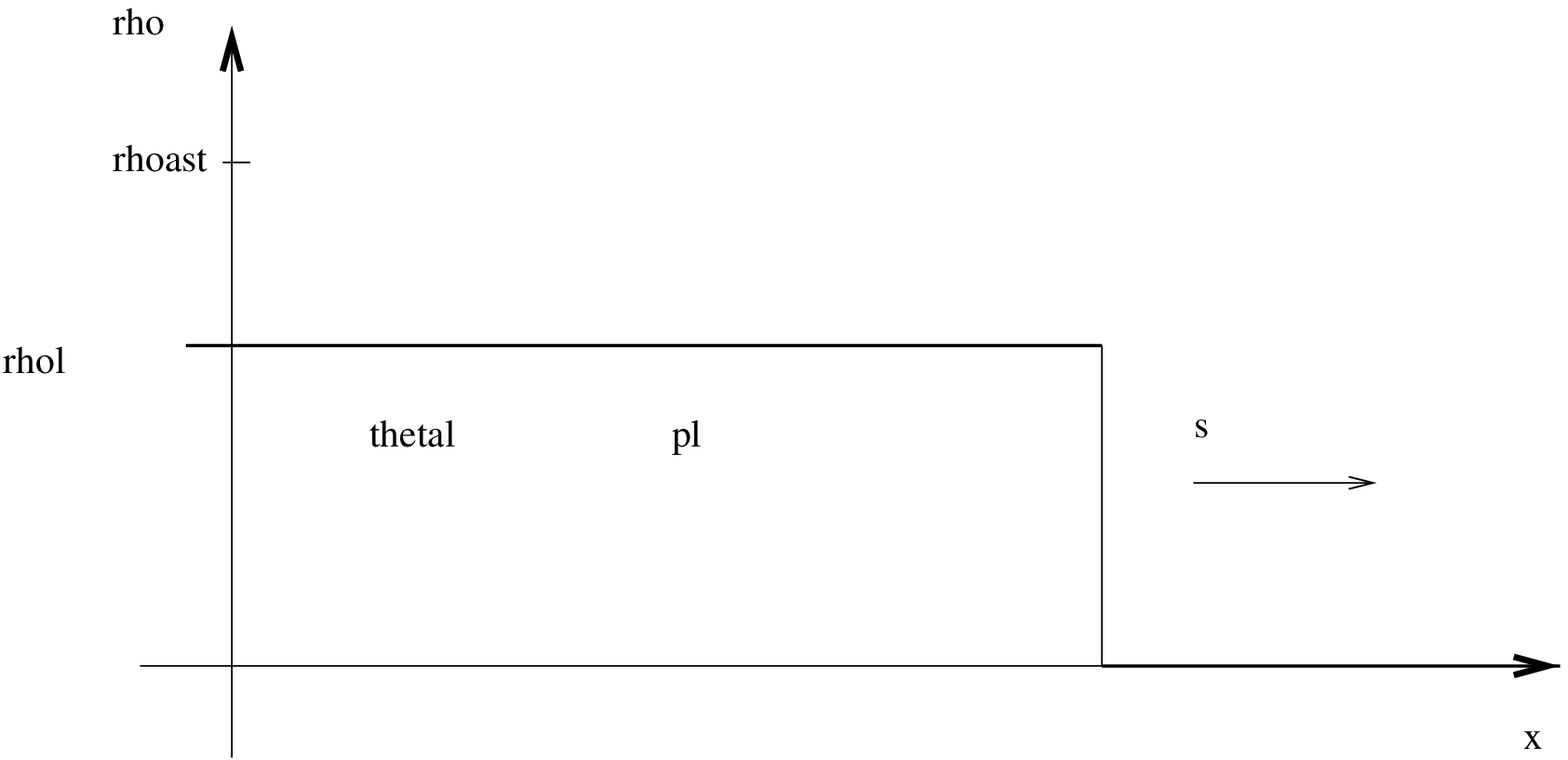}}
\hfill
\null

\null
\hfill
\psfrag{rho}{$\rho$}
\psfrag{x}{$x$}
\psfrag{rhoast}{$\rho^{\ast}$}
\psfrag{thetal}{$\theta_{\textrm{\tiny  C}}$}
\psfrag{pl}{$\bar{p}_{\textrm{\tiny  C}} = 0$}
\psfrag{s}{$\cos\theta_{\textrm{\tiny  C}}$}
\subfigure[Interface (C)-(V)]{\includegraphics[width=0.45\textwidth]{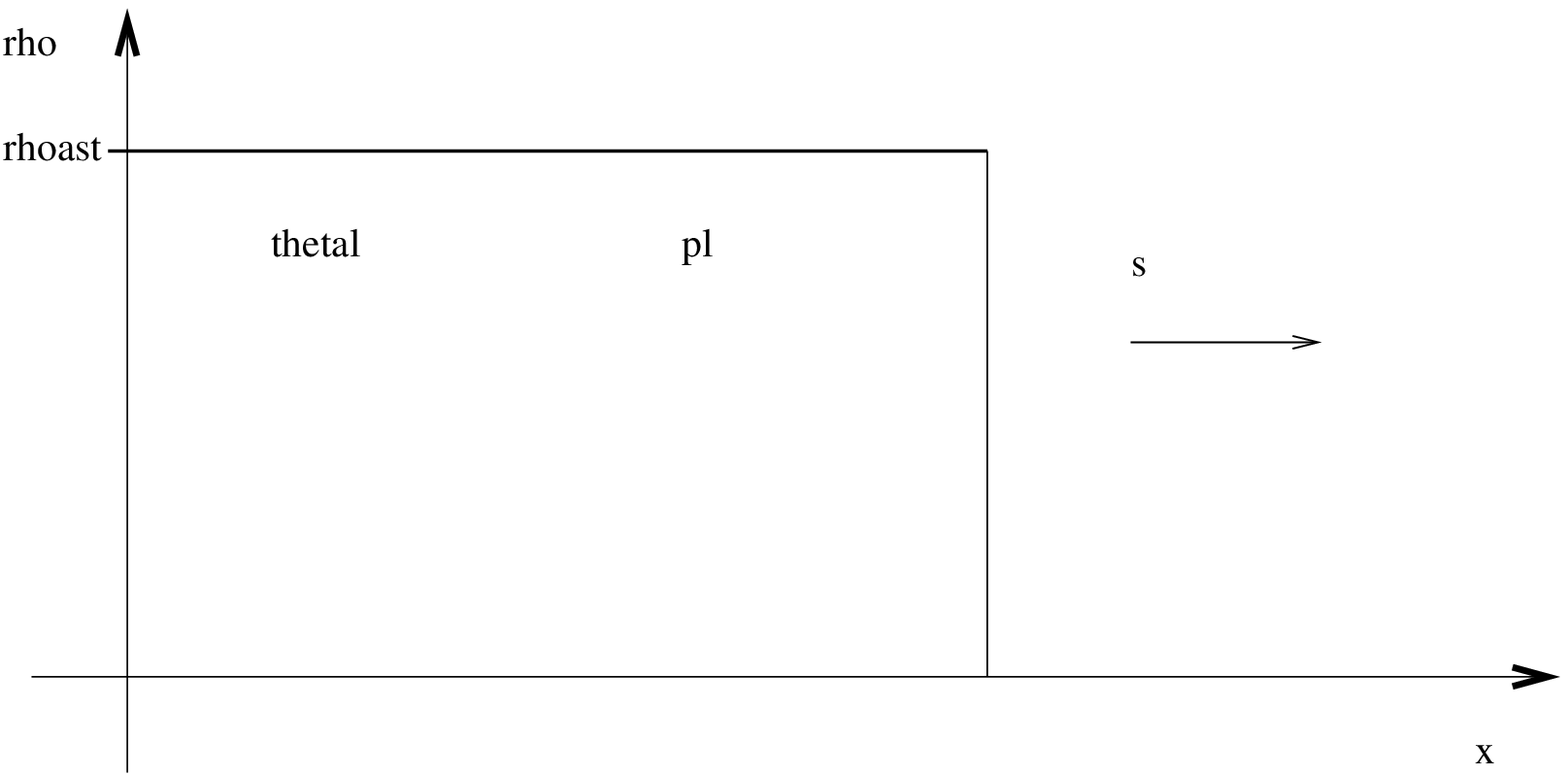}}
\hfill
\psfrag{rho}{$\rho$}
\psfrag{x}{$x$}
\psfrag{rhoast}{$\rho^{\ast}$}
\psfrag{thetal}{$\theta_{\textrm{\tiny  UC}}$}
\psfrag{thetar}{$\theta_{\textrm{\tiny  UC}}$}
\psfrag{pl}{$\bar{p}_{\textrm{\tiny  UC}} = 0$}
\psfrag{pr}{$\bar{p}_{\textrm{\tiny  UC}} = 0$}
\psfrag{s}{$\cos\theta_{\textrm{\tiny  UC}}$}
\subfigure[Interface (UC)-(UC)]{\includegraphics[width=0.45\textwidth]{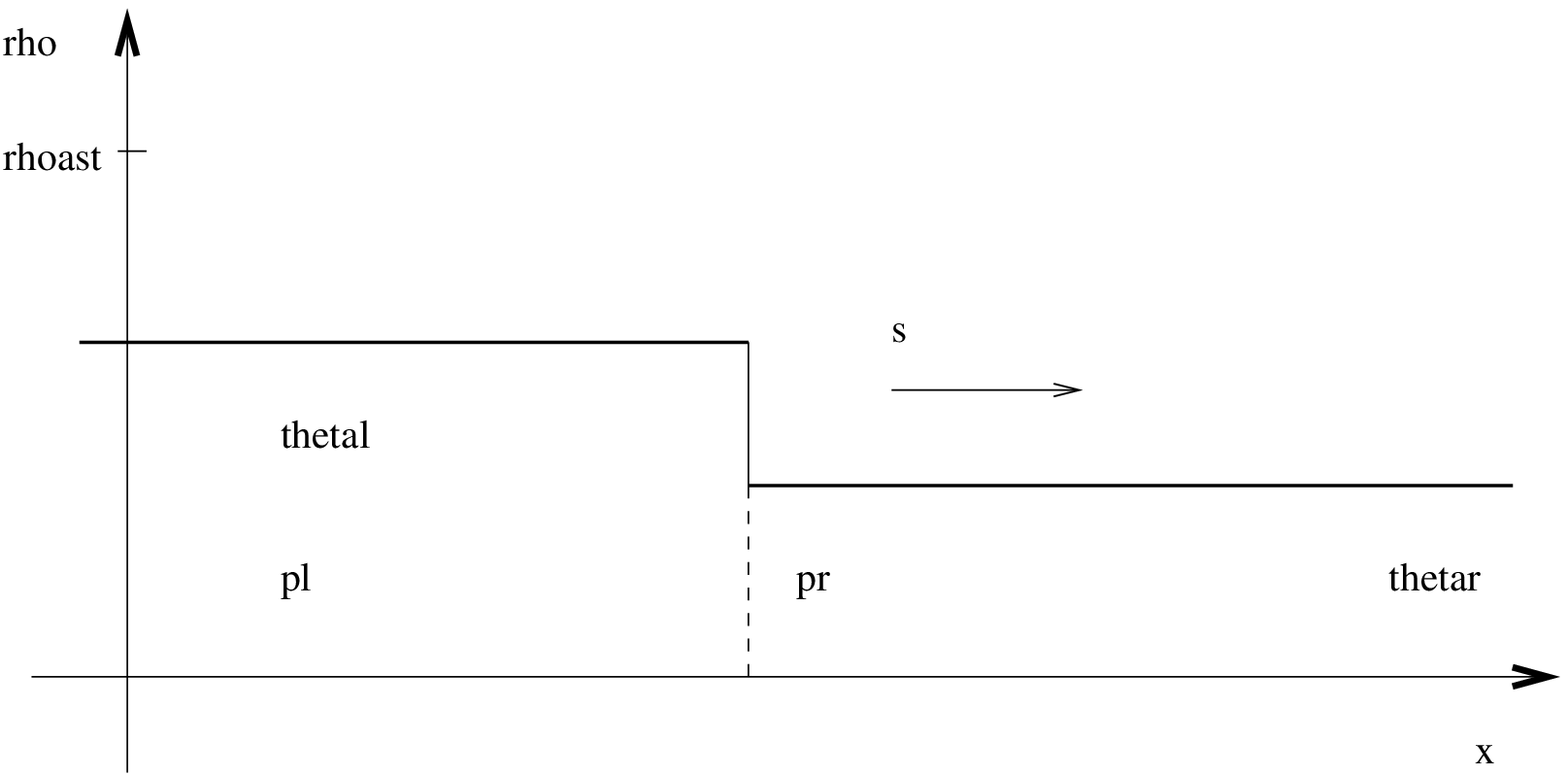}}
\hfill
\null

\caption{Interfaces}
\label{Fig:Formal_Statement}
\end{center}
\end{figure}


\subsection{Clusters dynamics}
\label{Chap:Clusters_dynamics}

We now focus on the interface (C)-(C), i.e. a collision of two clusters. The procedure using limits $\eps \rightarrow 0$ of the Riemann problem does not lead to any conclusion since the pressure becomes infinite. Note that this is also the case when dealing with the same limit in the standard Euler problem. Therefore, we have to find another strategy than using the Riemann problem. We turn our attention to the collision between two clusters of finite size and we show that the pressure involves a Dirac delta at the time of the collision. Such an analysis is inspired by the sticky block solutions presented in \cite{2PhaseFlow_Bouchut_al}.

Consider two one-dimensional clusters which collide at a time $t_{c}$ (see fig. \ref{Fig:cluster_collision}). Before collision, the left (resp. right) cluster at time $t < t_{c}$ extends between $a_{\ell}(t)$ and $b_{\ell}(t)$ (resp. $a_{r}(t)$ and $b_{r}(t)$) and moves with speed 
\begin{equation}
\cos\theta_{\ell} = a_{\ell}'(t) = b_{\ell}'(t)\quad \text{(resp.}\ \cos\theta_{r} = a_{r}'(t) = b_{r}'(t)\text{)},
\end{equation} 
After the collision, the two clusters agregate and form a new cluster at time $t > t_{c}$ extending between $a(t)$ and $b(t)$ and moving with speed $\cos\theta = a'(t) = b'(t)$. Therfore, $\rho$ and $\theta$ are given for $t < t_{c}$ by
\begin{equation*} 
\rho = \rho^{\ast}\mathds{1}_{[a_{\ell}(t),b_{\ell}(t)]} + \rho^{\ast}\mathds{1}_{[a_{r}(t),b_{r}(t)]},\quad \theta = 
\theta_{\ell}\mathds{1}_{[a_{\ell}(t),b_{\ell}(t)]} + \theta_{r}\mathds{1}_{[a_{r}(t),b_{r}(t)]},
\end{equation*}
and for $t > t_{c}$ by
\begin{equation*}
\rho = \rho^{\ast} \mathds{1}_{[a(t),b(t)]},\quad \theta = \theta \mathds{1}_{[a(t),b(t)]}.
\end{equation*}
where $\mathds{1}_{I}$ denotes the indicator function of the interval $I$ (i.e. $\mathds{1}_{I}(x) = 1$ if $x \in I$ and $0$ otherwise). We denote by $m = b_{\ell}(t_{c}) = a_{r}(t_{c})$ the collision point. We look for a pressure written as $\bar{p}(x,t) = \pi(x)\delta(t-t_{c})$. The following proposition provides conditions for such type of solutions to exist. 
\begin{proposition}\label{Prop:cluster_collision} 1- Supposing that $\bar{p}(x,t) = \pi(x)\delta(t-t_{c})$ where $\pi$ is continuous and zero outside the clusters, then $\theta$ and $\pi$ satisfy 
\begin{eqnarray}
&&(\Psi(\cos\theta) - \Psi(\cos\theta_{\ell}))(m - a(t_{c})) + (\Psi(\cos\theta) - \Psi(\cos\theta_{r}))(b(t_{c}) - m) = 0,\nonumber\\
&&\label{Eq:collision_angle}\\
&&\pi(x) = \left\{\begin{array}{ll}(\Psi(\cos\theta) - \Psi(\cos\theta_{\ell}))(m - x)\\
 \quad + (\Psi(\cos\theta) - \Psi(\cos\theta_{r}))(b(t_{c}) - m),&\text{ if }x \in [a(t_{c}),m],\\
(\Psi(\cos\theta) - \Psi(\cos\theta_{r}))(b(t_{c}) - x),&\text{ if }x \in [m,b(t_{c})],\end{array}\right.\label{Eq:collision_pi}
\end{eqnarray}
2 - Under conditions (\ref{Eq:collision_angle})-(\ref{Eq:collision_pi}), $(\rho,\theta,p)$ is a solution (in the distributional sense) of (\ref{Eq:rho_1D_cons})-(\ref{Eq:theta_1D_cons}).
\end{proposition}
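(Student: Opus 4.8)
The plan is to substitute the proposed ansatz $\rho(x,t) = \rho^\ast \mathds{1}_{[a_\ell,b_\ell]} + \rho^\ast\mathds{1}_{[a_r,b_r]}$ (for $t<t_c$), $\rho(x,t)=\rho^\ast\mathds{1}_{[a,b]}$ (for $t>t_c$), together with the corresponding piecewise-constant $\theta$ and the pressure $\bar p(x,t) = \pi(x)\delta(t-t_c)$, directly into the weak (distributional) formulation of the conservative system (\ref{Eq:rho_1D_cons})-(\ref{Eq:theta_1D_cons}), and to read off which conditions on $\theta$ and $\pi$ make every distributional identity hold. First I would check the mass equation (\ref{Eq:rho_1D_cons}): since $\bar p$ does not enter it, this is just the statement that the characteristic edges $a_\ell,b_\ell,a_r,b_r$ (before) and $a,b$ (after) move at the respective normal speeds $\cos\theta_\ell,\cos\theta_r,\cos\theta$, which holds by the definitions $\cos\theta_\ell = a_\ell'(t)=b_\ell'(t)$, etc.; one must additionally verify global mass conservation at $t=t_c$, i.e. $\rho^\ast(b_\ell(t_c)-a_\ell(t_c)) + \rho^\ast(b_r(t_c)-a_r(t_c)) = \rho^\ast(b(t_c)-a(t_c))$, which, using $m=b_\ell(t_c)=a_r(t_c)$, reduces to $a(t_c)=a_\ell(t_c)$ and $b(t_c)=b_r(t_c)$ (continuity of the outer edges through the collision).

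The heart of the matter is the momentum equation (\ref{Eq:theta_1D_cons}). Here $\partial_t \Psi(\cos\theta)$ produces, besides the moving-edge contributions (which balance against the corresponding pieces of $\partial_x \Phi(\cos\theta)$ exactly as in the mass equation since each edge is a contact moving at speed $\cos\theta$), a term supported on the time-slice $t=t_c$: namely $\bigl(\Psi(\cos\theta)\mathds{1}_{[a,b]} - \Psi(\cos\theta_\ell)\mathds{1}_{[a_\ell,b_\ell]} - \Psi(\cos\theta_r)\mathds{1}_{[a_r,b_r]}\bigr)\,\delta(t-t_c)$, evaluated at $t=t_c$. On the other side, $\partial_x(\eps p(\rho))$ is replaced in the limit by $\partial_x \bar p = \pi'(x)\,\delta(t-t_c)$. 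Matching the coefficients of $\delta(t-t_c)$ as distributions in $x$ gives the ODE
\begin{equation*}
\pi'(x) = \Psi(\cos\theta)\mathds{1}_{[a(t_c),b(t_c)]}(x) - \Psi(\cos\theta_\ell)\mathds{1}_{[a(t_c),m]}(x) - \Psi(\cos\theta_r)\mathds{1}_{[m,b(t_c)]}(x)
\end{equation*}
on the real line, to be solved with the boundary conditions $\pi\equiv 0$ outside $[a(t_c),b(t_c)]$ (the requirement that $\pi$ be continuous and vanish outside the cluster). Integrating from $x=b(t_c)$ downward (where $\pi(b(t_c))=0$) gives formula (\ref{Eq:collision_pi}) on each of the two sub-intervals $[m,b(t_c)]$ and $[a(t_c),m]$; continuity of $\pi$ at $x=m$ is automatic from this construction. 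The remaining condition $\pi(a(t_c))=0$ — needed so that $\pi$ can vanish to the left of the new cluster as well — is exactly the constraint (\ref{Eq:collision_angle}), which thus appears as a solvability (compatibility) condition determining the post-collision angle $\theta$ in terms of $\theta_\ell$, $\theta_r$ and the geometry $a(t_c),m,b(t_c)$. This proves part 1; part 2 is the converse, obtained by running the same computation backwards: granted (\ref{Eq:collision_angle})-(\ref{Eq:collision_pi}), every distributional term produced by plugging the ansatz into (\ref{Eq:rho_1D_cons})-(\ref{Eq:theta_1D_cons}) cancels, so the triple is a weak solution.

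I expect the main obstacle to be the careful bookkeeping of the distributional derivatives when the coefficients are products of indicator functions in $x$ with a Dirac mass in $t$ — in particular, making rigorous the "evaluation at $t=t_c$" of $\Psi(\cos\theta(x,t))$ inside the $\delta(t-t_c)$ term, and checking that the moving-edge jump terms (which are measures supported on the space-time lines $x=a_\ell(t)$, etc.) genuinely cancel between $\partial_t\Psi(\cos\theta)$ and $\partial_x\Phi(\cos\theta)$ without leaving a spurious contribution at the collision point $(m,t_c)$ where four such lines meet. A clean way to handle this is to test against an arbitrary $\varphi\in\mathcal C_c^\infty(\R\times(0,\infty))$, split the space-time integral into the regions before and after $t_c$ and inside/outside each cluster, integrate by parts on each piece, and collect the boundary contributions; the edge terms then manifestly telescope and only the $t=t_c$ line and the outer-edge continuity conditions survive. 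Once that is set up, the remaining computation is the elementary integration of the piecewise-constant right-hand side for $\pi$.
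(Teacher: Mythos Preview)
Your approach is essentially identical to the paper's: both test the conservative system against smooth compactly supported functions, apply Green's formula on the space--time regions before/after $t_c$ and inside/outside each cluster, observe that the moving-edge boundary terms cancel (each edge being a contact moving at the corresponding $\cos\theta$), and read off the ODE for $\pi$ from the surviving $\delta(t-t_c)$ layer, with the compatibility condition $\pi(a(t_c))=0$ yielding (\ref{Eq:collision_angle}). One correction: your displayed ODE for $\pi'$ carries the wrong sign --- since (\ref{Eq:theta_1D_cons}) reads $\partial_t\Psi+\partial_x(\Phi+\bar p)=0$, matching the $\delta(t-t_c)$ coefficients gives $-\pi'(x)=\Psi(\cos\theta)-\Psi(\cos\theta_\ell)$ on $[a(t_c),m]$ (and analogously on $[m,b(t_c)]$), which upon integration from $b(t_c)$ then yields (\ref{Eq:collision_pi}) with the correct sign.
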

The proof of this proposition is developed in appendix \ref{Appendix:cluster_collision}.
\begin{figure}
\begin{center}

\psfrag{a(tc+d)}{\footnotesize{$a(t_{c}+\delta)$}}
\psfrag{b(tc+d)}{\footnotesize{$b(t_{c}+\delta)$}}
\psfrag{a(tc)}{\footnotesize{$a(t_{c})$}}
\psfrag{b(tc)}{\footnotesize{$b(t_{c})$}}
\psfrag{al(tc-d)}{\footnotesize{$a_{\ell}(t_{c}-\delta)$}}
\psfrag{bl(tc-d)}{\footnotesize{$b_{\ell}(t_{c}-\delta)$}}
\psfrag{ar(tc-d)}{\footnotesize{$a_{r}(t_{c}-\delta)$}}
\psfrag{br(tc-d)}{\footnotesize{$b_{r}(t_{c}-\delta)$}}
\psfrag{tc-d}{\footnotesize{$t_{c}-\delta$}}
\psfrag{tc}{\footnotesize{$t_{c}$}}
\psfrag{tc+d}{\footnotesize{$t_{c}+\delta$}}
\psfrag{x0-eta}{\footnotesize{$x_{0}-\eta$}}
\psfrag{x0}{\footnotesize{$x_{0}$}}
\psfrag{x0+eta}{\footnotesize{$x_{0}+\eta$}}
\psfrag{D}{$D$}
\psfrag{D'}{$D'$}
\psfrag{D''}{$D''$}
\psfrag{t}{\footnotesize{$t$}}
\psfrag{x}{\footnotesize{$x$}}
\psfrag{m}{\footnotesize{$m$}}
\subfigure{\includegraphics[scale=0.7]{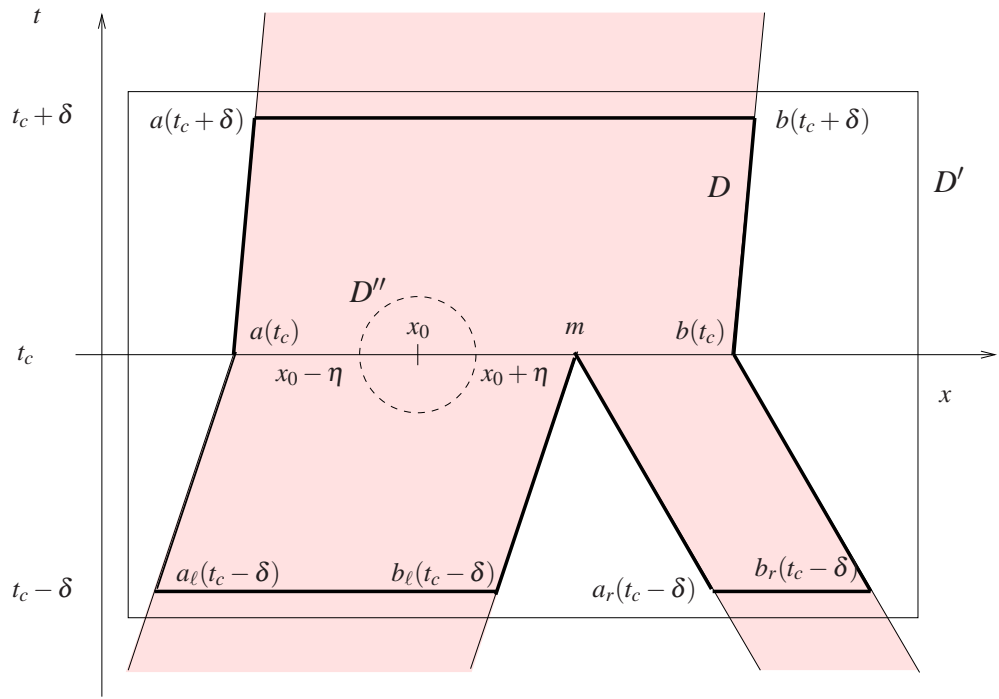}}

\caption{Collision of clusters. In the filled domain: clusters ($\rho = \rho^{\ast}$).}
\label{Fig:cluster_collision}
\end{center}
\end{figure}

\subsection{Conclusion of the analysis}

The underdetermined problem (\ref{Eq:rho_lim})-(\ref{Eq:constraint_rho_lim}) must be complemented with the Formal Statement \ref{Formal_Statement} which determines the boundary values of $\bar{p}$ at cluster boundaries and by proposition \ref{Prop:cluster_collision} which determines the evolution of two clusters when they meet.  Strictly speaking, proposition \ref{Prop:cluster_collision} only gives the collision dynamics of two clusters in dimension 1. In dimension 2, clusters may have complicated shapes. So, the collision dynamics of two clusters in dimension 2 is a complicated problem which will be examined in a future work. At the present stage, problem (\ref{Eq:rho_lim})-(\ref{Eq:constraint_rho_lim}) complemented with statement \ref{Formal_Statement} fully determines the dynamics of the limit system as long as two clusters do not meet.

A rigorous theory of the well-posedness of system (\ref{Eq:rho_lim})-(\ref{Eq:constraint_rho_lim}) complemented with statement \ref{Formal_Statement}  is outside the scope of the present paper. Let us just mention how a time discretized version of the problem can be computed. Suppose that $\rho^{n}(x)$, $\Omega^{n}(x)$, $\bar{p}^{n}(x)$ are approximations of $\rho(x,t^{n})$, $\Omega(x,t^{n})$, $\bar{p}(x,t^{n})$ at time $t^{n} = n\Delta t$. We solve the implicit system
\begin{eqnarray}
&&\frac{\rho^{n+1} - \rho^{n}}{\Delta t} + \nabla_{\vec{x}}\cdot(\rho^{n+1}\Omega^{n+1}) = 0,\\
&&\frac{\Omega^{n+1} - \Omega^{n}}{\Delta t} + (\Omega^{n}\cdot\nabla_{\vec{x}})\Omega^{n+\frac{1}{2}} + (\mbox{Id} -  \Omega^{n+\frac{1}{2}}\otimes\Omega^{n+\frac{1}{2}})\nabla_{\vec{x}}\bar{p}^{n+1} = 0,
\end{eqnarray}
with
\begin{equation}
\Omega^{n+\frac{1}{2}} = \frac{\Omega^{n} + \Omega^{n+1}}{|\Omega^{n} + \Omega^{n+1}|}.
\end{equation}
This form guarantees that $|\Omega^{n+1}|^{2} = |\Omega^{n}|^{2} = 1$ (by taking the dot product by $\Omega^{n+\frac{1}{2}}$ and using that $|\Omega^{n+\frac{1}{2}}| = 1$). $\bar{p}^{n+1}$ is determined by solving the elliptic equation
\begin{equation}
-\nabla_{\vec{x}}\cdot((\mbox{Id} -  \Omega^{n+\frac{1}{2}}\otimes\Omega^{n+\frac{1}{2}})\nabla_{\vec{x}}\bar{p}^{n+1}) = \nabla_{\vec{x}}\cdot((\Omega^{n}\cdot\nabla_{\vec{x}})\Omega^{n+\frac{1}{2}}).
\end{equation}
on every connected component of the cluster region defined at time $t$ by $\{ x \in {\mathbb R}^2 \, | \, \rho^{n+1} (x,t) = \rho^{\ast} \}$. This equation must be supplemented with suitable boundary conditions on $\bar{p}$ at the boundary of the cluster. These boundary conditions are actually given by the Formal Statement \ref{Formal_Statement}, with right-hand sides evaluated at time $t^{n+1}$. The resolution of this equation guarantees that $\nabla\cdot\Omega^{n+1} = 0$ on every connected component of a cluster, and shows that $\rho^{n+1} = \rho^{n} = \rho^{\ast}$ on such a cluster. Of course, the implicitness of the discretization leads to a nonlinear stationary problem, and the question of the existence of solutions for such a problem is not clear. However, intuitively, it seems that the prescription of the boundary values of $\bar{p}$ at cluster boundaries through the Formal Statement \ref{Formal_Statement} leads to a well-posed problem, at least as long as two clusters do not meet.  

\section{The one-dimensional Riemann Problem}
\label{Chap:Riemann_Pb}

\subsection{Methodology}

To find out jump relations satisfied by the solutions of the system (\ref{Eq:rho_lim})-(\ref{Eq:constraint_rho_lim}), the strategy is to solve the Riemann problem of the one-dimensional perturbation system (\ref{Eq:rho_1D_cons})-(\ref{Eq:theta_1D_cons}) and to take the limit $\eps \rightarrow 0$ of its solutions. This strategy was successfully adopted for a model of traffic jams in \cite{2008_Traffic_DegondRascle}. 

We note that the eigenvalues and eigenvectors of the hyperbolic system (\ref{Eq:rho_1D_cons})-(\ref{Eq:theta_1D_cons}) are
\begin{equation}
\lambda_{\pm}^{\eps}(\rho,\theta) = \cos\theta \pm \sqrt{\eps p'(\rho)\rho}|\sin\theta|,\quad  
\vec{r}_{\pm}^{\eps}(\rho,\theta) = \left(\begin{array}{c} \pm \rho|\sin\theta|\\ \sqrt{\eps p'(\rho)\rho}\end{array}\right).
\end{equation} 
In this conservative system, the domain of $\theta$ is restricted to the interval $]0,\pi[$. But this is not a problem since our main concern is to find  the missing conditions on $\bar{p}$ at the cluster boundary, and these only depend on jump conditions as functions of $\cos\theta$.

\subsection{Solutions to the Riemann problem for (\ref{Eq:rho_1D_cons})-(\ref{Eq:theta_1D_cons}).}

\subsubsection{Genuinely nonlinear fields} 

The Lax theorem provides the local entropic solutions of the Riemann problem provided that all the fields are totally genuinely nonlinear ($\nabla \lambda_{\pm}^{\eps} \cdot \vec{r}_{\pm}^{\eps} \neq 0$) or totally linearly degenerate ($\nabla \lambda_{\pm}^{\eps} \cdot \vec{r}_{\pm}^{\eps} = 0$). Unfortunately, the following result implies that the fields are genuinely nonlinear except on a one-dimensional manifold. 
\begin{proposition} \begin{enumerate} \item The linearly degenerate set ($\nabla \lambda_{\pm}^{\eps} \cdot \vec{r}_{\pm}^{\eps} = 0$) consists of two curves $\mathcal{C}^{\eps}_{\pm}$ (each of them corresponds to one characteristic field):
\begin{equation*}
\mathcal{C}^{\eps}_{\pm}  = \left\{\left(\rho,\theta\right),\ \rho \in [0,\rho^{\ast}[,\ \text{cotan}\theta  = \mp G^{\eps}(\rho) \right\}.
\end{equation*}
where 
\begin{equation*}
G^{\eps}(\rho) := \frac{1}{\sqrt{\eps}}\frac{(p''(\rho)\rho + 3p'(\rho))\rho}{(p'(\rho)\rho)^{3/2}}\quad \underset{\rho \rightarrow \rho^{\ast}}{\sim} C\frac{(\rho^{\ast} - \rho)^{\frac{\gamma - 1}{2}}}{\sqrt{\eps}}.
\end{equation*} 
\item For $\gamma = 1$, the linearly degenerate set tends to the straight lines $\left\{\theta = 0\right\}$ and $\left\{\theta = \pi\right\}$ as $\eps$ tends to $0$. For $\gamma > 1$, $\mathcal{C}^{\eps}_{+}$ (resp. $\mathcal{C}^{\eps}_{-}$) is a one to one, onto mapping from $[0,\rho^{\ast}]$ to $[\pi/2,\pi]$ (resp. $[0,\pi/2]$) for all $\eps$, called $\theta^{\eps}_{\text{ld},\pm}(\rho)$. For a fixed $\theta \in ]0,2\pi[$, the inverse map $\rho^{\eps}_{\text{ld},\pm}(\theta)$ satisfies: $\rho^{\ast} - \rho^{\eps}_{\text{ld},\pm}(\theta) = O(\eps^{\frac{1}{\gamma-1}})$.
\end{enumerate}
\end{proposition}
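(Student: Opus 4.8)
The plan is to compute the genuine-nonlinearity coefficient $\nabla\lambda_\pm^\eps\cdot\vec r_\pm^\eps$ explicitly and then analyze the level set where it vanishes. First I would parametrize the state space by $(\rho,\theta)$ with $\theta\in\,]0,\pi[$ and write $\lambda_\pm^\eps=\cos\theta\pm\sqrt{\eps\,p'(\rho)\rho}\,\sin\theta$ (dropping the absolute value since $\sin\theta>0$ on this interval). Differentiating in $\rho$ and in $\theta$ and dotting with $\vec r_\pm^\eps=(\pm\rho\sin\theta,\ \sqrt{\eps p'(\rho)\rho})^T$, one gets, after collecting terms, an expression of the form
\begin{equation*}
\nabla\lambda_\pm^\eps\cdot\vec r_\pm^\eps = \sqrt{\eps p'(\rho)\rho}\,\Bigl(\mp\rho\sin^2\theta\,\partial_\rho\bigl(\sqrt{\eps p'(\rho)\rho}\bigr)\cdot(\text{something})\ \pm\ (\text{trig terms})\Bigr),
\end{equation*}
which I would simplify to factor out a nonzero quantity (a power of $\sqrt{\eps p'(\rho)\rho}$, which is positive for $\rho\in\,]0,\rho^\ast[$) times a linear combination $\alpha(\rho)\sin^2\theta + \beta(\rho)\cos\theta\sin\theta$ or equivalently, after dividing by $\sin^2\theta$, something proportional to $\mathrm{cotan}\,\theta - (\text{function of }\rho)$. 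The careful bookkeeping here is the main computational step; the key is that the $\rho$-derivative of $\sqrt{\eps p'(\rho)\rho}$ produces exactly the combination $(p''(\rho)\rho+3p'(\rho))\rho/(2(p'(\rho)\rho)^{1/2})$ up to constants, which is the numerator appearing in $G^\eps$.

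Once the vanishing condition is reduced to $\mathrm{cotan}\,\theta = \mp G^\eps(\rho)$ with $G^\eps(\rho)=\eps^{-1/2}(p''(\rho)\rho+3p'(\rho))\rho\,(p'(\rho)\rho)^{-3/2}$, I would verify the claimed asymptotics: using $p(\rho)=(1/\rho-1/\rho^\ast)^{-\gamma}$ one computes $p'(\rho)\sim C_1(\rho^\ast-\rho)^{-\gamma-1}$ and $p''(\rho)\rho+3p'(\rho)\sim C_2(\rho^\ast-\rho)^{-\gamma-2}$ as $\rho\to\rho^\ast$ (the leading singular behaviour comes from differentiating the singular factor, and the $\rho$ factors tend to $\rho^\ast$), so that $G^\eps(\rho)\sim C\,\eps^{-1/2}(\rho^\ast-\rho)^{(\gamma-1)/2}$. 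For part (2), the case $\gamma=1$ makes the exponent zero, so $G^\eps(\rho)$ stays bounded below by a positive constant times $\eps^{-1/2}\to\infty$, forcing $\mathrm{cotan}\,\theta\to\pm\infty$, i.e. $\theta\to 0$ or $\theta\to\pi$ — this gives the first claim. For $\gamma>1$, $G^\eps$ is a continuous function of $\rho$ on $[0,\rho^\ast]$ with $G^\eps(0)$ finite (one checks $p'(0)\rho$ behaves well at $0$) and $G^\eps(\rho^\ast)=0$; one then shows $G^\eps\ge 0$ and uses monotonicity (or just the intermediate value theorem together with the sign and the endpoint values) to conclude that $\rho\mapsto\mathrm{arccotan}(\mp G^\eps(\rho))$ is a bijection from $[0,\rho^\ast]$ onto $[\pi/2,\pi]$ for the $+$ branch and onto $[0,\pi/2]$ for the $-$ branch.

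For the final assertion on the inverse map, I would fix $\theta$ and solve $G^\eps(\rho)=\mp\mathrm{cotan}\,\theta=:c$, a fixed nonzero constant. Using the asymptotics $G^\eps(\rho)\sim C\eps^{-1/2}(\rho^\ast-\rho)^{(\gamma-1)/2}$, the equation $C\eps^{-1/2}(\rho^\ast-\rho)^{(\gamma-1)/2}=c$ gives $(\rho^\ast-\rho)^{(\gamma-1)/2}=O(\eps^{1/2})$, hence $\rho^\ast-\rho=O(\eps^{1/(\gamma-1)})$; to make this rigorous one needs the asymptotic equivalence to be uniform enough near $\rho^\ast$ and monotonicity of $G^\eps$ so that the solution is unique and the estimate transfers, which is a routine implicit-function / continuity argument. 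The main obstacle I anticipate is purely the algebra in the first step: getting $\nabla\lambda_\pm^\eps\cdot\vec r_\pm^\eps$ into a cleanly factored form where the $(p''\rho+3p')\rho$ combination is manifest requires careful handling of the chain rule on $\sqrt{\eps p'(\rho)\rho}$ and of the sign/branch conventions for $\vec r_\pm$ and $\lambda_\pm$; everything after that is a sequence of asymptotic estimates on the explicit $p(\rho)$. I would relegate the detailed computation of $\nabla\lambda_\pm^\eps\cdot\vec r_\pm^\eps$ to an appendix and keep only the resulting formula and its consequences in the main text.
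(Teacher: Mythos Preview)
Your approach is exactly the direct computation the paper has in mind; in fact the paper omits the proof entirely, writing only ``The proof of this proposition is easy and is omitted.'' So there is nothing to compare against beyond confirming that your strategy---compute $\nabla\lambda_\pm^\eps\cdot\vec r_\pm^\eps$, factor to isolate $\mathrm{cotan}\,\theta$, then read off the asymptotics of $G^\eps$ from the explicit $p(\rho)$---is the natural and correct one.

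One small correction: you write that ``$G^\eps(0)$ is finite''. It is not. With $p(\rho)\sim\rho^\gamma$ near $0$ one gets $p'(\rho)\rho\sim\gamma\rho^\gamma$ and $(p''(\rho)\rho+3p'(\rho))\rho\sim\gamma(\gamma+2)\rho^\gamma$, so $G^\eps(\rho)\sim C\eps^{-1/2}\rho^{-\gamma/2}\to+\infty$ as $\rho\to 0$. This does not hurt you---in fact it is exactly what you need for the bijection claim: on $\mathcal C^\eps_+$ one has $\mathrm{cotan}\,\theta=-G^\eps(\rho)$ running from $-\infty$ (at $\rho=0$) to $0$ (at $\rho=\rho^\ast$, for $\gamma>1$), which sends $\theta$ onto $[\pi/2,\pi]$, and symmetrically for $\mathcal C^\eps_-$. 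Just adjust that sentence. A second minor point: when you carry out the algebra, be careful that the eigenvectors $\vec r_\pm^\eps$ as written in the paper are in the conservative variables $(\rho,\Psi(\cos\theta))$, not in $(\rho,\theta)$; if you work in $(\rho,\theta)$ the tangent direction is proportional to $(\pm\rho,-\sqrt{\eps p'(\rho)\rho})$ (cf.\ the rarefaction-curve parametrization in the paper). Using the wrong coordinates here is the most likely source of sign or constant-factor errors in the expression for $G^\eps$.
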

The proof of this proposition is easy and is omitted. Thus, the Lax theorem is valid at least locally in the neighbourhood of all the states except those which are on the one-dimensional manifolds. The second part of the previous proposition shows that all the states have locally genuinely nonlinear fields as $\eps$ tends to $0$. Indeed, even if the state converges to a congested state as $\eps \rightarrow 0$, its convergence is like $O(\eps^{1/\gamma})$ (cf. (\ref{speedconvergence})), which is slower than the convergence of the linearly degenerate field when $\gamma > 1$. Therefore, there exists $\eps '$ such that for all $\eps < \eps '$ the fields of the converging state are genuinely non-linear. It is also trivially the case when $\gamma$ equals $1$. According to standard nonlinear conservation theory \cite{SCLBook_Serre}, (for $\eps$ small enough) the solutions of the Riemann problem consist of two simple waves (shock waves and/or rarefaction waves) of the first and second characteristic fields, separated by constant states.

\subsubsection{Shock and rarefaction waves.} 

A \textbf{shock wave} between two constant states $(\rho_{\ell},\theta_{\ell})$ and $(\rho_{r},\theta_{r})$ travelling with a constant speed $\sigma$ satisfies the Rankine-Hugoniot relations: 
\begin{eqnarray}
\left[\rho\cos(\theta)\right] &=& \sigma\left[\rho\right],\label{Eq:RH_1}\\
\left[\Phi(\cos(\theta)) + \eps p(\rho)\right] &=& \sigma\left[\Psi(\cos(\theta))\right],\label{Eq:RH_2}
\end{eqnarray}
where $\left[f\right] := f_{r} - f_{\ell}$ denotes the difference between the right value and the left value of any quantity $f$. By eliminating $\sigma$ in these equations, we get a non-linear relation between the left and right states:
\begin{equation}
H_{\eps}(\rho_{\ell},\theta_{\ell},\rho_{r},\theta_{r})  := \left[\Phi(\cos(\theta)) + \eps p(\rho)\right]\left[\rho\right] - \left[\Psi(\cos(\theta))\right]\left[\rho\cos(\theta)\right] = 0.
\label{Eq:Shock}
\end{equation} 
With a fixed left state, the zero set of $H_{\eps}$ is called the Hugoniot locus and represents all the admissible right states, connected to this left state by a shock wave. 
\begin{proposition} The Hugoniot locus consists of two Hugoniot curves $\mathcal{H}_{\pm}^{\eps}$ associated to the two caracteristic fields. 
\begin{enumerate}
  	\item The Hugoniot curve $\mathcal{H}_{-}^{\eps}$ associated to $\lambda_{-}^{\eps}$ (resp. $\mathcal{H}_{+}^{\eps}$ to $\lambda_{+}^{\eps}$) is strictly increasing (resp. strictly decreasing) in the $(\rho,\theta)$-plane. Let $h_{-}^{\eps} :\ ](h_{-}^{\eps})^{-1}(0),\pi[ \rightarrow [0,\rho^{\ast}[$ and $h_{+}^{\eps} :\ ]0,(h_{+}^{\eps})^{-1}(0)[ \rightarrow [0,\rho^{\ast}[$ be the Hugoniot curves as functions of $\theta$ on their domains of definition. 
  	\item The Hugoniot locus tends to the union of the straight lines $\left\{\theta = \theta_{\ell}\right\}$ and $\left\{\rho = \rho^{\ast}\right\}$. 
\end{enumerate}
\label{Prop:Hugoniot_loci}
\end{proposition}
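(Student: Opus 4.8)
Fix the left state $(\rho_\ell,\theta_\ell)\in\,]0,\rho^\ast[\,\times\,]0,\pi[$; everything is governed by the scalar function $H_\eps$ of (\ref{Eq:Shock}). The plan starts from two elementary observations. First, $(\rho_\ell,\theta_\ell)$ itself lies on $\{H_\eps=0\}$, and since every term of $\partial_\rho H_\eps$ and $\partial_\theta H_\eps$ carries one of the jumps $[\rho]$, $[\rho\cos\theta]$, $[\Phi(\cos\theta)]$, $[\Psi(\cos\theta)]$, $[\eps p(\rho)]$, we have $\nabla_{(\rho,\theta)}H_\eps=0$ at the base point; a second differentiation puts the Hessian there in diagonal form, with one entry $2\eps p'(\rho_\ell)>0$ and the other nonzero of the opposite sign, the off-diagonal term vanishing by the identity $\tfrac{d}{d\theta}\Phi(\cos\theta)=\cos\theta\,\tfrac{d}{d\theta}\Psi(\cos\theta)$. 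Thus $(\rho_\ell,\theta_\ell)$ is a non-degenerate indefinite critical point of $H_\eps$, so by the splitting (Morse) lemma — equivalently by the classical local description of Hugoniot loci for the strictly hyperbolic system (\ref{Eq:rho_1D_cons})-(\ref{Eq:theta_1D_cons}) (note $\lambda_+^\eps-\lambda_-^\eps=2\sqrt{\eps p'(\rho)\rho}\,|\sin\theta|>0$ on the domain; \cite{SCLBook_Serre}) — near the base point $\{H_\eps=0\}$ consists of two $C^2$ arcs crossing transversally, one of positive slope — the $\lambda_-^\eps$-shock curve $\mathcal H_-^\eps$ — and one of negative slope — $\mathcal H_+^\eps$ — in the $(\rho,\theta)$-plane.

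Globalising is the heart of the matter. Differentiating (\ref{Eq:Shock}) and eliminating the pressure jump with the Rankine-Hugoniot relations (\ref{Eq:RH_1})-(\ref{Eq:RH_2}) — substituting $[\Phi(\cos\theta)+\eps p(\rho)]=\sigma[\Psi(\cos\theta)]$, $\sigma=[\rho\cos\theta]/[\rho]$, on the locus, and using $[\rho\cos\theta]-\cos\theta[\rho]=\rho_\ell[\cos\theta]$ — one obtains
\[
\partial_\rho H_\eps\big|_{\{H_\eps=0\}}=\eps\,p'(\rho)\,[\rho]+\frac{\rho_\ell\,[\Psi(\cos\theta)]\,[\cos\theta]}{[\rho]},\qquad
\partial_\theta H_\eps=-\rho_\ell\,\frac{d}{d\theta}\Psi(\cos\theta)\,[\cos\theta]+\rho\sin\theta\,[\Psi(\cos\theta)].
\]
From the explicit $\Phi,\Psi$ and the monotonicity of $\cos$ on $]0,\pi[$ one gets $[\Psi(\cos\theta)]\,[\cos\theta]>0$ for $\theta\neq\theta_\ell$, whence $\operatorname{sgn}\!\big(\partial_\rho H_\eps|_{\{H_\eps=0\}}\big)=\operatorname{sgn}(\rho-\rho_\ell)$ and $\operatorname{sgn}(\partial_\theta H_\eps)=-\operatorname{sgn}(\theta-\theta_\ell)$ off the base point. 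In particular $\nabla_{(\rho,\theta)}H_\eps$ is nonzero on $\{H_\eps=0\}$ away from $(\rho_\ell,\theta_\ell)$, so that set is a $C^1$ one-manifold there, and it has no component missing the base point (such a component would require a vertical tangent, i.e. a point with $\partial_\theta H_\eps=0$, but $\partial_\theta H_\eps$ vanishes only on $\{\theta=\theta_\ell\}$, and a direct check using that $p$ is strictly increasing gives $\{H_\eps=0\}\cap\{\theta=\theta_\ell\}=\{H_\eps=0\}\cap\{\rho=\rho_\ell\}=\{(\rho_\ell,\theta_\ell)\}$). Hence $\{H_\eps=0\}$ is exactly two simple arcs crossing at the base point. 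On either arc, written $\rho=h(\theta)$, the slope $h'=-\partial_\theta H_\eps/\partial_\rho H_\eps$ has constant sign $\operatorname{sgn}\!\big((\rho-\rho_\ell)(\theta-\theta_\ell)\big)$, and since it cannot cross $\{\rho=\rho_\ell\}$ or $\{\theta=\theta_\ell\}$ off the base point it stays in one pair of opposite quadrants: $\mathcal H_-^\eps$ is strictly increasing and $\mathcal H_+^\eps$ strictly decreasing. The same sign data fix the endpoints of the $\theta$-intervals ($\rho\downarrow0$ at one end, $\rho$ accumulating on $\rho^\ast$ at the other), yielding the maps $h_\pm^\eps$ of the statement. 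This is part~1.

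For part~2, take $(\rho^\eps,\theta^\eps)\in\{H_\eps=0\}$ with $(\rho^\eps,\theta^\eps)\to(\bar\rho,\bar\theta)$ as $\eps\to0$; since the Hugoniot curves take values in $[0,\rho^\ast[$ we have $\bar\rho\in[0,\rho^\ast]$, and if $\bar\rho=\rho^\ast$ the point lies on $\{\rho=\rho^\ast\}$. If $\bar\rho<\rho^\ast$ then $\eps p(\rho^\eps)\to0$ and $\eps p(\rho_\ell)\to0$, so passing to the limit in $H_\eps=0$ gives $H_0(\rho_\ell,\theta_\ell;\bar\rho,\bar\theta)=0$, where $H_0$ is $H_\eps$ with the $\eps p$-terms deleted. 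Solving $H_0=0$ linearly for $\rho$ shows that, besides $\{\theta=\theta_\ell\}$, its only branch is $\rho=\rho_\ell\,\nu(\theta)/\mu(\theta)$, with $\mu(\theta)-\nu(\theta)=-[\Psi(\cos\theta)]\,[\cos\theta]<0$ and $\mu(\theta_\ell)=\mu'(\theta_\ell)=0$, $\mu''(\theta_\ell)<0$; from these and the explicit $\Phi,\Psi$ one checks that $\nu(\theta)/\mu(\theta)\notin[0,\rho^\ast/\rho_\ell[$ for $\theta\in\,]0,\pi[\setminus\{\theta_\ell\}$, so $\bar\theta=\theta_\ell$. Conversely, using the parametrizations $h_\pm^\eps$ from part~1, one exhibits sequences on the locus converging to arbitrary points of $\{\theta=\theta_\ell\}$ and of $\{\rho=\rho^\ast\}$. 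Hence the Hugoniot locus converges to $\{\theta=\theta_\ell\}\cup\{\rho=\rho^\ast\}$.

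The step I expect to be hardest is the global monotonicity in part~1: upgrading the local, tangent-direction information to strict monotonicity of the full graphs $h_\pm^\eps$, which amounts to controlling the signs of $\partial_\rho H_\eps$ and $\partial_\theta H_\eps$ along the entire locus and analysing the endpoint behaviour as $\rho\to\rho^\ast$, where $p$ is singular and the non-conservative origin of the system makes the computations delicate. The limit in part~2 is comparatively soft once part~1 is in hand, apart from ruling out the spurious branch of $\{H_0=0\}$; both families of sign computations are of the kind this paper relegates to its appendices.
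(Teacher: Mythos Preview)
Your argument is essentially correct and arrives at the same conclusions, but by a genuinely different route than the paper. The paper (Appendix~\ref{Appendix:Hugoniot_loci}) slices $H_\eps$ one variable at a time: for fixed $\theta_r$ it uses the strict convexity $\partial^2_{\rho_r}H_\eps=\eps(2p'+p''\rho)>0$ together with $H_\eps(\rho_\ell)<0$ and $H_\eps(\rho^\ast)=+\infty$ to count zeros in $\rho_r$; for fixed $\rho_r$ it shows $H_\eps$ is unimodal in $\theta_r$ with maximum at $\theta_\ell$ and $H_\eps\to-\infty$ at $0,\pi$, giving exactly two zeros in $\theta_r$. Global monotonicity of the two branches is then \emph{inferred} from these zero counts (three intersections with a horizontal line would contradict the convexity bound), and the limit $\eps\to0$ is read off from the same slice analysis. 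Your approach instead computes $\partial_\rho H_\eps$ and $\partial_\theta H_\eps$ along the locus, identifies the base point as a Morse saddle, and determines the sign of the slope $h'=-\partial_\theta H_\eps/\partial_\rho H_\eps$ quadrant by quadrant. The trade-off: the paper's one-variable convexity argument is slightly more elementary and immediately yields the quantitative limits (the zero in $\rho_r$ tends to $\rho^\ast$ for fixed $\theta_r$, etc.) needed later in Theorem~\ref{Thm:Riemann_Small_Epsilon}; your gradient approach is more geometric and makes the two-branch structure transparent without any appeal to zero counting.

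One point deserves tightening. Your sentence ``such a component would require a vertical tangent'' only rules out \emph{compact} components missing the base point; a putative unbounded branch running off to $\partial\big(]0,\rho^\ast[\times]0,\pi[\big)$ need not have a vertical tangent. The fix is already implicit in what you wrote: since $\partial_\theta H_\eps\neq0$ off $\{\theta=\theta_\ell\}$ and the locus meets neither $\{\theta=\theta_\ell\}$ nor $\{\rho=\rho_\ell\}$ except at the base point, any component lies in a single open quadrant, is a monotone graph $\rho=h(\theta)$ there, and following it toward the inner corner forces $(\rho,\theta)\to(\rho_\ell,\theta_\ell)$---hence it \emph{is} one of the two local Morse branches. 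For part~2 your treatment of the spurious branch of $\{H_0=0\}$ is correct; note that the convexity lemma the paper records (Lemma~\ref{fonctionfu}) gives directly $\nu(\theta)>0>\mu(\theta)$ for $\theta\neq\theta_\ell$, so $\rho_\ell\nu/\mu<0$, which is a shorter justification than the second-order expansion you sketch.
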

The proof of this proposition is developed in appendix \ref{Appendix:Hugoniot_loci}.

A \textbf{rarefaction wave} is a continuous self-similar solution $(\rho(\frac{x}{t}),\theta(\frac{x}{t}))$. It satisfies the diffential equation
\begin{equation*}
\left(\begin{array}{c} \rho'(s)\\ \eta'(s)\end{array}\right) = \frac{\vec{r}_{\pm}^{\eps}(\rho(s),\eta(s))}{\nabla \lambda_{\pm}^{\eps}(\rho(s),\eta(s)) \cdot \vec{r}_{\pm}^{\eps}(\rho(s),\eta(s))},
\end{equation*}
where $\eta = \Psi(\cos(\theta))$ is the conservative unknown. Therefore, $(\rho,\eta)$  belong to the integral curve of $\vec{r}_{\pm}^{\eps}$. By changing the parametrization of the integral curve, we obtain  
\begin{equation*}
\rho' = \pm \rho |\sin\theta|,\quad \theta' = - \sqrt{\eps p'(\rho)\rho}|\sin\theta|,
\end{equation*} 
and then the following integral equation
\begin{equation}
\theta - \theta_{\ell} = \mp \int_{\rho_{\ell}}^{\rho} \sqrt{\frac{\eps p'(u)}{u}}du.
\label{OndeDetente}
\end{equation} 
It defines two integral curves $\mathcal{O}_{\pm}^{\eps}$ issued from the state $(\rho_{\ell},\theta_{\ell})$. The following proposition summarizes their main properties. 
\begin{proposition}
\begin{enumerate}
  \item The integral curve $\mathcal{O}_{-}^{\eps}$ of $r_{-}^{\eps}$ (resp. $\mathcal{O}_{+}^{\eps}$ of $r_{+}^{\eps}$) is strictly increasing (resp. stricly decreasing) in the $(\rho,\theta)-$plane. Let $i_{-}^{\eps} :\ ](i_{-}^{\eps})^{-1}(0),\pi[ \rightarrow [0,\rho^{\ast}[$ and $i_{+}^{\eps} : \, ]0,$ $(i_{+}^{\eps})^{-1}$ $(0)[ \rightarrow [0,\rho^{\ast}[$ the rarefaction curves as functions of $\theta$ on their domains of definition. 
  \item For all $\gamma \geq 1$, the rarefaction curves tend to the union of the straight lines $\left\{\theta = \theta_{\ell}\right\}$ and $\left\{\rho = \rho^{\ast}\right\}$. Moreover, for $\theta \in ]\theta_{\ell},\pi[$ (resp. $\theta \in ]0,\theta_{\ell}[$), $\rho^{\ast} - i_{-}^{\eps}(\theta) = O(\eps^{\frac{1}{\gamma-1}})$ (resp. $\rho^{\ast} - i_{+}^{\eps}(\theta) = O(\eps^{\frac{1}{\gamma-1}})$). 
  \item Suppose that the state $\rho_{\ell}^{\eps}$ is such that $\rho_{\ell}^{\eps} \rightarrow \rho^{\ast}$ and $\eps p(\rho_{\ell}^{\eps}) \rightarrow \bar{p}_{\ell}$.
	 For all $\rho < \rho_{\ell}^{\eps}$, $(i_{\pm}^{\eps})^{-1}(\rho)$ satisfies:
	\begin{equation*}
	|(i_{\pm}^{\eps})^{-1}(\rho) - \theta_{r}| \leq |(i_{\pm}^{\eps})^{-1}(0) - \theta_{r}| = O(\eps^{\frac{1}{2\gamma}}).
	\end{equation*}
\end{enumerate}
\label{Prop:Integral_curve}
\end{proposition}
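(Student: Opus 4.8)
\textit{Proof plan.} The whole proposition is an analysis of the single scalar integral
\[
J_\eps(\rho)\;:=\;\int_{\rho_\ell}^{\rho}\sqrt{\frac{\eps\,p'(u)}{u}}\;du ,
\]
since by (\ref{OndeDetente}) the integral curve $\mathcal O_\pm^\eps$ issued from $(\rho_\ell,\theta_\ell)$ is precisely the graph $\theta=\theta_\ell\mp J_\eps(\rho)$. Two elementary inputs drive everything. First, $\sqrt{\eps p'(u)/u}>0$ on $(0,\rho^\ast)$, so $\rho\mapsto\theta_\ell\mp J_\eps(\rho)$ is strictly monotone — increasing for the $-$ field, decreasing for the $+$ field, consistently with $d\theta/d\rho=\mp\sqrt{\eps p'(\rho)/\rho}$ — which proves item~1, legitimizes the inverse maps $i_\pm^\eps$ and fixes their domains through the endpoint values of $J_\eps$. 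Second, with $p(\rho)=(1/\rho-1/\rho^\ast)^{-\gamma}$ one has
\[
\frac{p'(u)}{u}=\frac{\gamma}{u^{3}(1/u-1/\rho^\ast)^{\gamma+1}}\;\underset{u\to\rho^\ast}{\sim}\;C_\ast\,(\rho^\ast-u)^{-(\gamma+1)},
\]
while $p'(u)/u$ is bounded near $u=0$ (its square root being integrable there for every $\gamma\ge1$). Since $(\gamma+1)/2\ge1$, the weight $\sqrt{p'(u)/u}\sim\sqrt{C_\ast}\,(\rho^\ast-u)^{-(\gamma+1)/2}$ fails to be integrable up to $\rho^\ast$, so $J_\eps(\rho)\to+\infty$ as $\rho\to\rho^\ast$; this divergence guarantees that $i_-^\eps$ sweeps all of $\big((i_-^\eps)^{-1}(0),\pi\big)$ (and $i_+^\eps$ all of $\big(0,(i_+^\eps)^{-1}(0)\big)$) and that $i_\pm^\eps(\theta)$ exists and stays strictly below $\rho^\ast$ for every admissible $\theta$.

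For item~2 I would convert these asymptotics into two-sided bounds: there are constants $0<c_1<c_2$, $C>0$ and $\rho_0<\rho^\ast$ such that, for $\rho\in[\rho_0,\rho^\ast)$,
\[
c_1\sqrt\eps\,(\rho^\ast-\rho)^{-\frac{\gamma-1}{2}}-C\sqrt\eps\;\le\;J_\eps(\rho)\;\le\;c_2\sqrt\eps\,(\rho^\ast-\rho)^{-\frac{\gamma-1}{2}}+C\sqrt\eps
\]
when $\gamma>1$, with $(\rho^\ast-\rho)^{-(\gamma-1)/2}$ replaced by $-\ln(\rho^\ast-\rho)$ when $\gamma=1$. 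Fixing $\theta\in(\theta_\ell,\pi)$, the identity $J_\eps(i_-^\eps(\theta))=\theta-\theta_\ell$ together with the upper bound forces $\rho^\ast-i_-^\eps(\theta)\le C'\eps^{1/(\gamma-1)}$ for $\eps$ small (exponentially small when $\gamma=1$), and symmetrically for $i_+^\eps$ on $(0,\theta_\ell)$; the lower bound shows this rate is sharp. Finally, for $\rho$ in any compact subset of $[0,\rho^\ast)$ one has $J_\eps(\rho)=O(\sqrt\eps)\to0$, i.e.\ $\theta\to\theta_\ell$ there, so combining the two observations the curves $\mathcal O_\pm^\eps$ collapse onto $\{\theta=\theta_\ell\}\cup\{\rho=\rho^\ast\}$ as $\eps\to0$.

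For item~3 I would first read off from the hypothesis $\eps p(\rho_\ell^\eps)\to\bar p_\ell$, by inverting the pressure law, that $\rho^\ast-\rho_\ell^\eps=O(\eps^{1/\gamma})$ (this is (\ref{speedconvergence})). By the monotonicity of item~1 the total variation of $\theta$ along the branch $\rho\in[0,\rho_\ell^\eps]$ equals $\int_0^{\rho_\ell^\eps}\sqrt{\eps p'(u)/u}\,du$ and is attained at $\rho=0$, whence $\big|(i_\pm^\eps)^{-1}(\rho)-\theta_r\big|\le\big|(i_\pm^\eps)^{-1}(0)-\theta_r\big|=\int_0^{\rho_\ell^\eps}\sqrt{\eps p'(u)/u}\,du$ for all $\rho\le\rho_\ell^\eps$. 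Splitting this integral at $\rho_0$, the piece over $[0,\rho_0]$ is $O(\sqrt\eps)$ and, by the upper bound of item~2, the piece over $[\rho_0,\rho_\ell^\eps]$ is $O\big(\sqrt\eps\,(\rho^\ast-\rho_\ell^\eps)^{-(\gamma-1)/2}\big)=O\big(\sqrt\eps\cdot\eps^{-(\gamma-1)/(2\gamma)}\big)=O(\eps^{1/(2\gamma)})$, which dominates since $1/(2\gamma)<1/2$; this is the asserted bound.

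The only mildly delicate point is making the two-sided bound on $J_\eps$ sharp enough to pin down the exponents (and tracking the logarithm in the boundary case $\gamma=1$, where the $O(\eps^{1/(2\gamma)})$ estimate of item~3 carries a spurious $\ln(1/\eps)$ factor); everything else is routine integral asymptotics near $u=0$ and near $u=\rho^\ast$, the recurring mechanism being that the $\sqrt\eps$ prefactor is exactly what lets the curves reach within $O(\eps^{1/(\gamma-1)})$, resp.\ vary by only $O(\eps^{1/(2\gamma)})$, of the congestion density. I expect no genuine obstruction beyond careful bookkeeping of the constants.
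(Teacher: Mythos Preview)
Your proof is correct and follows essentially the same line as the paper's own argument: both reduce everything to the scalar integral $J_\eps(\rho)=\int_{\rho_\ell}^{\rho}\sqrt{\eps p'(u)/u}\,du$, use the asymptotics $\sqrt{p'(u)/u}\sim C(\rho^\ast-u)^{-(\gamma+1)/2}$ near $\rho^\ast$ to obtain the rate $\rho^\ast-i_\pm^\eps(\theta)=O(\eps^{1/(\gamma-1)})$ by inversion, and then combine this with $\rho^\ast-\rho_\ell^\eps=O(\eps^{1/\gamma})$ to get the $O(\eps^{1/(2\gamma)})$ bound of item~3. Your treatment is in fact slightly more careful than the paper's (explicit two-sided bounds, separate handling of the logarithmic divergence at $\gamma=1$), but the strategy is identical.
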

The proof is developed in appendix \ref{Appendix:Integral_curve}.

\paragraph{Entropy conditions.} In order to satisfy the Lax entropy condition, each Hugoniot curve $\mathcal{H}_{\pm}^{\eps}$ is restricted to right states which have a smaller associated eigenvalue than the left state. 
\begin{proposition} The eigenvalue $\lambda_{-}^{\eps}$ (resp. $\lambda_{+}^{\eps}$) is a decreasing function of $\rho$ on the Hugoniot curve $\mathcal{H}_{-}^{\eps}$ (resp. an increasing function of $\rho$ on $\mathcal{H}_{+}^{\eps}$) for $\theta < \text{cotan}^{-1}\left((-1/\sqrt{\eps p'(\rho)\rho}\right)$ (resp. $\theta > \text{cotan}^{-1}\left(1/\sqrt{\eps p'(\rho)\rho}\right)$).  
\end{proposition}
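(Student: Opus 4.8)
The plan is to parametrize each Hugoniot curve by $\rho$ and to differentiate the corresponding eigenvalue along it by the chain rule, using only the explicit form of $\lambda_\pm^\eps$ together with the monotonicity of $\mathcal{H}_\pm^\eps$ already recorded in Proposition \ref{Prop:Hugoniot_loci}; no further appeal to the shock relation (\ref{Eq:Shock}) will be needed. Writing $c(\rho):=\sqrt{\eps p'(\rho)\rho}$, so that on $\theta\in\,]0,\pi[\,$ (where $\sin\theta>0$) one has $\lambda_\pm^\eps(\rho,\theta)=\cos\theta\pm c(\rho)\sin\theta$, the first step is to record
\[
\partial_\rho\lambda_\pm^\eps=\pm\,c'(\rho)\,\sin\theta,
\qquad
\partial_\theta\lambda_\pm^\eps=-\sin\theta\,\bigl(1\mp c(\rho)\,\text{cotan}\,\theta\bigr),
\]
and to check that $c'(\rho)>0$ on $]0,\rho^\ast[$. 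The latter amounts to $\rho p''(\rho)+p'(\rho)>0$ (since $2c(\rho)c'(\rho)=\eps(\rho p''(\rho)+p'(\rho))$), a one-line verification for $p(\rho)=(1/\rho-1/\rho^\ast)^{-\gamma}$, and more generally it holds for any pressure for which $\rho\mapsto\rho p'(\rho)$ is increasing.

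Next I would read off the sign of $\partial_\theta\lambda_\pm^\eps$: since $\sin\theta>0$ and $\text{cotan}$ is a decreasing bijection of $\,]0,\pi[\,$ onto $\R$, one gets $\partial_\theta\lambda_-^\eps<0\iff\text{cotan}\,\theta>-1/c(\rho)\iff\theta<\text{cotan}^{-1}(-1/c(\rho))$, and likewise $\partial_\theta\lambda_+^\eps<0\iff\text{cotan}\,\theta<1/c(\rho)\iff\theta>\text{cotan}^{-1}(1/c(\rho))$; these are exactly the ranges of $\theta$ appearing in the proposition. Then, by Proposition \ref{Prop:Hugoniot_loci}, $\mathcal{H}_-^\eps$ is the graph of a function $\theta=\vartheta_-(\rho)$ with $\vartheta_-'(\rho)>0$ and $\mathcal{H}_+^\eps$ the graph of $\theta=\vartheta_+(\rho)$ with $\vartheta_+'(\rho)<0$, so the chain rule gives $\tfrac{d}{d\rho}\lambda_\pm^\eps(\rho,\vartheta_\pm(\rho))=\partial_\rho\lambda_\pm^\eps+(\partial_\theta\lambda_\pm^\eps)\,\vartheta_\pm'(\rho)$.

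The conclusion then follows by matching signs. On $\mathcal{H}_-^\eps$ with $\theta<\text{cotan}^{-1}(-1/c)$ the term $\partial_\rho\lambda_-^\eps=-c'(\rho)\sin\theta$ is negative, while $(\partial_\theta\lambda_-^\eps)\vartheta_-'(\rho)$ is a negative quantity times a positive one, so $\tfrac{d}{d\rho}\lambda_-^\eps<0$; on $\mathcal{H}_+^\eps$ with $\theta>\text{cotan}^{-1}(1/c)$ the term $\partial_\rho\lambda_+^\eps=c'(\rho)\sin\theta$ is positive and $(\partial_\theta\lambda_+^\eps)\vartheta_+'(\rho)$ is the product of two negative quantities, so $\tfrac{d}{d\rho}\lambda_+^\eps>0$, as claimed. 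There is no substantial obstacle: the only delicate points are the sign bookkeeping — in particular the fact that $\text{cotan}^{-1}$ of a negative number lies in $\,]\pi/2,\pi[\,$, which is what makes the two stated $\theta$-ranges the correct ones — and the monotonicity $c'(\rho)>0$, the single place where the explicit form of $p$ enters. The essential structural input is Proposition \ref{Prop:Hugoniot_loci}, which supplies the sign of $\vartheta_\pm'$; were it unavailable one would instead differentiate $H_\eps(\rho_\ell,\theta_\ell,\rho,\theta)=0$ implicitly to get $\vartheta_\pm'=-\partial_\rho H_\eps/\partial_\theta H_\eps$, which is more cumbersome but leads to the same result.
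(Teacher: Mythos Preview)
Your proof is correct and follows essentially the same route as the paper: parametrize the Hugoniot curves as graphs $\theta=\vartheta_\pm(\rho)$, compute the directional derivative of $\lambda_\pm^\eps$ along them via the chain rule, and conclude from the positivity of $c'(\rho)$ together with the monotonicity of $\vartheta_\pm$ supplied by Proposition~\ref{Prop:Hugoniot_loci}. The paper writes the same computation slightly more tersely (with $\chi^\eps$ in place of your $c$ and without explicitly isolating $\partial_\theta\lambda_\pm^\eps$), but the argument and the ingredients are identical.
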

\begin{proof} Let $g : \rho \in [0,\rho^{\ast}] \rightarrow g(\rho) \in [0,\pi]$ be an arbitrary function. The variation of $\lambda_{\pm}^{\eps}$ on the graph of $g$ is given by
\begin{equation*}
\nabla \lambda_{\pm}^{\eps} \cdot \left(\begin{array}{c} 1\\ g'(\rho)\end{array}\right) = \pm (\chi^{\eps})'(\rho)\sin\theta + g'(\rho)(-\sin\theta \pm \chi^{\eps}(\rho)\cos\theta).
\end{equation*}
where $\chi^{\eps}(\rho) = \sqrt{\eps p'(\rho)\rho}$. Since $\chi'(\rho)$ is positive and the Hugoniot curve $(h_{-}^{\eps})^{-1}$ is increasing, $\lambda_{-}^{\eps}$ is a decreasing function of $\rho$ on this curve for $\theta \in ]0,\pi[$ such that $(\sin\theta + \chi^{\eps}(\rho)\cos\theta) > 0$. Similarly, since the Hugoniot curve $(h_{+}^{\eps})^{-1}$ is decreasing, $\lambda_{+}^{\eps}$ is a increasing function of $\rho$ on this curve for $\theta \in ]0,\pi[$ such that $(-\sin\theta + \chi^{\eps}(\rho)\cos\theta) > 0$.
\qed
\end{proof}
So, in the limit $\eps \rightarrow 0$, the reachable right states are those belonging to the upper half-domain. We denote by $S_{\pm}^{\eps} =  \mathcal{H}_{\pm}^{\eps}\cap\left\{(\rho,\theta),\ \lambda_{\pm}^{\eps}(\rho,\theta) \leq \lambda_{\pm}^{\eps}(\rho_{\ell},\theta_{\ell})\right\}$ the shock curves. 

Concerning the integral curves $\mathcal{O}_{\pm}^{\eps}$, the admissibility conditions select the curves with increasing eigenvalues and so the curves on the lower half-space. Therefore, the rarefaction curves $R_{\pm}^{\eps}$ satisfy $R_{\pm}^{\eps} \subset \mathcal{O}_{\pm}^{\eps}\cap\left\{(\rho,\theta), \theta \in ]0,\theta_{\ell}[\right\}$. The union of the shock and the rarefaction curves form the forward wave curve $W^{f,\eps}_{\pm} = S_{\pm}^{\eps} \cup R_{\pm}^{\eps}$, while the union of their complementary sets form the backward wave curve  $W^{b,\eps}_{\pm} = \mathcal{H}_{\pm}^{\eps}\backslash S_{\pm}^{\eps} \cup \mathcal{O}_{\pm}^{\eps}\backslash R_{\pm}^{\eps}$.

\begin{figure}
\begin{center}

\null
\hfill
\psfrag{rho}{$\rho$}
\psfrag{theta}{$\theta$}
\subfigure[$\eps = 1$]{\includegraphics[width=0.45\textwidth]{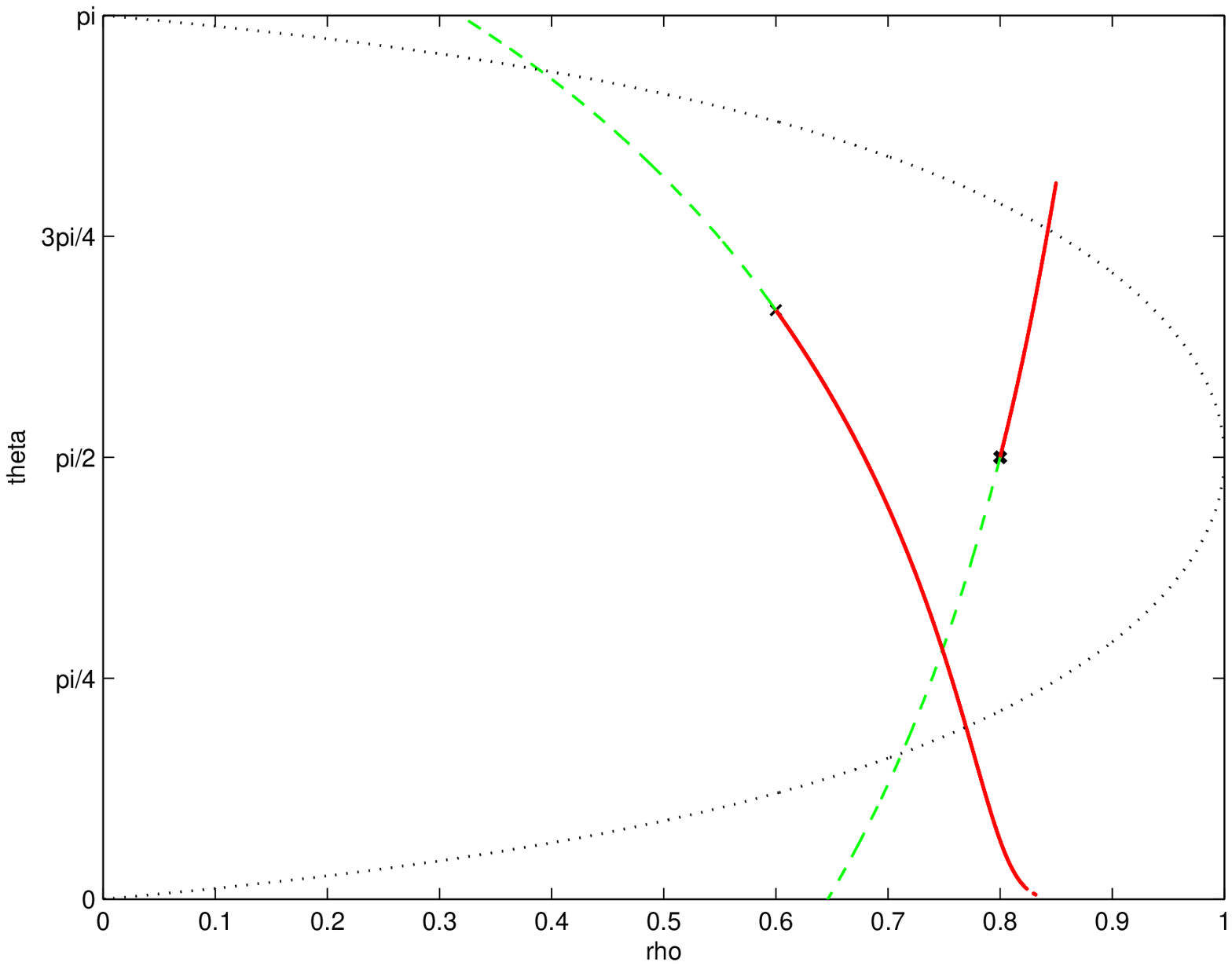}}
\hfill
\subfigure[$\eps = 10^{-1}$]{\includegraphics[width=0.45\textwidth]{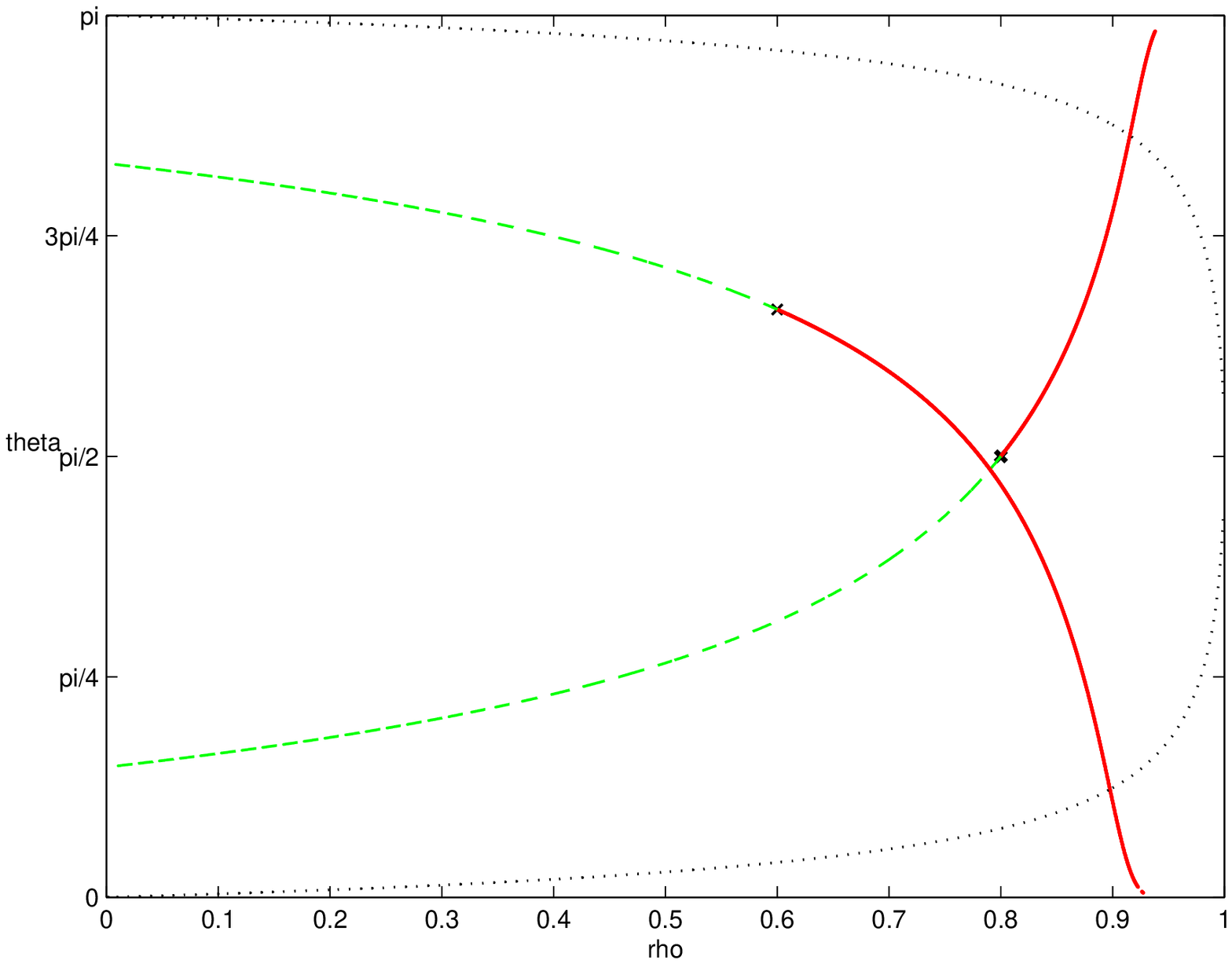}}
\hfill
\null

\null
\hfill
\psfrag{rho}{$\rho$}
\psfrag{theta}{$\theta$}
\subfigure[$\eps = 10^{-2}$]{\includegraphics[width=0.45\textwidth]{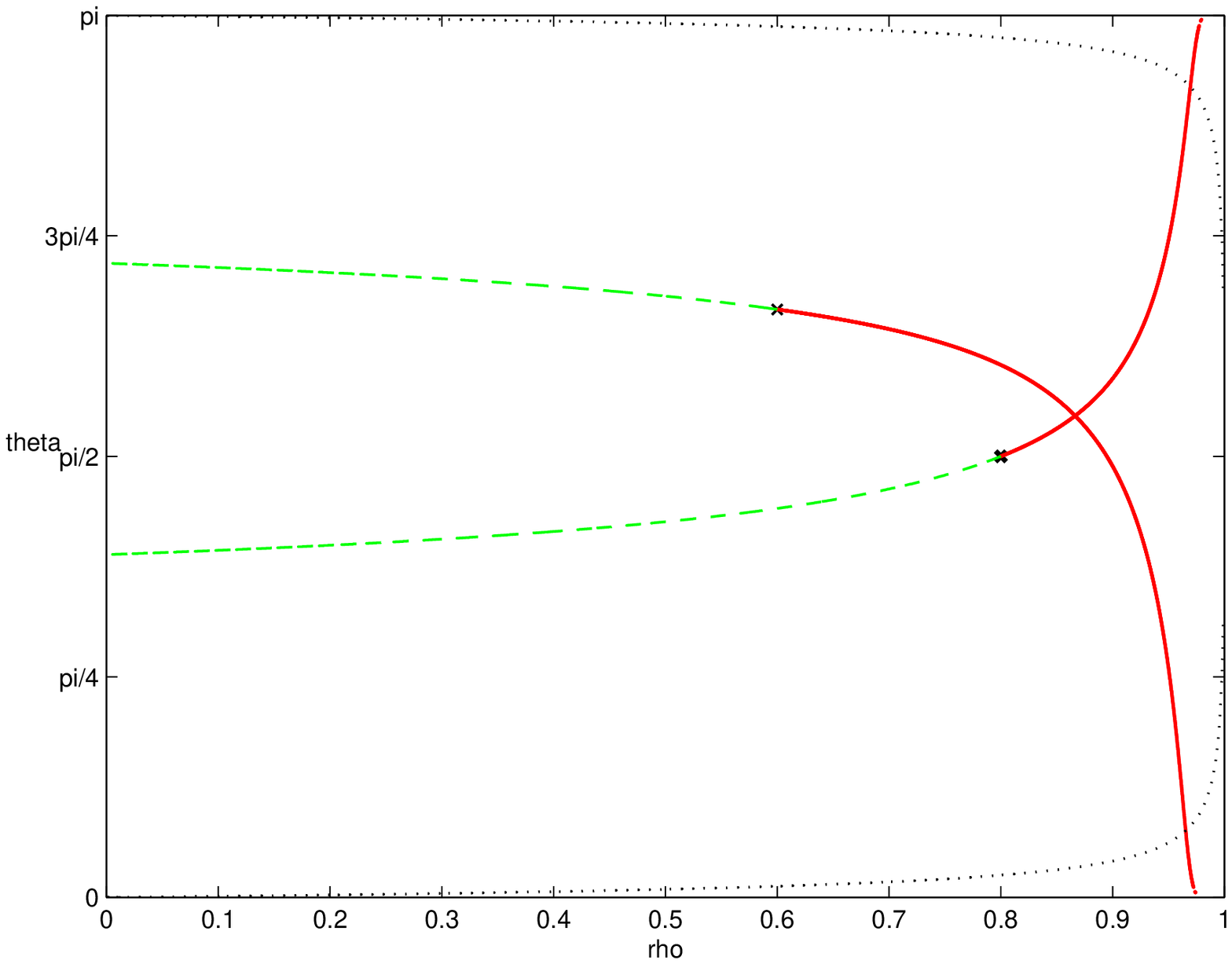}}
\hfill
\psfrag{rho}{$\rho$}
\psfrag{theta}{$\theta$}
\subfigure[$\eps = 10^{-4}$]{\includegraphics[width=0.45\textwidth]{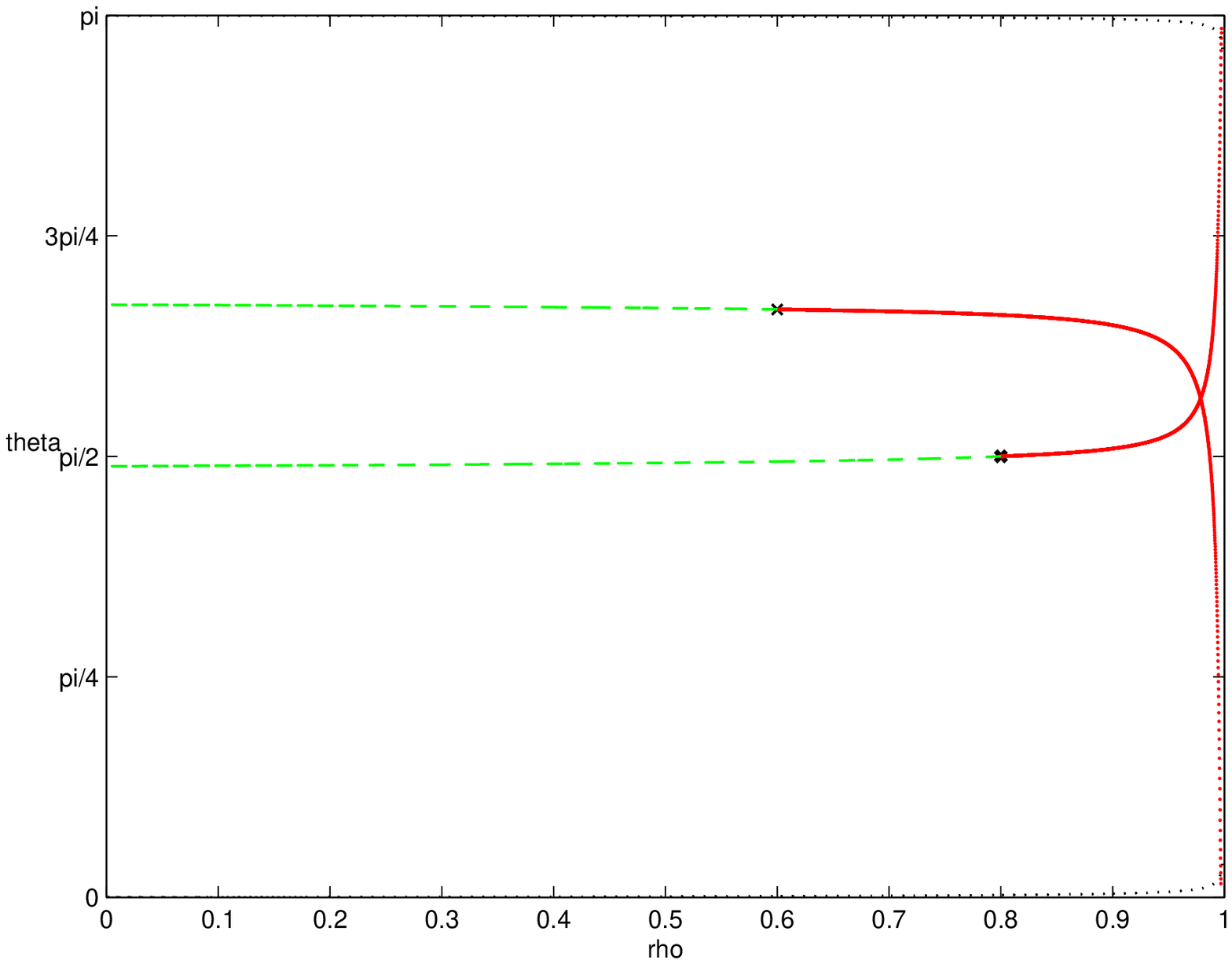}}
\hfill
\null

\caption{Wave curves $W^{f,\eps}_{-}$  for the left state $(\rho_{\ell},\theta_{\ell}) = (0.8,\pi/2)$ and $W^{b,\eps}_{+} $ for the right state $(\rho_{r},\theta_{r}) = (0.6,2\pi/3)$. In dashed green lines: the rarefaction curves. In continuous red lines: the shock curves. In dotted black lines: linearly degenerate sets. $\rho^{\ast} = 1$, $\gamma = 2$.}
\label{Wavecurve}
\end{center}
\end{figure}

\subsubsection{Solutions to the Riemann problem} 

Given a left state $(\rho_{\ell},\theta_{\ell})$ and a right state $(\rho_{r},\theta_{r})$, an entropic solution is found by intersecting the forward 1-wave curve $W^{f,\eps}_{-}$ issued from the left state and the backward 2-wave curve $W^{b,\eps}_{+}$ issued from the right state (cf. fig. \ref{Wavecurve}). In the following study, the curves indexed by - (resp. by +) are implicitly those issued from the left state (resp. from the right state). Because of the monotony of the shock and rarefaction curves, we can classify the different solutions according to the positions of the left and right states in the $(\rho,\theta)$-plane. The following theorem describes the solution of the Riemann problem for small $\eps > 0$ and is illustrated in figure \ref{solriemannGen}. 
\begin{theorem} \label{Thm:Riemann_Small_Epsilon} Considering a left state $(\rho_{\ell},\theta_{\ell})$ and a right state $(\rho_{r},\theta_{r})$, and for $\eps$ small enough, the solution is given by one of the four following cases:
\begin{enumerate}
\item \textbf{Case $\theta_{\ell} = \theta_{r}$.} If $\rho_{\ell} < \rho_{r}$ [resp. $\rho_{\ell} > \rho_{r}$],  the solution consists of a 1-shock [resp. 1-rarefaction] connecting $(\rho_{\ell},\theta_{\ell})$ to  $(\widetilde{\rho},\widetilde{\theta})$ (with $\widetilde{\rho} \in \left]\rho_{\ell},\rho_{r}\right[$ and $\widetilde{\theta} > \theta_{\ell} = \theta_{r}$ [resp. $\widetilde{\rho} \in \left]\rho_{r},\rho_{\ell}\right[$ and $\widetilde{\theta} < \theta_{\ell} = \theta_{r}$]) and then a 2-rarefaction [resp. 2-shock] connecting $(\widetilde{\rho},\widetilde{\theta})$ to $(\rho_{\ell},\theta_{r})$. This is summarized in the following diagram:
 \begin{eqnarray*}
 &&(\rho_{\ell},\theta_{\ell})\quad \stackrel{\mbox{shock}}{\longrightarrow}\quad (\widetilde{\rho},\widetilde{\theta}) \quad\stackrel{\mbox{rarefaction}}{\longrightarrow}\quad (\rho_{r},\theta_{r})\quad\quad \mbox{ if } \rho_{\ell} < \rho_{r} \\
 &&(\rho_{\ell},\theta_{\ell}) \quad\stackrel{\mbox{rarefaction}}{\longrightarrow}\quad (\widetilde{\rho},\widetilde{\theta}) \quad\stackrel{\mbox{shock}}{\longrightarrow}\quad (\rho_{r},\theta_{r})\quad\quad \mbox{ if } \rho_{\ell} > \rho_{r}
 \end{eqnarray*}

	\item \textbf{Case $\theta_{\ell} > \theta_{r}$ ($\cos\theta_{\ell} < \cos\theta_{r}$).} The solution consists of a 1-rarefaction connecting $(\rho_{\ell},\theta_{\ell})$ to  $(0,\widetilde{\theta})$ (with $\widetilde{\rho} < \rho_{\ell}, \rho_{r}$ and $\widetilde{\theta} \in \left]\theta_{r},\theta_{\ell}\right[$) and then a 2-rarefaction wave connecting $(0,\widetilde{\theta})$ to $(\rho_{r},\theta_{r})$. We get the following diagram:
\begin{equation*}
 (\rho_{\ell},\theta_{\ell}) \quad\stackrel{\mbox{rarefaction}}{\longrightarrow}\quad (0,\widetilde{\theta}) \quad\stackrel{\mbox{vacuum}}{\longrightarrow}\quad (0,\widetilde{\widetilde{\theta}}) \quad\stackrel{\mbox{rarefaction}}{\longrightarrow}\quad (\rho_{r},\theta_{r}).
\end{equation*} 

	\item \textbf{Case $\theta_{\ell} < \theta_{r}$ ($\cos\theta_{\ell} > \cos\theta_{r}$).} There are two sub-cases:
\begin{itemize} 
\item if $\rho_{r}^{\eps} < (h_{-}^{\eps})^{-1}(\theta_{r}^{\eps})$ and $\rho_{\ell}^{\eps} < (h_{+}^{\eps})^{-1}(\theta_{\ell}^{\eps})$, the solution consists of a 1-shock connecting $(\rho_{\ell},\theta_{\ell})$ to  $(\widetilde{\rho},\widetilde{\theta})$ (with $\widetilde{\rho} > \rho_{\ell}, \rho_{r}$ and $\widetilde{\theta} \in \left]\theta_{r},\theta_{\ell}\right[$) and then a 2-shock connecting $(\widetilde{\rho},\widetilde{\theta})$ to $(\rho_{r},\theta_{r})$. The diagram is:
\begin{equation*}
(\rho_{\ell},\theta_{\ell}) \quad\stackrel{\mbox{shock}}{\longrightarrow}\quad (\widetilde{\rho},\widetilde{\theta}) \quad\stackrel{\mbox{shock}}{\longrightarrow}\quad (\rho_{r},\theta_{r}) 
 \end{equation*}
 
\item if $\rho_{r}^{\eps} > (h_{-}^{\eps})^{-1}(\theta_{r}^{\eps})$ [resp. $\rho_{\ell}^{\eps} > (h_{+}^{\eps})^{-1}(\theta_{\ell}^{\eps})$], the solution consists of a 1-shock [resp. 1-rarefaction] connecting $(\rho_{\ell},\theta_{\ell})$ to  $(\widetilde{\rho},\widetilde{\theta})$ (with $\widetilde{\rho} \in ]\rho_{\ell},\rho_{r}[$ and $\widetilde{\theta} > \theta_{r}$ [resp. $\widetilde{\rho} \in ]\rho_{r},\rho_{\ell}[$ and $\widetilde{\theta} < \theta_{\ell}$]) and then a 2-rarefaction [resp. 2-shock] connecting $(\widetilde{\rho},\widetilde{\theta})$ to $(\rho_{r},\theta_{r})$. The diagram is as follows:
\begin{eqnarray*}
&&(\rho_{\ell},\theta_{\ell}) \quad\stackrel{\mbox{shock}}{\longrightarrow}\quad (\widetilde{\rho},\widetilde{\theta}) \quad\stackrel{\mbox{rarefaction}}{\longrightarrow}\quad (\rho_{r},\theta_{r})\quad\quad \mbox{ if } \rho_{\ell} < \rho_{r} \\\\ 
&&(\rho_{\ell},\theta_{\ell}) \quad\stackrel{\mbox{rarefaction}}{\longrightarrow}\quad (\widetilde{\rho},\widetilde{\theta}) \quad\stackrel{\mbox{shock}}{\longrightarrow}\quad (\rho_{r},\theta_{r})\quad\quad \mbox{ if } \rho_{\ell} > \rho_{r} \\ 
 \end{eqnarray*}
\end{itemize}
\end{enumerate}
\end{theorem}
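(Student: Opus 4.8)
The plan is to run the classical Lax construction for genuinely nonlinear systems and then to exploit the monotonicity and asymptotic estimates gathered in Propositions~\ref{Prop:Hugoniot_loci} and~\ref{Prop:Integral_curve}. First I would note that, for $\eps$ small enough, none of the states that can occur in the solution lies on the linearly degenerate manifolds $\mathcal{C}^{\eps}_{\pm}$: even a state converging to the congested set does so only at rate $O(\eps^{1/\gamma})$, which is slower than the $O(\eps^{1/(\gamma-1)})$ at which $\mathcal{C}^{\eps}_{\pm}$ approaches $\{\rho=\rho^{\ast}\}$, so all relevant fields are genuinely nonlinear. Standard conservation law theory then says that an entropic Riemann solution consists of a $1$-wave ($1$-shock on $S_{-}^{\eps}$ or $1$-rarefaction on $R_{-}^{\eps}$) from $(\rho_{\ell},\theta_{\ell})$ to an intermediate state $(\widetilde{\rho},\widetilde{\theta})$, followed by a $2$-wave ($2$-shock on $\mathcal{H}_{+}^{\eps}\setminus S_{+}^{\eps}$ or $2$-rarefaction on $\mathcal{O}_{+}^{\eps}\setminus R_{+}^{\eps}$) from $(\widetilde{\rho},\widetilde{\theta})$ to $(\rho_{r},\theta_{r})$. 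Hence the theorem reduces to locating the intersection of the forward $1$-wave curve $W^{f,\eps}_{-}$ issued from the left state with the backward $2$-wave curve $W^{b,\eps}_{+}$ issued from the right state, and to reading off on which branch of each curve that intersection sits.

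Next I would record the global geometry of the two wave curves. By Propositions~\ref{Prop:Hugoniot_loci} and~\ref{Prop:Integral_curve}, $W^{f,\eps}_{-}$ is a single, strictly increasing curve in the $(\rho,\theta)$-plane through $(\rho_{\ell},\theta_{\ell})$: for $\theta>\theta_{\ell}$ it is the $1$-shock curve ($\rho>\rho_{\ell}$) and for $\theta<\theta_{\ell}$ the $1$-rarefaction curve ($\rho<\rho_{\ell}$); symmetrically $W^{b,\eps}_{+}$ is strictly decreasing through $(\rho_{r},\theta_{r})$, on the backward $2$-shock branch for $\theta<\theta_{r}$ ($\rho>\rho_{r}$) and on the $2$-rarefaction branch for $\theta>\theta_{r}$ ($\rho<\rho_{r}$). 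An increasing and a decreasing curve meet in at most one point, so the intermediate state is unique when it exists at positive density, and its position relative to $\theta_{\ell}$ and $\theta_{r}$ fixes the wave types: $\widetilde{\theta}>\theta_{\ell}$ gives a $1$-shock and $\widetilde{\theta}<\theta_{\ell}$ a $1$-rarefaction, while $\widetilde{\theta}<\theta_{r}$ gives a $2$-shock and $\widetilde{\theta}>\theta_{r}$ a $2$-rarefaction. This already settles Case~1: when $\theta_{\ell}=\theta_{r}$ the two curves start at the same $\theta$ but at $\rho_{\ell}$ and $\rho_{r}$, so if $\rho_{\ell}<\rho_{r}$ they can only cross at some $\widetilde{\theta}>\theta_{\ell}$ with $\widetilde{\rho}$ between $\rho_{\ell}$ and $\rho_{r}$, i.e. a $1$-shock then a $2$-rarefaction, and conversely if $\rho_{\ell}>\rho_{r}$.

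For the remaining cases I would invoke the convergence statements of Propositions~\ref{Prop:Hugoniot_loci} and~\ref{Prop:Integral_curve}: as $\eps\to 0$, $W^{f,\eps}_{-}$ collapses onto $\{\theta=\theta_{\ell}\}\cup\{\rho=\rho^{\ast}\}$ and $W^{b,\eps}_{+}$ onto $\{\theta=\theta_{r}\}\cup\{\rho=\rho^{\ast}\}$, the rarefaction branches reaching $\rho=0$ only after a $\theta$-excursion that vanishes with $\eps$ (part~3 of Proposition~\ref{Prop:Integral_curve}). If $\theta_{\ell}>\theta_{r}$, then for $\eps$ small the two near-vertical segments at $\theta\approx\theta_{\ell}$ and $\theta\approx\theta_{r}$ are disjoint, and their horizontal parts lie in $\{\theta>\theta_{\ell}\}$ and $\{\theta<\theta_{r}\}$ respectively, hence are also disjoint; so no positive-density intersection survives and the only admissible connection follows the $1$-rarefaction down to a vacuum state $(0,\widetilde{\theta})$ with $\widetilde{\theta}<\theta_{\ell}$, then a constant $\rho=0$ state, then a $2$-rarefaction back up from $(0,\widetilde{\widetilde{\theta}})$ with $\theta_{r}<\widetilde{\widetilde{\theta}}<\widetilde{\theta}$, which is Case~2 (consistent with $\cos\theta_{\ell}<\cos\theta_{r}$, i.e. the two regions moving apart in the normal direction). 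If $\theta_{\ell}<\theta_{r}$ the intersection persists at a positive density close to $\rho^{\ast}$; one then splits according to whether the right state lies below the forward Hugoniot curve at $\theta_{r}$ and the left state below the backward Hugoniot curve at $\theta_{\ell}$ (the conditions $\rho_{r}^{\eps}<(h_{-}^{\eps})^{-1}(\theta_{r}^{\eps})$ and $\rho_{\ell}^{\eps}<(h_{+}^{\eps})^{-1}(\theta_{\ell}^{\eps})$), in which case $\widetilde{\theta}$ lies strictly between $\theta_{\ell}$ and $\theta_{r}$ and both waves are shocks, or one of them fails, in which case, depending on the sign of $\rho_{\ell}-\rho_{r}$, one wave becomes a rarefaction exactly as in Case~1.

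I expect the main obstacle to be the bookkeeping in Case~3: proving that the listed sub-cases are exhaustive and mutually exclusive, and that the signs $\widetilde{\rho}-\rho_{\ell}$, $\widetilde{\rho}-\rho_{r}$ and the location of $\widetilde{\theta}$ relative to $\theta_{\ell}$ and $\theta_{r}$ are forced. This requires a careful ordering, near the two states, of the four curves $h^{\eps}_{\pm}$ and $i^{\eps}_{\pm}$ (not just their individual monotonicity), together with the quantitative estimates of Proposition~\ref{Prop:Integral_curve} ensuring that the intermediate state never leaves the admissible strip $\rho\in[0,\rho^{\ast}]$, $\theta\in\,]0,\pi[$ once $\eps$ is small. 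The smallness of $\eps$ enters precisely so that the limiting picture (curves close to $\{\theta=\mathrm{const}\}\cup\{\rho=\rho^{\ast}\}$) controls the true configuration; making that control quantitative is the other place where a little work is needed, but it is routine given Propositions~\ref{Prop:Hugoniot_loci} and~\ref{Prop:Integral_curve}.
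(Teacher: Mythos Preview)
Your proposal is correct and follows essentially the same route as the paper's proof: locate the intersection of the forward $1$-wave curve and the backward $2$-wave curve using the global monotonicity from Propositions~\ref{Prop:Hugoniot_loci} and~\ref{Prop:Integral_curve}, and let the asymptotic convergence of these curves to $\{\theta=\theta_{\ell,r}\}\cup\{\rho=\rho^{\ast}\}$ decide the case analysis. The only presentational difference is that the paper packages the intersection argument by studying the scalar increasing function $\theta\mapsto w_{-}^{\eps}(\theta)-w_{+}^{\eps}(\theta)$ and checking its sign at the endpoints $\theta_{\ell}$, $\theta_{r}$, $(i_{-}^{\eps})^{-1}(0)$, which gives exactly the ordering you describe geometrically and handles the Case~3 bookkeeping you flagged.
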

The detailed proof of this theorem is developed in appendix \ref{Appendix:Riemann_Small_Epsilon}. Let us provide some ideas of the proof. For finite $\eps$, there exist four kinds of solutions depending on what parts of the curves $W_{-}^{f,\eps}$ and $W_{+}^{b,\eps}$ meet. So, for a fixed left state, the state-space is divided in four subdomains. These subdomains depends on the left state. However, reminding that the limit of the Hugoniot and integral curves are straight lines $\theta = \theta_{\ell}$ or $\rho = \rho^{\ast}$ (cf. propositions \ref{Prop:Hugoniot_loci} and \ref{Prop:Integral_curve}) for all left states, the four subdomains have the same behaviour as $\eps \rightarrow 0$  whatever the left state is.     

\begin{figure}
\begin{center}
\null
\hfill
\psfrag{t}{$t$}
\psfrag{x1}{$x_{1}$}
\psfrag{etat gauche}{$\theta_{\ell}, \rho_{\ell}$}
\psfrag{etat milieu}{\begin{tabular}{l} $\widetilde{\theta} > \theta_{\ell} = \theta_{r}$\\
$\widetilde{\rho} \in \left]\rho_{\ell}, \rho_{r}\right[$ \end{tabular}}
\psfrag{etat droit}{$\theta_{r}, \rho_{r}$}
\subfigure[Case $\theta_{\ell} = \theta_{r}$, $\rho_{\ell} < \rho_{r}$]{\includegraphics[width=0.45\textwidth]{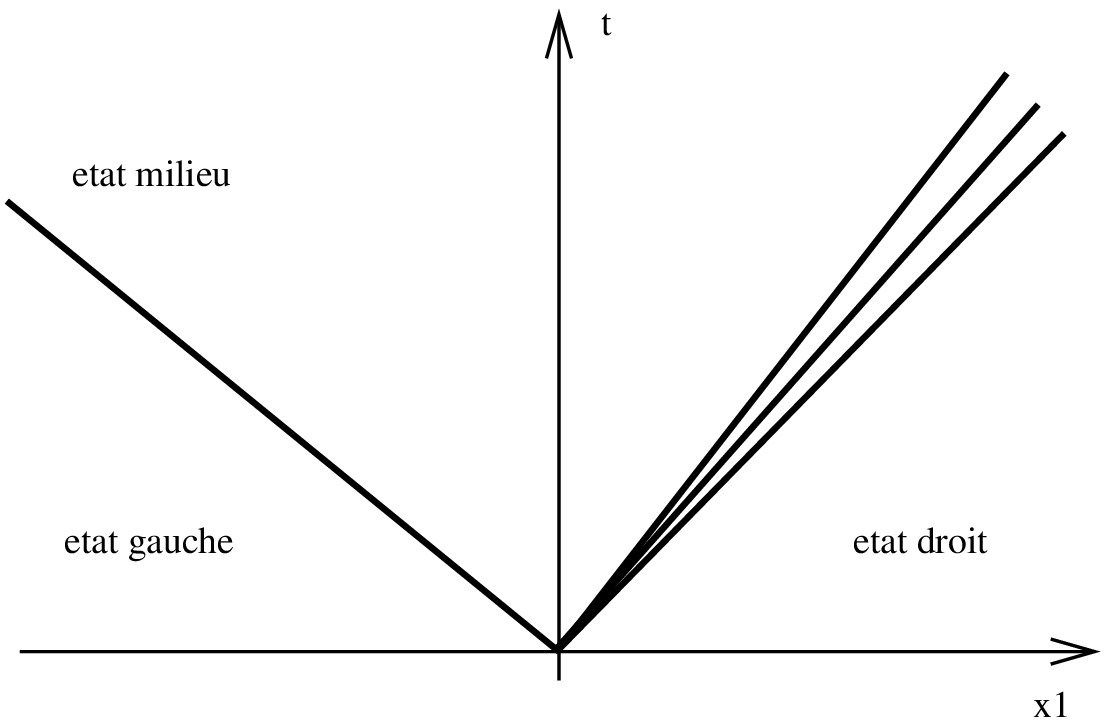}}
\hfill
\psfrag{t}{$t$}
\psfrag{x1}{$x_{1}$}
\psfrag{etat gauche}{$\theta_{\ell}, \rho_{\ell}$}
\psfrag{etat milieu}{\begin{tabular}{l} $\widetilde{\theta} < \theta_{\ell} = \theta_{r}$\\
$\widetilde{\rho} \in \left]\rho_{r}, \rho_{\ell}\right[$ \end{tabular}}
\psfrag{etat droit}{$\theta_{r}, \rho_{r}$}
\subfigure[Case $\theta_{\ell} = \theta_{r}$, $\rho_{\ell} > \rho_{r}$]{\includegraphics[width=0.45\textwidth]{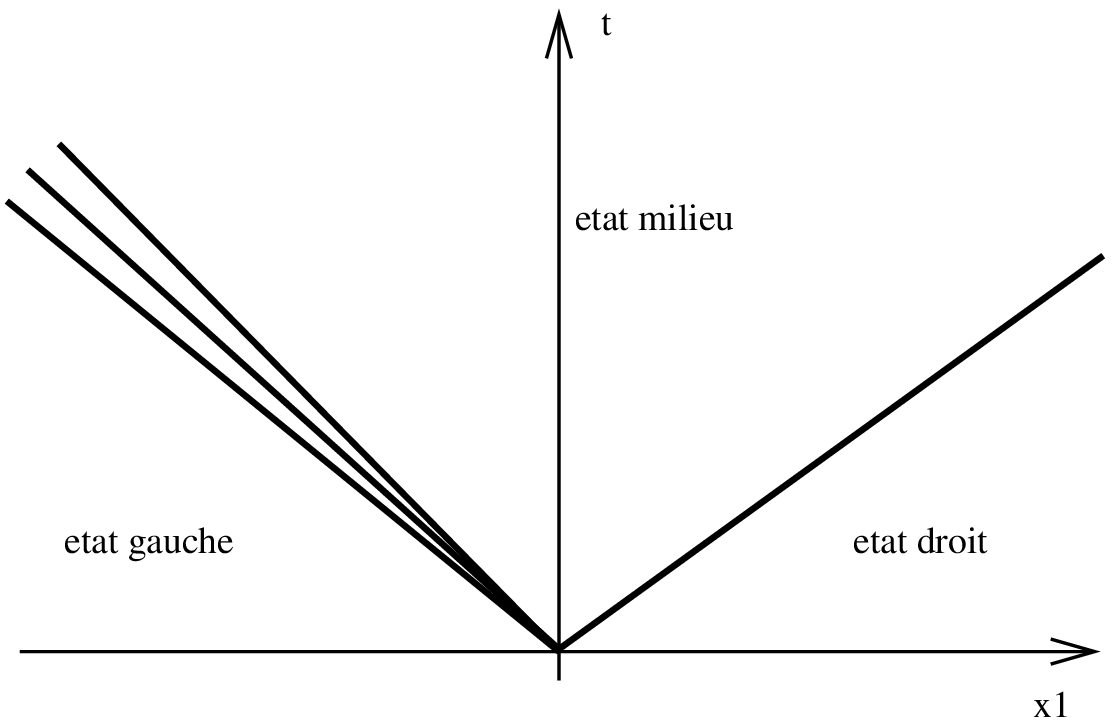}}
\hfill
\null

\null
\hfill
\psfrag{t}{$t$}
\psfrag{x1}{$x_{1}$}
\psfrag{etat gauche}{$\theta_{\ell}, \rho_{\ell}$}
\psfrag{etat milieu}{Vacuum}
\psfrag{etat droit}{$\theta_{r}, \rho_{r}$}
\subfigure[Case $\theta_{\ell} > \theta_{r}$]{\includegraphics[width=0.45\textwidth]{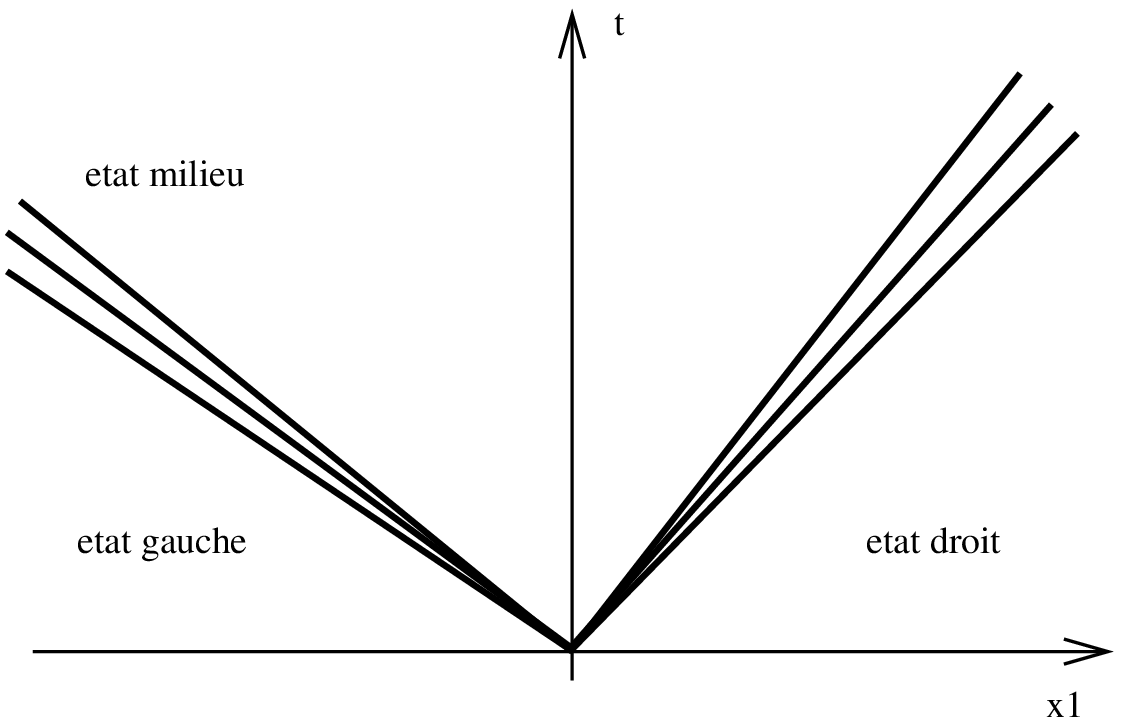}}
\hfill
\psfrag{t}{$t$}
\psfrag{x1}{$x_{1}$}
\psfrag{etat gauche}{$\theta_{\ell}, \rho_{\ell}$}
\psfrag{etat milieu}{\begin{tabular}{l} $\widetilde{\theta} \in \left]\theta_{\ell}, \theta_{r}\right[$\\
$\widetilde{\rho} > \rho_{\ell}, \rho_{r}$ \end{tabular}}
\psfrag{etat droit}{$\theta_{r}, \rho_{r}$}
\subfigure[Case $\theta_{\ell} < \theta_{r}$, first subcase]{\includegraphics[width=0.45\textwidth]{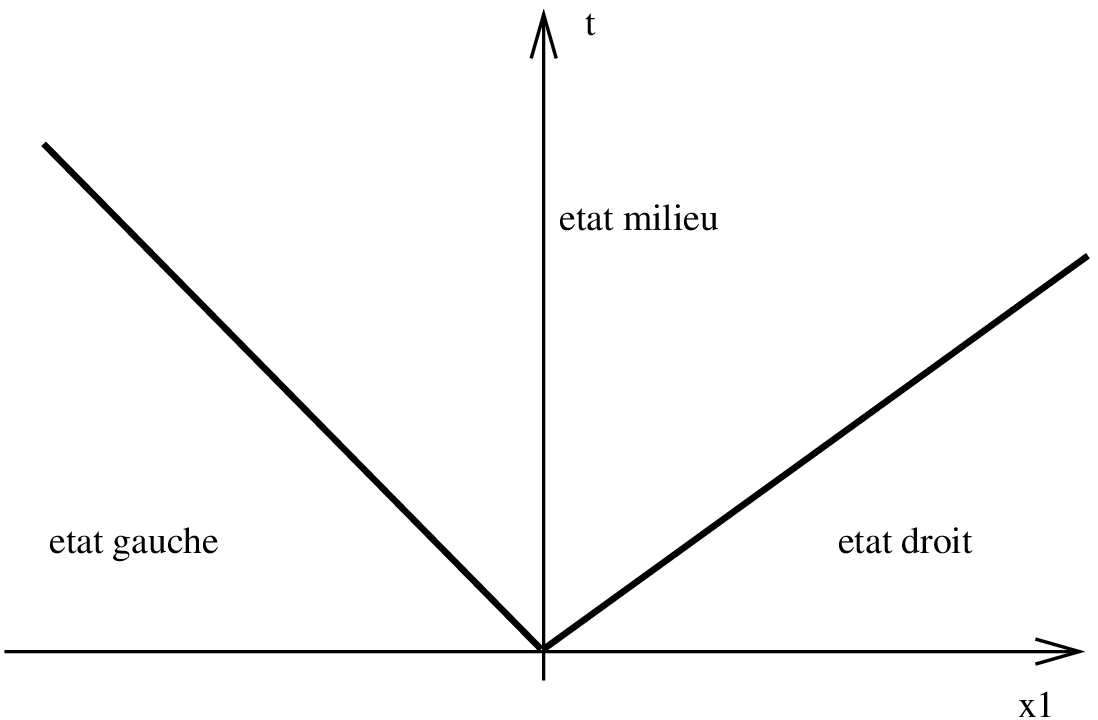}}
\hfill
\null

\null
\hfill
\psfrag{t}{$t$}
\psfrag{x1}{$x_{1}$}
\psfrag{etat gauche}{$\theta_{\ell}, \rho_{\ell}$}
\psfrag{etat milieu}{\begin{tabular}{l} $\widetilde{\theta} > \theta_{\ell}, \theta_{r}$\\
$\widetilde{\rho} \in \left]\rho_{\ell}, \rho_{r}\right[$ \end{tabular}}
\psfrag{etat droit}{$\theta_{r}, \rho_{r}$}
\subfigure[Case $\theta_{\ell} < \theta_{r}$, second subcase, $\rho_{\ell} < \rho_{r}$]{\includegraphics[width=0.45\textwidth]{solriemann.eps}}
\hfill
\psfrag{t}{$t$}
\psfrag{x1}{$x_{1}$}
\psfrag{etat gauche}{$\theta_{\ell}, \rho_{\ell}$}
\psfrag{etat milieu}{\begin{tabular}{l} $\widetilde{\theta} < \theta_{\ell},\theta_{r}$\\
$\widetilde{\rho} \in \left]\rho_{r}, \rho_{\ell}\right[$ \end{tabular}}
\psfrag{etat droit}{$\theta_{r}, \rho_{r}$}
\subfigure[Case $\theta_{\ell} < \theta_{r}$, second subcase, $\rho_{\ell} > \rho_{r}$]{\includegraphics[width=0.45\textwidth]{solriemann2.eps}}
\hfill
\null

\caption{Solutions to the Riemann problem for small $\eps > 0$.}
\label{solriemannGen}
\end{center}
\end{figure}

\subsubsection{The sign of $\theta$}
\label{Sec:Full_Riemann}

The conservative system (\ref{Eq:rho_1D_cons})-(\ref{Eq:theta_1D_cons}) does not determine the sign of $\theta$ (if $\theta$ is supposed to be in $]-\pi,\pi[$). As mentioned above, this is not important since our main goal is to provide connection conditions on $\bar{p}$ between the left and right states. However, it is desirable to determine it in the present analysis, for the sake of completeness. For this goal, we cannot use (\ref{Eq:theta_1D_cons}) because $\Psi(\cos\theta)$ is an even function of $\theta$. Again, we are facing an indetermination due to the non-conservative character of the system. One possible solution is to introduce a contact discontinuity from $\theta$ to $-\theta$ with propagation speed $\cos\theta$ in the domains where $\rho$ is constant and $\cos\theta$ is continuous. If we add such a contact wave, there is only one possible construction given by the following: 
    
\begin{proposition} Suppose that $\theta_{\ell}$, $\theta_{r} \in [-\pi,\pi]$ and $\theta_{\ell}$, $\theta_{r}$ have different signs. 
\begin{enumerate} 
\item In the subcases $\cos\theta_{\ell} = \cos\theta_{r}$ and $\cos\theta_{\ell} > \cos\theta_{r}$ of theorem \ref{Thm:Riemann_Small_Epsilon}, the only one possible contact wave in the domains of constant $\rho$ and continuous $\cos\theta$ is located inside the intermediate state and the propogation speed equals $\cos\tilde{\theta}$. 
\item In the subcase $\cos\theta_{\ell} < \cos\theta_{r}$, the possible contact waves are those located in the vacuum domain. There is no uniqueness of the propagation speed but since this contact discontinuity occurs in the vacuum $\rho = 0$ region, we may consider that $\theta$ is not defined in this region. 
\end{enumerate}
\label{Prop:Riemann_sign}
\end{proposition}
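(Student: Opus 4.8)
The strategy is to exploit the explicit wave structure of Theorem~\ref{Thm:Riemann_Small_Epsilon}: I will list all the zones of the self-similar solution in which a $\theta\mapsto-\theta$ contact discontinuity is admissible -- i.e. the zones where $\rho$ is constant and $\cos\theta$ is continuous -- and then determine which of them can carry such a wave and at which speed.

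Recall first that, for $\eps$ small, the solution of (\ref{Eq:rho_1D_cons})-(\ref{Eq:theta_1D_cons}) always consists of the left state $(\rho_\ell,\theta_\ell)$, a $1$-wave, a constant intermediate state $(\tilde\rho,\tilde\theta)$ (with $\tilde\rho=0$ exactly in the subcase $\cos\theta_\ell<\cos\theta_r$, where the intermediate state is vacuum), a $2$-wave, and the right state $(\rho_r,\theta_r)$. Across a nontrivial shock the density jumps, since the Hugoniot curves $\mathcal H^\eps_\pm$ are strictly monotone in the $(\rho,\theta)$-plane (Proposition~\ref{Prop:Hugoniot_loci}); inside a rarefaction the density varies continuously, since the integral curves $\mathcal O^\eps_\pm$ are strictly monotone as well (Proposition~\ref{Prop:Integral_curve}). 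Hence the only zones where a contact wave may be inserted are the three constant-state zones $\{x/t<s_1^-\}$, $\{s_1^+<x/t<s_2^-\}$ and $\{x/t>s_2^+\}$, with $s_1^-\le s_1^+$ (resp. $s_2^-\le s_2^+$) the leftmost and rightmost speeds of the $1$-wave (resp. $2$-wave).

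The heart of the argument is a comparison between these edge speeds and the normal velocity $\cos\theta$ of the neighbouring constant state. For a rarefaction this is immediate: $\lambda_-^\eps(\rho,\theta)=\cos\theta-\sqrt{\eps p'(\rho)\rho}\,|\sin\theta|<\cos\theta<\cos\theta+\sqrt{\eps p'(\rho)\rho}\,|\sin\theta|=\lambda_+^\eps(\rho,\theta)$ for $0<\rho<\rho^\ast$, $\theta\in\,]0,\pi[$. For a shock between states $A$ and $B$, eliminating $\sigma$ in the Rankine-Hugoniot relations yields
\begin{equation*}
\sigma-\cos\theta_A=\frac{\rho_B\,(\cos\theta_B-\cos\theta_A)}{\rho_B-\rho_A},\qquad
\sigma-\cos\theta_B=\frac{\rho_A\,(\cos\theta_B-\cos\theta_A)}{\rho_B-\rho_A},
\end{equation*}
so both differences share the sign of $(\cos\theta_B-\cos\theta_A)(\rho_B-\rho_A)$. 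On $\mathcal H^\eps_-$, $\rho$ and $\theta$ increase together so $\cos\theta$ decreases and this product is negative, whence every $1$-shock has $\sigma_1<\min(\cos\theta_\ell,\cos\tilde\theta)$; on $\mathcal H^\eps_+$ the product is positive, whence every $2$-shock has $\sigma_2>\max(\cos\tilde\theta,\cos\theta_r)$. Combining the shock and rarefaction cases gives, whenever $\tilde\rho>0$,
\begin{equation*}
s_1^-<\cos\theta_\ell,\qquad s_1^+<\cos\tilde\theta<s_2^-,\qquad s_2^+>\cos\theta_r .
\end{equation*}
A contact wave in the left (resp. right) zone would have to propagate at $\cos\theta_\ell$ (resp. $\cos\theta_r$), which lies outside that zone by the first (resp. last) inequality, so it is impossible; in the intermediate zone it must propagate at $\cos\tilde\theta$, and the middle inequality shows that $\cos\tilde\theta$ lies strictly inside $\{s_1^+<x/t<s_2^-\}$, so the flip is admissible there and is the unique possibility. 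This settles part~1. For part~2 one has $\tilde\rho=0$; the left- and right-zone speed obstructions are unchanged and the two rarefactions have non-constant $\rho$, so the flip can only live in the vacuum zone. There the equation for $\theta$ in (\ref{Eq:rho_1D_cons})-(\ref{Eq:theta_1D_cons}) degenerates (all $\rho$-dependent terms vanish when $\rho\equiv0$), $\theta$ is undetermined, and a flip may be placed at any speed; since $\rho=0$, we simply regard $\theta$ as undefined in that zone.

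The main difficulty is the double inequality $s_1^+<\cos\tilde\theta<s_2^-$ for shock-type $1$- and $2$-waves: the Lax conditions alone only bound $\sigma_1,\sigma_2$ by the eigenvalues at the intermediate state in the wrong direction, and one genuinely needs the two algebraic identities above together with the monotonicity of the Hugoniot curves (Proposition~\ref{Prop:Hugoniot_loci}). The remaining verifications -- that a nontrivial shock or rarefaction has non-constant $\rho$, the degenerate sub-cases $\rho_\ell=0$ or $\rho_r=0$ in part~2, and the bookkeeping that a single flip is needed because $\theta_\ell$ and $\theta_r$ have opposite signs -- are routine.
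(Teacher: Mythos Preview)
Your proof is correct and follows essentially the same route as the paper's own argument: you rule out the left and right constant-state zones by comparing $\cos\theta_\ell$, $\cos\theta_r$ with the edge speeds of the adjacent waves (using $\lambda_\pm^\eps = \cos\theta \pm \sqrt{\eps p'(\rho)\rho}\,|\sin\theta|$ for rarefactions and the algebraic identities $\sigma-\cos\theta_{A,B}=\rho_{B,A}(\cos\theta_B-\cos\theta_A)/(\rho_B-\rho_A)$ together with the monotonicity of the Hugoniot curves for shocks), and then verify admissibility in the intermediate zone. Your presentation is somewhat more unified---you derive the inequalities $s_1^-<\cos\theta_\ell$, $s_1^+<\cos\tilde\theta<s_2^-$, $s_2^+>\cos\theta_r$ in one stroke rather than case by case---but the substance is identical to the paper's proof.
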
    
The proof of this proposition can be found in appendix \ref{Appendix:Riemann_sign}. Two cases of the Riemann problem with $\theta_{r} < 0 < \theta_{\ell}$ are represented in Fig. \ref{solriemannGen_complet}. Note that the position of the contact wave does not depend on $\eps$. So their limits as $\eps$ goes to zero are easily obtained. 

\begin{figure}
\begin{center}
\null
\hfill
\psfrag{t}{$t$}
\psfrag{x1}{$x_{1}$}
\psfrag{etat gauche}{$\theta_{\ell}, \rho_{\ell}$}
\psfrag{etat milieu}{Vacuum}
\psfrag{etat droit}{$\theta_{r}, \rho_{r}$}
\psfrag{th}{$\tilde{\theta}$}
\psfrag{-th}{$-\tilde{\theta}$}
\subfigure[Case $\theta_{\ell} > |\theta_{r}|$]{\includegraphics[width=0.45\textwidth]{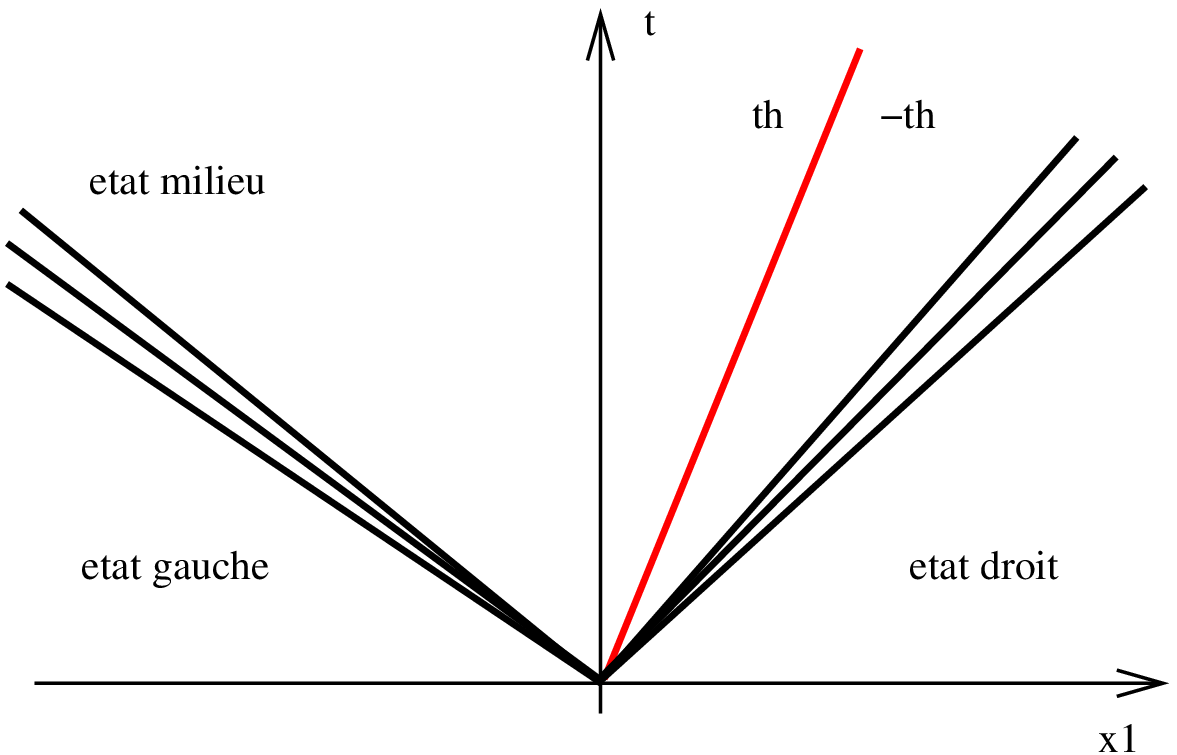}}
\hfill
\psfrag{t}{$t$}
\psfrag{x1}{$x_{1}$}
\psfrag{etat gauche}{$\theta_{\ell}, \rho_{\ell}$}
\psfrag{etat milieu}{\begin{tabular}{l} $\widetilde{\theta} \in \left]\theta_{\ell}, |\theta_{r}|\right[$\\
$\widetilde{\rho} > \rho_{\ell}, \rho_{r}$ \end{tabular}}
\psfrag{etat milieu bis}{\begin{tabular}{l} $-\widetilde{\theta}$\\
$\widetilde{\rho}$ \end{tabular}}
\psfrag{etat droit}{$\theta_{r}, \rho_{r}$}
\psfrag{costh}{$\cos\tilde{\theta}$}
\subfigure[Case $\theta_{\ell} < |\theta_{r}|$, first subcase]{\includegraphics[width=0.45\textwidth]{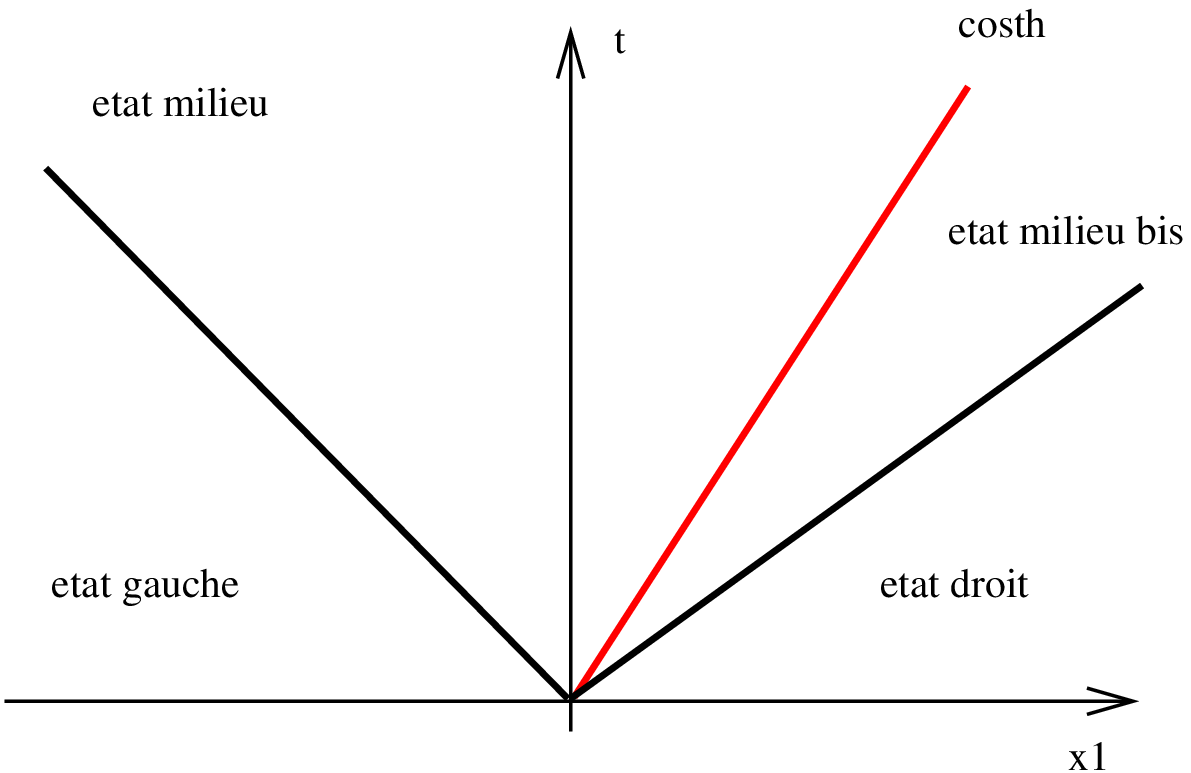}}
\hfill
\null

\caption{Some solutions to the Riemann problem for small $\eps > 0$ and $ -\pi < \theta_{r} < 0 < \theta_{\ell} < \pi$.}
\label{solriemannGen_complet}
\end{center}
\end{figure}

\subsection{The solutions of the Riemann problem in the limit $\eps \rightarrow 0$}

In order to study the limit $\eps \rightarrow 0$, we introduce converging sequences of left and right states
\begin{equation*}
\left((\rho_{\ell}^{\eps},\theta_{\ell}^{\eps}), (\rho_{r}^{\eps},\theta_{r}^{\eps})\right)\quad \underset{\eps \rightarrow 0}{\longrightarrow}\quad ((\rho_{\ell},\theta_{\ell}),(\rho_{r},\theta_{r}))
\end{equation*}
and we look for the limits of the solutions of the associated Riemann problems. There are three cases to consider: either none of the two states tends to the congested state ($\rho_{\ell},\rho_{r} < \rho^{\ast}$), or one of the two does ($\rho_{\ell} < \rho^{\ast},\rho_{r}^{\eps} \rightarrow \rho^{\ast}$) or both of them do ($\rho_{\ell}^{\eps},\rho_{r}^{\eps} \rightarrow \rho^{\ast}$). The case ($\rho_{\ell}^{\eps} \rightarrow \rho^{\ast},\rho_{r} < \rho^{\ast}$) is obtained by symmetry from the case ($\rho_{\ell} < \rho^{\ast},\rho_{r}^{\eps} \rightarrow \rho^{\ast}$): the left and right quantities have to be exchanged and the arrows have to be flipped (like in the first case of theorem \ref{Thm:Riemann_Small_Epsilon}). Since the solutions of the Riemann problem are bounded and monotonous, all the sequences belong to a bounded subset of $BV(\R)$ and consequently, to a compact subset of $L^{1}_{\text{loc}}(\R)$. So we only need to prove the uniqueness of the limit of converging sequences to prove the convergence of the whole sequence and we can consider that the convergence is in the almost everywhere sense (up to the extraction of a subsequence). 

As a guideline, we mention that, compared with the system with finite $\eps$, the limit Riemann problem has two additional properties: the appearance of clusters which corresponds to the saturation of the constraint $\rho \leq \rho^{\ast}$ and the disappearance of rarefaction waves and their transformations into contact waves. In the subsequent statements, the term "limit" is a short-hand for "limit of the solution to the Riemann problem of (\ref{Eq:rho_1D_cons})-(\ref{Eq:theta_1D_cons})" as $\eps \rightarrow 0$.

\subsubsection{Case $\rho_{\ell} < \rho^{\ast},\rho_{r} < \rho^{\ast}$ (see fig. \ref{Fig:limit_Riemann_1})}

 \begin{proposition}\label{Prop:limit_Riemann_1} \textbf{(Case $\rho_{\ell} < \rho^{\ast},\rho_{r} < \rho^{\ast}$)} There are only three cases:
\begin{enumerate}[(a)]
	\item \textbf{Subcase $\theta_{\ell} = \theta_{r}$.} The limit consists 
of only one contact wave connecting  $(\rho_{\ell},\theta_{\ell})$ to $(\rho_{r},\theta_{\ell})$: 
\begin{equation*}
(\rho_{\ell},\theta_{\ell})\quad\stackrel{\mbox{contact}}{\longrightarrow}\quad(\rho_{r},\theta_{r}).
\end{equation*}
The travelling speed is equal to $\cos\theta_{\ell}$.
  \item \textbf{Subcase $\theta_{\ell} > \theta_{r}$.} The limit consists of two contact waves connecting the two states to a vacuum state:
\begin{equation*}
(\rho_{\ell},\theta_{\ell}) \quad\stackrel{\mbox{contact}}{\longrightarrow}\quad(0,\theta_{\ell})\quad\stackrel{\mbox{Vacuum}}{\longrightarrow}\quad(0,\theta_{r})\quad\stackrel{\mbox{contact}}{\longrightarrow}\quad(\rho_{r},\theta_{r}).
\end{equation*} 
The travelling speeds are respectively equal to $\cos\theta_{\ell}$ and $\cos\theta_{r}$.
  \item \textbf{Subcase $\theta_{\ell} < \theta_{r}$.} The limit consists 
of two shocks connecting the left state $(\rho_{\ell},\theta_{\ell})$ to  a congested state $(\rho^{\ast},\widetilde{\theta},\bar{p})$ and then connecting $(\rho^{\ast},\widetilde{\theta},\bar{p})$ to the right state $(\rho_{r},\theta_{r})$:
\begin{equation*}
(\rho_{\ell},\theta_{\ell})\quad\stackrel{\mbox{shock}}{\longrightarrow}\quad(\rho^{\ast},\widetilde{\theta},\bar{p})\quad\stackrel{\mbox{shock}}{\longrightarrow}\quad(\rho_{r},\theta_{r}). 
\end{equation*}
where $\widetilde{\theta}$ is the unique solution of 
\begin{eqnarray*}
&&\left[\Psi(\cos(\theta))\right]_{r}\frac{\left[\rho\cos(\theta)\right]_{r}}{\left[\rho\right]_{r}} - \left[\Psi(\cos(\theta))\right]_{\ell}\frac{\left[\rho\cos(\theta)\right]_{\ell}}{\left[\rho\right]_{\ell}} = \left[\Phi(\cos(\theta))\right]^{\ell}_{r},\\
&&\theta \in \left[\min(\theta_{\ell},\theta_{r}),\max(\theta_{\ell},\theta_{r})\right],
\end{eqnarray*} 
and $\bar{p}$ is given by
\begin{equation*}
\bar{p} =  \frac{\left[\Psi(\cos(\theta))\right]_{\ell}\left[\rho\cos(\theta)\right]_{\ell}}{\left[\rho\right]_{\ell}}  - \left[\Phi(\cos(\theta))\right]_{\ell} =  \frac{\left[\Psi(\cos(\theta))\right]_{r}\left[\rho\cos(\theta)\right]_{r}}{\left[\rho\right]_{r}}  - \left[\Phi(\cos(\theta))\right]_{r}.
\end{equation*}
The shock speeds are given by the Rankine-Hugoniot condition for the density (\ref{Eq:RH_1}).
\end{enumerate}
\end{proposition}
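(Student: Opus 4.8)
The plan is to use the uniform $BV$ bound already recorded: each sequence of Riemann solutions lies in a bounded subset of $BV(\R)$, hence in a compact subset of $L^{1}_{\mathrm{loc}}(\R)$, so it suffices to identify all a.e.\ limits of converging subsequences and then to check that the limit is unique. For $\eps$ small, the structure of the solution is the one given by Theorem~\ref{Thm:Riemann_Small_Epsilon}, and one passes to the limit by letting the intermediate state and the wave speeds converge, using that the Hugoniot and integral curves issued from a state with $\rho<\rho^{\ast}$ collapse onto $\{\theta=\theta_{\ell}\}\cup\{\rho=\rho^{\ast}\}$ (resp.\ $\{\theta=\theta_{r}\}\cup\{\rho=\rho^{\ast}\}$), cf.\ Propositions~\ref{Prop:Hugoniot_loci} and~\ref{Prop:Integral_curve}, and that $\lambda_{\pm}^{\eps}\to\cos\theta$ pointwise. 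The subcases (a), (b), (c) correspond respectively to cases~1, 2, 3 of Theorem~\ref{Thm:Riemann_Small_Epsilon}.

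In subcase~(a) ($\theta_{\ell}=\theta_{r}$) the solution is a shock followed by a rarefaction (or the reverse), with intermediate state $(\widetilde{\rho}^{\eps},\widetilde{\theta}^{\eps})$ squeezed between $\rho_{\ell}^{\eps}$ and $\rho_{r}^{\eps}$, hence bounded away from $\rho^{\ast}$; the wave curve connecting them therefore degenerates onto the branch $\{\theta=\theta_{\ell}\}$, so $\widetilde{\theta}^{\eps}\to\theta_{\ell}$, while both the shock speed $[\rho\cos\theta]/[\rho]$ and the rarefaction fan $\lambda_{\pm}^{\eps}$ tend to $\cos\theta_{\ell}$; the two waves thus merge in the limit into the single jump $(\rho_{\ell},\theta_{\ell})\to(\rho_{r},\theta_{r})$ with speed $\cos\theta_{\ell}$, and $\eps p(\rho^{\eps})\to0$ since $\rho^{\eps}$ stays uniformly below $\rho^{\ast}$, so $\bar p\equiv0$. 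In subcase~(b) ($\theta_{\ell}>\theta_{r}$, i.e.\ $\cos\theta_{\ell}<\cos\theta_{r}$) the two rarefactions of Theorem~\ref{Thm:Riemann_Small_Epsilon} run down to $\rho=0$; by Proposition~\ref{Prop:Integral_curve} their $\theta$-variation is $O(\sqrt{\eps})$ and their fans shrink to the single speeds $\cos\theta_{\ell}$ and $\cos\theta_{r}$, so in the limit each rarefaction becomes a contact jump to the vacuum, and since $\cos\theta_{\ell}<\cos\theta_{r}$ a genuine vacuum region $\{\cos\theta_{\ell}<x/t<\cos\theta_{r}\}$ opens up; again $\bar p\equiv0$ because $\rho^{\eps}\le\max(\rho_{\ell}^{\eps},\rho_{r}^{\eps})<\rho^{\ast}$.

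The substantial case is~(c) ($\theta_{\ell}<\theta_{r}$). Since $\rho_{\ell},\rho_{r}<\rho^{\ast}$, the threshold densities entering the subcases of case~3 of Theorem~\ref{Thm:Riemann_Small_Epsilon} tend to $\rho^{\ast}$ by Proposition~\ref{Prop:Hugoniot_loci}, so for $\eps$ small the solution is a $1$-shock followed by a $2$-shock with intermediate state $(\widetilde{\rho}^{\eps},\widetilde{\theta}^{\eps})$, $\widetilde{\theta}^{\eps}\in\,]\theta_{\ell},\theta_{r}[$. That state lies on the Hugoniot curve of the left state (limiting onto $\{\theta=\theta_{\ell}\}\cup\{\rho=\rho^{\ast}\}$) and on that of the right state (limiting onto $\{\theta=\theta_{r}\}\cup\{\rho=\rho^{\ast}\}$); since $\theta_{\ell}\neq\theta_{r}$, necessarily $\widetilde{\rho}^{\eps}\to\rho^{\ast}$. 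Solving the $1$-shock Rankine--Hugoniot relation~(\ref{Eq:Shock}) for $\eps p(\widetilde{\rho}^{\eps})$ gives
\[
\eps\,p(\widetilde{\rho}^{\eps})=\eps\,p(\rho_{\ell}^{\eps})-\bigl[\Phi(\cos\theta)\bigr]_{\ell}+\frac{\bigl[\Psi(\cos\theta)\bigr]_{\ell}\,\bigl[\rho\cos\theta\bigr]_{\ell}}{\bigl[\rho\bigr]_{\ell}},
\]
with $[\,\cdot\,]_{\ell}$ the jump of the finite-$\eps$ $1$-shock. The right-hand side is bounded: $\eps p(\rho_{\ell}^{\eps})\to0$, $\widetilde{\theta}^{\eps}$ stays in a compact subinterval of $]0,\pi[$ so $\Phi$ and $\Psi$ remain bounded, and $[\rho]_{\ell}=\widetilde{\rho}^{\eps}-\rho_{\ell}^{\eps}$ is bounded below by a positive constant; hence $\eps p(\widetilde{\rho}^{\eps})$ is bounded and, along a subsequence with $\widetilde{\theta}^{\eps}\to\widetilde{\theta}$, converges to the announced $\bar p$. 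Passing to the limit in both shock relations, and in the density condition~(\ref{Eq:RH_1}) for the speeds, produces the two expressions of $\bar p$ and, on equating them, the equation for $\widetilde{\theta}$.

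The remaining---and, I expect, the most delicate---point is to prove that this equation has a unique root in $[\min(\theta_{\ell},\theta_{r}),\max(\theta_{\ell},\theta_{r})]$: I would show that $\theta\mapsto \bigl[\Psi(\cos\theta)\bigr]_{r}\bigl[\rho\cos\theta\bigr]_{r}/\bigl[\rho\bigr]_{r}-\bigl[\Psi(\cos\theta)\bigr]_{\ell}\bigl[\rho\cos\theta\bigr]_{\ell}/\bigl[\rho\bigr]_{\ell}-\bigl[\Phi(\cos\theta)\bigr]^{\ell}_{r}$ is strictly monotone on that interval, using the explicit derivatives of $\Psi$ and $\Phi$ and the signs $[\rho]_{\ell}>0$, $[\rho]_{r}<0$; this also yields $\bar p>0$ and the ordering $\sigma_{1}<\sigma_{2}$ of the two limiting shock speeds, so that the limit is exactly the two-shock configuration separated by the congested state $(\rho^{\ast},\widetilde{\theta},\bar p)$. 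Uniqueness of $\widetilde{\theta}$ forces every subsequential limit to coincide, hence the whole family converges, which finishes the proof; consistency of the resulting jump relations with Proposition~\ref{Prop:RH_partiel} and with the Formal Statement is then immediate.
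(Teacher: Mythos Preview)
Your proposal is correct and follows essentially the same route as the paper: in each subcase you invoke Theorem~\ref{Thm:Riemann_Small_Epsilon} for the structure at finite~$\eps$, use Propositions~\ref{Prop:Hugoniot_loci} and~\ref{Prop:Integral_curve} to force the intermediate state onto $\{\theta=\theta_{\ell}\}$ (resp.\ $\{\rho=\rho^{\ast}\}$), and pass to the limit in the wave speeds and in the Rankine--Hugoniot relation~(\ref{Eq:Shock}) to recover $\bar p$. The one place where you go beyond the paper is the uniqueness of $\widetilde{\theta}$ in~(c): the paper dispatches this with ``we can easily check that the system \ldots\ is equivalent to~(\ref{SystEq:Intersection_shock})'', whereas you outline an explicit monotonicity argument, which is a reasonable way to fill that gap.
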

Note that  in clustered region, since $\rho = \rho^{\ast}$, the state is determined by the values of $\theta$ and $\bar{p}$. This is why we add a third component giving the value of $\bar{p}$ to the vector defining the state in the clustered region. 

In this proposition, the quantities $[f]_{\ell} := \widetilde{f} - f_{\ell}$, $[f]_{r} := \widetilde{f} - f_{r}$ denote the difference between the intermediate value and the left (or right) value of the quantity $f$ and $[f]_{\ell}^{r} := f_{r} - f_{\ell}$ denotes the difference between the right and left values of the quantity $f$. 

This proposition covers several kinds of interfaces described in Formal Statement~\ref{Formal_Statement}: the case (a) is an occurence of an interface (UC)-(UC), the case (b) of an interface (UC)-(V) and the case (c) of an interface (C)-(UC). Moreover, the proposition implies that $\theta$ is continuous inside (UC) domains.  

Note that all the intermediate states are explicitly given or are solutions of a non-linear equation and so, are explicitly computable. The proof of this proposition is given in appendix \ref{Appendix:limit_Riemann_1}.

\begin{figure}
\begin{center}
\null
\hfill
\psfrag{t}{$t$}
\psfrag{x1}{$x_{1}$}
\psfrag{etat gauche}{$\theta_{\ell}, \rho_{\ell}$}
\psfrag{vitesse}{{\footnotesize $\cos\theta_{\ell}$}}
\psfrag{etat droit}{$\theta_{\ell}, \rho_{r}$}
\subfigure[Cas $\theta_{\ell} = \theta_{r}$]{\includegraphics[width=0.3\textwidth]{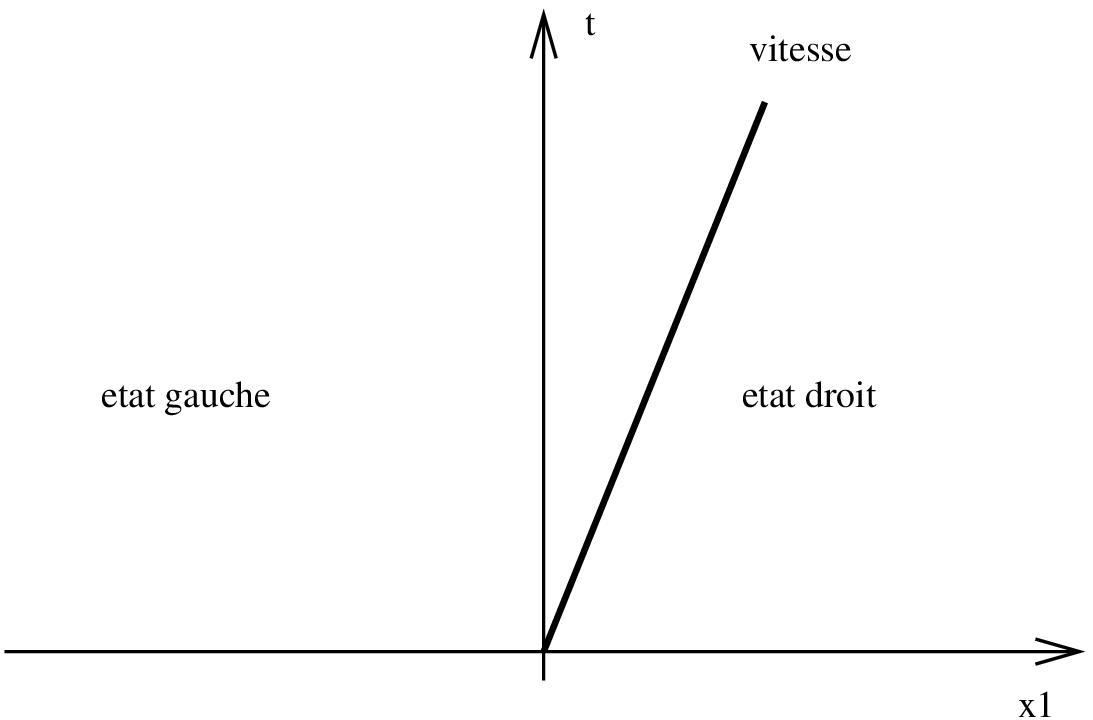}}
\hfill
\psfrag{t}{$t$}
\psfrag{x1}{$x_{1}$}
\psfrag{etat gauche}{$\theta_{\ell}, \rho_{\ell}$}
\psfrag{etat milieu}{Vacuum}
\psfrag{etat droit}{$\theta_{r}, \rho_{r}$}
\psfrag{vitesse1}{{\footnotesize $\cos\theta_{\ell}$}}
\psfrag{vitesse2}{{\footnotesize $\cos\theta_{r}$}}
\subfigure[Cas $\theta_{\ell} > \theta_{r}$]{\includegraphics[width=0.3\textwidth]{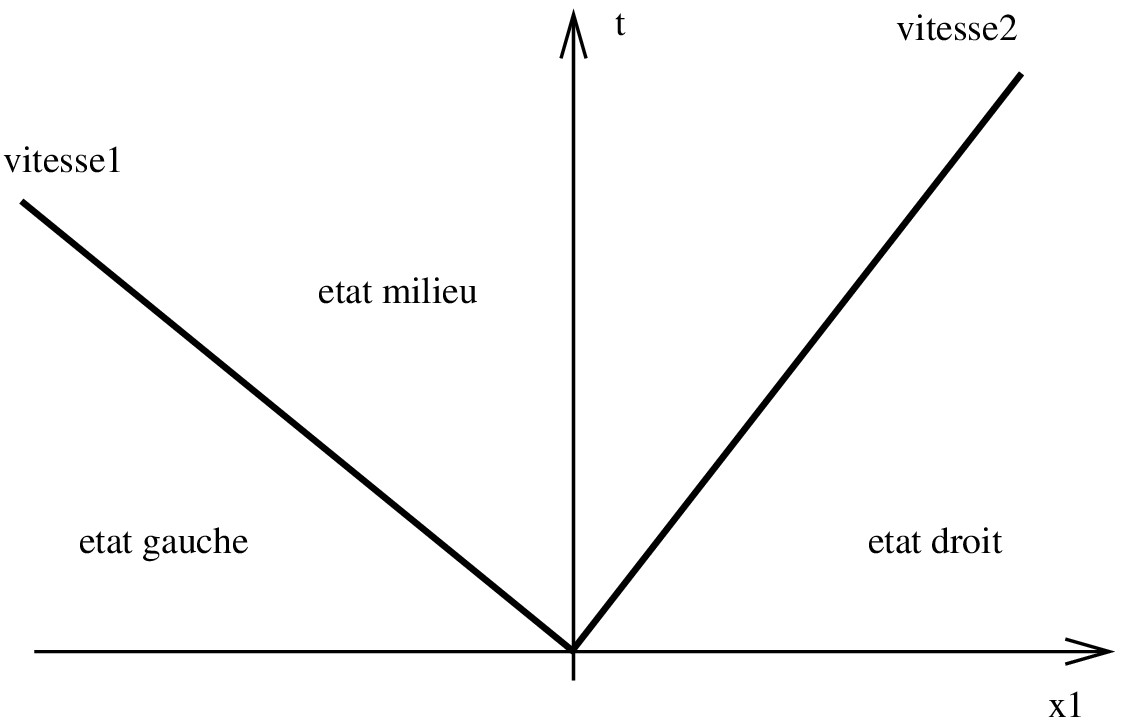}}
\hfill
\psfrag{t}{$t$}
\psfrag{x1}{$x_{1}$}
\psfrag{etat gauche}{$\theta_{\ell}, \rho_{\ell}$}
\psfrag{etat milieu}{$\widetilde{\theta},\rho^{\ast},\bar{p}$}
\psfrag{etat droit}{$\theta_{r}, \rho_{r}$}
\psfrag{vitesse1}{}
\psfrag{vitesse2}{}
\subfigure[Cas $\theta_{\ell} < \theta_{r}$]{\includegraphics[width=0.3\textwidth]{solriemannlim2.eps}}
\hfill
\null
\caption{Limit solutions of the Riemann problem for $\eps = 0$ and $\rho_{\ell},\rho_{r} < \rho^{\ast}$.}
\label{Fig:limit_Riemann_1}
\end{center}
\end{figure}
\begin{figure}
\begin{center}
\end{center}
\end{figure}

\subsubsection{Case $\rho_{\ell} < \rho^{\ast},\rho_{r} = \rho^{\ast}$ (see fig. \ref{Fig:limit_Riemann_2})} 

This case is typical of the situation at a cluster boundary. In this case, the main new feature is the appearance of declustering waves as limits of the rarefaction waves. These declustering waves are instantaneous cancellations of the pressure. The following lemma details this statement:  
\begin{lemma} (Limit of rarefaction waves, declustering wave) Let $(\rho_{r}^{\eps},\theta_{r})$ be a sequence of right states such that $\rho_{r}^{\eps} \rightarrow \rho^{\ast}$ and $\eps p(\rho_{r}^{\eps}) \rightarrow \bar{p}_{r} > 0$. Introduce a converging sequence of states $(\widetilde{\rho}^{\eps},\widetilde{\theta}^{\eps})$ lying on the rarefaction curves issued from the right states, such that $\widetilde{\rho}^{\eps} < \rho_{r}^{\eps}$. 

If $\widetilde{\rho} = \lim \widetilde{\rho}^{\eps} < \rho^{\ast}$, then the rarefaction wave tends to the combination of a contact wave between the state $(\widetilde{\rho},\theta_{r})$ and $(\rho^{\ast},\theta_{r},\bar{p}_{r})$ with speed $\cos\theta_{r} = \lambda_{+}$ and a declustering wave, i.e. a contact wave with infinite speed which cancels the pressure, which provides a transition between $(\rho^{\ast},\theta_{r},\bar{p}_{r})$ and $(\rho^{\ast},\theta_{r},0)$. 

If $\widetilde{\rho} = \lim \widetilde{\rho}^{\eps} = \rho^{\ast}$ then the rarefaction wave tends to a shock wave with infinite speed between the states $(\rho^{\ast},\theta_{r},\bar{\bar{p}})$ and $(\rho^{\ast},\theta_{r},\bar{p}_{r})$, where $\bar{\bar{p}} = \lim \eps p(\widetilde{\rho}^{\eps})$.   
\label{Lemma:Rarefaction_wave_2}
\end{lemma}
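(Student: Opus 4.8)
The plan is to follow the $+$--rarefaction fan issued from the right state and to track the triple $(\rho,\theta,\eps p(\rho))$ along it as $\eps\to 0$, the whole argument resting on a single algebraic identity for the characteristic speed. Parametrise the $2$--rarefaction curve through $(\rho_r^\eps,\theta_r)$ by the density $\rho\in[\widetilde\rho^\eps,\rho_r^\eps]$; by (\ref{OndeDetente}) and Proposition~\ref{Prop:Integral_curve} the associated angle $\theta(\rho)$ satisfies $|\theta(\rho)-\theta_r|\le O(\eps^{1/(2\gamma)})$ uniformly, so $\theta(\rho)\to\theta_r$ along the entire fan, and in particular $\widetilde\theta^\eps\to\theta_r$. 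The fan realises, at self-similar speed $s=x/t$ ranging over $[\lambda_+^\eps(\widetilde\rho^\eps,\widetilde\theta^\eps),\lambda_+^\eps(\rho_r^\eps,\theta_r)]$, the state with $s=\lambda_+^\eps(\rho,\theta(\rho))=\cos\theta(\rho)+\chi^\eps(\rho)\,|\sin\theta(\rho)|$, where $\chi^\eps(\rho):=\sqrt{\eps p'(\rho)\rho}$. Using $p(\rho)=(1/\rho-1/\rho^\ast)^{-\gamma}$ one gets $\eps p'(\rho)\rho=\gamma\,\eps^{-1/\gamma}\bigl(\eps p(\rho)\bigr)^{(\gamma+1)/\gamma}/\rho$; this is the workhorse of the proof, converting control of $\eps p(\rho)$ into control of $\chi^\eps(\rho)$ and back.

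With this identity I would first locate the two edges of the fan. Since $\eps p(\rho_r^\eps)\to\bar p_r>0$, the identity forces $\chi^\eps(\rho_r^\eps)^2=\eps p'(\rho_r^\eps)\rho_r^\eps\to+\infty$, hence $\lambda_+^\eps(\rho_r^\eps,\theta_r)\to+\infty$: the pressurised, congested end of the fan recedes to $x/t=+\infty$. For the left edge, in Case~1 ($\widetilde\rho<\rho^\ast$) the factor $p'(\widetilde\rho^\eps)$ stays bounded, so $\chi^\eps(\widetilde\rho^\eps)\to 0$ and $\lambda_+^\eps(\widetilde\rho^\eps,\widetilde\theta^\eps)\to\cos\theta_r$; in Case~2 ($\widetilde\rho=\rho^\ast$ with $\bar{\bar p}:=\lim\eps p(\widetilde\rho^\eps)>0$) the same identity gives $\chi^\eps(\widetilde\rho^\eps)\to+\infty$, so the left edge speed also tends to $+\infty$ and the whole fan escapes to infinity.

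The core of the proof is the declustering mechanism. Fix $s>\cos\theta_r$; for $\eps$ small $s$ lies strictly inside the fan, and the corresponding state $(\rho^\eps(s),\theta^\eps(s))$ satisfies $s=\cos\theta^\eps(s)+\chi^\eps(\rho^\eps(s))|\sin\theta^\eps(s)|$. Since $\theta^\eps(s)\to\theta_r$ this forces $\chi^\eps(\rho^\eps(s))\to(s-\cos\theta_r)/|\sin\theta_r|$, a finite positive limit; inserting this into the identity above gives $\bigl(\eps p(\rho^\eps(s))\bigr)^{(\gamma+1)/\gamma}=O(\eps^{1/\gamma})$, hence $\eps p(\rho^\eps(s))\to0$ and therefore $p(\rho^\eps(s))\to+\infty$, i.e. $\rho^\eps(s)\to\rho^\ast$. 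Thus for every fixed $x/t>\cos\theta_r$ the limit state is the congested, pressureless state $(\rho^\ast,\theta_r,0)$, while for $x/t<\cos\theta_r$ it is the intermediate state $(\widetilde\rho,\theta_r,0)$. Because $\theta$, hence $\cos\theta$, does not jump across $x/t=\cos\theta_r$, the density Rankine--Hugoniot relation (\ref{Eq:RH_1}) fixes the speed of that discontinuity to be exactly $\cos\theta_r$ — also the $\eps\to0$ limit of $\lambda_+^\eps$ at non-congested densities — which is the announced contact wave from $(\widetilde\rho,\theta_r)$ to the congested state; since the far right of the fan carries pressure $\eps p(\rho_r^\eps)\to\bar p_r$ yet has been pushed to $x/t=+\infty$, the congested state produced is naturally labelled with $\bar p_r$ and the passage $\bar p_r\to0$ is realised by a pressure jump of infinite speed, the declustering wave. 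In Case~2 the identical bookkeeping, with both edges at infinity and end pressures $\bar{\bar p}$ and $\bar p_r$, yields the infinite-speed ``shock'' between $(\rho^\ast,\theta_r,\bar{\bar p})$ and $(\rho^\ast,\theta_r,\bar p_r)$. All convergences are in $L^{1}_{\text{loc}}(\R)$ of the self-similar profiles, legitimate by the uniform $BV$ bounds already used.

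I expect the main obstacle to be not any single estimate — the displayed identity makes them routine — but the bookkeeping of the limiting wave picture: making precise the sense in which a fan whose right edge escapes to $x/t=+\infty$ converges to a contact discontinuity at $\cos\theta_r$ followed by an infinite-speed declustering wave while keeping consistent track of the pressure carried at its receding right end, and checking that the mild degenerate sub-case $\bar{\bar p}=0$ of Case~2 (left edge then possibly at finite speed, with the pressureless congested state to its left) remains consistent with the stated conclusion and with Formal Statement~\ref{Formal_Statement}.
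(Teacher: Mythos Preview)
Your proposal is correct and follows essentially the same route as the paper's proof: fix a self-similar speed $s$ inside the fan, use the characteristic relation $s=\cos\theta(s)+\sqrt{\eps p'(\rho(s))\rho(s)}\,|\sin\theta(s)|$ together with $\theta(s)\to\theta_r$ to conclude that $\eps p'(\rho(s))\rho(s)$ stays bounded, hence $\rho(s)\to\rho^\ast$ with $\eps p(\rho(s))\to 0$, and then read off the limit wave structure by case analysis on $\widetilde\rho$. Your explicit identity $\eps p'(\rho)\rho=\gamma\,\eps^{-1/\gamma}\bigl(\eps p(\rho)\bigr)^{(\gamma+1)/\gamma}/\rho$ is a clean way to package the rate computations that the paper simply asserts as $(\rho^\ast-\rho(s))=O(\eps^{1/(\gamma+1)})$ and $\eps p(\rho(s))=O(\eps^{1/(\gamma+1)})$, but the mechanism is the same; your discussion of the degenerate sub-case $\bar{\bar p}=0$ also mirrors the paper's remark that a finite left-edge speed forces $\bar{\bar p}=0$ and collapses the infinite-speed jump.
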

The proof of this lemma is developed in appendix \ref{Appendix:Rarefaction_wave_2}.

The next proposition provides the solutions of the limit Riemann problem and figure \ref{Fig:limit_Riemann_2} schematically describes them.
\begin{proposition} \textbf{(Case $\rho_{\ell} < \rho^{\ast},\rho_{r} = \rho^{\ast}$)} There are only three cases:
\begin{enumerate}[(a)]
	\item \textbf{Subcase $\theta_{\ell} = \theta_{r}$.} The limit solution consists of one contact wave connecting the left state $(\rho_{\ell},\theta_{\ell})$ to an intermediate congested state $(\rho^{\ast},\theta_{r},\bar{p} = 0)$ and then a cluster contact (with infinite speed):
\begin{equation*}
 (\rho_{\ell},\theta_{\ell}) \stackrel{\mbox{contact}}{\longrightarrow} (\rho^{\ast},\theta_{r},0) \stackrel{\mbox{declust.}}{\rightarrow} (\rho^{\ast},\theta_{r},\bar{p}).
\end{equation*}
  \item \textbf{Subcase $\theta_{\ell} > \theta_{r}$.} The limit solution consists of one contact wave connecting the left state to vacuum, and then another contact wave connecting the vacuum to a congested and pressureless state $(\rho^{\ast},\theta_{r},0)$ and finally a cluster contact connecting $(\rho^{\ast},\theta_{r},0)$ to $(\rho^{\ast},\theta_{r},\bar{p})$:
\begin{equation*}
 (\rho_{\ell},\theta_{\ell}) \stackrel{\mbox{contact}}{\longrightarrow} (0,\theta_{\ell}) \stackrel{\mbox{vacuum}}{\longrightarrow} (0,\theta_{r}) \stackrel{\mbox{contact}}{\longrightarrow} (\rho^{\ast},\theta_{r},0) \stackrel{\mbox{declust.}}{\rightarrow} (\rho^{\ast},\theta_{r},\bar{p}).
\end{equation*}
  \item \textbf{Subcase $\theta_{\ell} < \theta_{r}$.} The limit solution consists of one shock wave connecting the left state $(\rho_{\ell},\theta_{\ell})$ to an intermediate congested state $(\rho^{\ast},\theta_{r},\bar{\bar{p}})$ and one contact wave with infinite propagation speed connecting $(\rho^{\ast},\theta_{r},\bar{\bar{p}})$ to the right state $(\rho^{\ast},\theta_{r},\bar{p})$:
\begin{equation*}
 (\rho_{\ell},\theta_{\ell}) \stackrel{\mbox{shock}}{\longrightarrow} (\rho^{\ast},\theta_{\ell},\bar{\bar{p}}) \stackrel{\mbox{contact}}{\longrightarrow} (\rho^{\ast},\theta_{r},\bar{p}),
\end{equation*}  
where the intermediate pressure $\bar{\bar{p}}$ is equal to
\begin{equation*}
\bar{\bar{p}} = \left[\Psi(\cos\theta)\right]_{\ell}\frac{\left[\rho\cos\theta\right]_{\ell}}{\left[\rho\right]_{\ell}} - \left[\Phi(\cos\theta)\right]_{\ell}.
\end{equation*}
The shock speed is given by the Rankine-Hugoniot condition for the density (\ref{Eq:RH_1}).
\end{enumerate}
\label{Prop:limit_Riemann_2}
\end{proposition}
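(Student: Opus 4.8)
The plan is to pass to the limit $\eps\to 0$ inside the explicit three-piece structure of the finite-$\eps$ Riemann solution supplied by Theorem~\ref{Thm:Riemann_Small_Epsilon}, exactly as was done in Proposition~\ref{Prop:limit_Riemann_1} for the case $\rho_\ell,\rho_r<\rho^\ast$; the only genuinely new ingredient is that the $2$-wave is now issued from a state $(\rho_r^\eps,\theta_r)$ with $\rho_r^\eps\to\rho^\ast$ and $\eps p(\rho_r^\eps)\to\bar p_r>0$. As already noted, the finite-$\eps$ solutions are monotone and uniformly bounded, hence relatively compact in $L^1_{\mathrm{loc}}(\R)$, so it suffices to identify the a.e.\ limit of an arbitrary convergent subsequence; uniqueness of this limit then yields convergence of the whole family. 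Writing the finite-$\eps$ solution as a $1$-wave from $(\rho_\ell^\eps,\theta_\ell^\eps)$ to an intermediate state $(\widetilde\rho^{\,\eps},\widetilde\theta^{\,\eps})$ followed by a $2$-wave to $(\rho_r^\eps,\theta_r^\eps)$, and noting that $\rho_\ell<\rho_r^\eps$ for $\eps$ small, the three subcases $\theta_\ell=\theta_r$, $\theta_\ell>\theta_r$, $\theta_\ell<\theta_r$ are, respectively, instances of Case~1, Case~2 and (a subcase of) Case~3 of Theorem~\ref{Thm:Riemann_Small_Epsilon}.

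First I would analyse the left $1$-wave. In subcases (a) and (c) it is a $1$-shock. By Proposition~\ref{Prop:Hugoniot_loci} the $1$-Hugoniot curve issued from $(\rho_\ell,\theta_\ell)$ collapses, as $\eps\to 0$, onto the union of the vertical segment $\{\theta=\theta_\ell\}$ and the horizontal segment $\{\rho=\rho^\ast\}$; combined with Proposition~\ref{Prop:Integral_curve}(3) and its analogue for the backward $2$-Hugoniot curve, which keep the backward $2$-wave curve issued from the congested right state within $O(\eps^{1/(2\gamma)})$ of $\{\theta=\theta_r\}$ down to $\rho=0$, this forces the intersection point $(\widetilde\rho^{\,\eps},\widetilde\theta^{\,\eps})$ to satisfy $\widetilde\rho^{\,\eps}\to\rho^\ast$ and $\widetilde\theta^{\,\eps}\to\theta_r$, i.e.\ a cluster is created at the interface. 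Passing to the limit in the density Rankine--Hugoniot relation (\ref{Eq:RH_1}) gives the shock speed $\sigma=[\rho\cos\theta]/[\rho]$, while passing to the limit in $H_\eps=0$, equation (\ref{Eq:Shock}) --- using that $\eps p(\widetilde\rho^{\,\eps})$ converges to a finite value $\bar{\bar p}$ and that $\eps p(\rho_\ell)\to 0$ --- yields $\bar{\bar p}=[\Psi(\cos\theta)]_\ell[\rho\cos\theta]_\ell/[\rho]_\ell-[\Phi(\cos\theta)]_\ell$, which is precisely the (C)--(UC) pressure formula (\ref{Eq:FS_CUC_pressure}) with $p_{UC}=0$; in subcase (a), where $\cos\theta_\ell=\cos\theta_r$, continuity of $\cos\theta$ across this wave forces $\bar{\bar p}=0$. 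In subcase (b) the $1$-wave is a $1$-rarefaction, and, as in Proposition~\ref{Prop:limit_Riemann_1}(b), the collapse of the integral curve onto $\{\theta=\theta_\ell\}$ turns it in the limit into a contact discontinuity of speed $\cos\theta_\ell$ from $(\rho_\ell,\theta_\ell)$ to the vacuum state $(0,\theta_\ell)$, followed by the vacuum region $(0,\theta_\ell)\to(0,\theta_r)$.

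Next I would analyse the right $2$-wave, which is issued from the congested state $(\rho_r^\eps,\theta_r)$: this is exactly the situation covered by Lemma~\ref{Lemma:Rarefaction_wave_2}, together with its straightforward counterpart for $2$-shocks (proved by the method of Appendix~\ref{Appendix:Rarefaction_wave_2}, with (\ref{Eq:Shock}) in place of (\ref{OndeDetente})). If the endpoint of the $2$-wave distinct from the right state stays strictly below $\rho^\ast$ --- as it does in subcase (b), where it lies in the vacuum --- Lemma~\ref{Lemma:Rarefaction_wave_2} yields a contact wave of finite speed $\cos\theta_r$ reaching a pressureless congested state $(\rho^\ast,\theta_r,0)$, immediately followed by a declustering wave, i.e.\ an infinite-speed contact relating $(\rho^\ast,\theta_r,0)$ to the pressurized congested state. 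If instead that endpoint is already congested --- as in subcases (a) and (c), where the $1$-shock has produced a congested intermediate state --- the $2$-wave collapses to a single infinite-speed contact between two congested states, of pressures $\bar{\bar p}$ (equal to $0$ in subcase (a) and given by the formula above in subcase (c)) and $\bar p_r$. In both situations the infinite speed originates in the divergence of the rightmost characteristic speed of the rarefaction fan, $\lambda_+^\eps(\rho_r^\eps,\theta_r)=\cos\theta_r+\sqrt{\eps p'(\rho_r^\eps)\rho_r^\eps}\,|\sin\theta_r|\to+\infty$, which holds because $\eps p'(\rho_r^\eps)\to+\infty$ even though $\eps p(\rho_r^\eps)$ stays bounded. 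It then remains to glue the two limiting wave patterns along the common intermediate state, to read off $\bar p=\bar p_r$ for the right state, and to restore the sign of $\theta$ via Proposition~\ref{Prop:Riemann_sign}: a possible sign flip sits inside the intermediate constant state (subcases (a),(c)) or the vacuum region (subcase (b)) and does not affect the boundary value of $\bar p$ at the cluster. Combined with uniqueness of the limit, this establishes the three claimed subcases and the convergence of the whole family.

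The step I expect to be the main obstacle is the quantitative bookkeeping underlying the first two paragraphs: one must play off against one another several competing convergence rates --- the $O(\eps^{1/\gamma})$ rate at which a state with bounded limiting pressure approaches $\rho^\ast$ (see (\ref{speedconvergence})), the $O(\eps^{1/(2\gamma)})$ control of the $2$-wave curve near a congested state (Proposition~\ref{Prop:Integral_curve}(3)), and the $O(\eps^{1/(\gamma-1)})$ positioning of the Hugoniot and rarefaction curves relative to the lines $\{\theta=\mathrm{const}\}$ and $\{\rho=\rho^\ast\}$ (Propositions~\ref{Prop:Hugoniot_loci} and \ref{Prop:Integral_curve}) --- so as to decide, in each subcase, whether the intermediate state enters the congested set (creating a cluster, with a finite and generically nonzero $\bar{\bar p}$) or remains strictly subcritical (so that a declustering wave appears). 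Matching the simultaneous blow-up of $\eps p$ and of $\lambda_\pm^\eps$ at $\rho^\ast$ against these rates, so as to generate the infinite-speed declustering and contact waves while keeping the limiting pressure finite, is the delicate point; the residual ambiguity in the sign of $\theta$, inherited from the non-conservativity of the system and patched only a posteriori by Proposition~\ref{Prop:Riemann_sign}, is a second, milder, difficulty.
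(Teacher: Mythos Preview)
Your overall strategy coincides with the paper's: split the finite-$\eps$ Riemann solution into its $1$- and $2$-wave pieces via Theorem~\ref{Thm:Riemann_Small_Epsilon}, invoke Lemma~\ref{Lemma:Rarefaction_wave_2} for the $2$-wave issued from the congested right state, and pass to the limit in the Rankine--Hugoniot relation~(\ref{Eq:Shock}) for the $1$-shock. The compactness reduction and the use of Proposition~\ref{Prop:Integral_curve}(3) to control $\widetilde\theta^{\,\eps}$ are also the same.

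There is, however, a genuine gap in your treatment of subcase~(a). Your geometric argument (``the $1$-Hugoniot curve collapses onto $\{\theta=\theta_\ell\}\cup\{\rho=\rho^\ast\}$, while the backward $2$-wave curve stays near $\{\theta=\theta_r\}$, hence $\widetilde\rho^{\,\eps}\to\rho^\ast$'') breaks down precisely when $\theta_\ell=\theta_r$: both curves then collapse onto the \emph{same} vertical segment, and their intersection can a~priori accumulate at any $\widetilde\rho\in[\rho_\ell,\rho^\ast]$. The paper does not try to prove $\widetilde\rho^{\,\eps}\to\rho^\ast$; instead it passes to a subsequence and distinguishes the three cases $\widetilde\rho^{\,\eps}\to\rho_\ell$, $\widetilde\rho^{\,\eps}\to\widetilde\rho\in(\rho_\ell,\rho^\ast)$, and $\widetilde\rho^{\,\eps}\to\rho^\ast$, checking in each that the $1$-shock degenerates to a contact of speed $\cos\theta_\ell$ and that Lemma~\ref{Lemma:Rarefaction_wave_2} delivers the declustering wave on the right. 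In the third case one shows $\eps p(\widetilde\rho^{\,\eps})\to 0$ directly from $H_\eps=0$ using that $[\rho]_\ell\to\rho^\ast-\rho_\ell>0$ while all $\theta$-jumps vanish --- which is essentially your computation, but it cannot be reached from your unjustified premise that this is the only case.

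A smaller point: in subcase~(c) the paper also splits into two sub-subcases, according to whether $\rho_r^\eps\lessgtr h_-^\eps(\theta_r^\eps)$ (i.e.\ whether the $2$-wave is a shock or a rarefaction, the second alternative of Case~3 in Theorem~\ref{Thm:Riemann_Small_Epsilon}); both lead to the same limiting intermediate state $(\rho^\ast,\theta_r,\bar{\bar p})$, but the arguments differ and you should treat them separately.
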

In practice, when instantaneous waves occur (i.e. with infinite propagation speed), it means that the initial data of the Riemann problem does not spontaneously appear during the dynamical evolution of the limit problem. They have to be ignored.

Like proposition \ref{Prop:limit_Riemann_1}, this new proposition covers several kinds of interfaces described in the Formal Statement \ref{Formal_Statement}: cases (a) and (c) are occurences of interfaces (C)-(UC), the left wave of case (b)  is an occurence of an interface (UC)-(V) whereas the right wave of the case (b) is an interface (C)-(V).

The proof of this proposition is in appendix \ref{Appendix:limit_Riemann_2}.
\begin{figure}
\begin{center}
\null
\hfill
\psfrag{t}{$t$}
\psfrag{x1}{$x_{1}$}
\psfrag{etat gauche}{$\theta_{\ell}, \rho_{\ell}$}
\psfrag{vitesse1}{{\footnotesize $\cos\theta_{\ell}$}}
\psfrag{etat droit2}{$\theta_{r},\rho^{\ast},0$}
\psfrag{etat droit}{$\theta_{r}, \rho^{\ast},\bar{p}$}
\subfigure[Case $\theta_{\ell} = \theta_{r}$]{\includegraphics[width=0.3\textwidth]{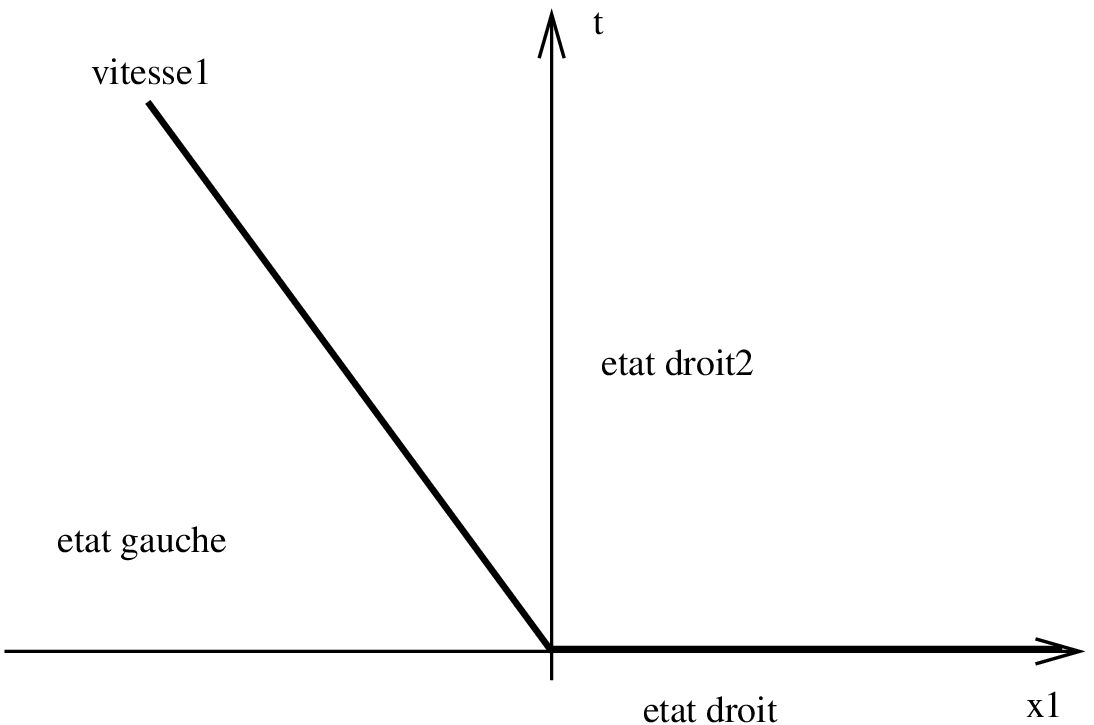}}
\hfill
\psfrag{t}{$t$}
\psfrag{x1}{$x_{1}$}
\psfrag{etat gauche}{$\theta_{\ell}, \rho_{\ell}$}
\psfrag{etat milieu}{Vacuum}
\psfrag{vitesse1}{{\footnotesize $\cos\theta_{\ell}$}}
\psfrag{vitesse2}{{\footnotesize $\cos\theta_{r}$}}
\psfrag{etat droit2}{$\theta_{r},\rho^{\ast},0$}
\psfrag{etat droit}{$\theta_{r}, \rho^{\ast},\bar{p}$}
\subfigure[Case $\theta_{\ell} > \theta_{r}$]{\includegraphics[width=0.3\textwidth]{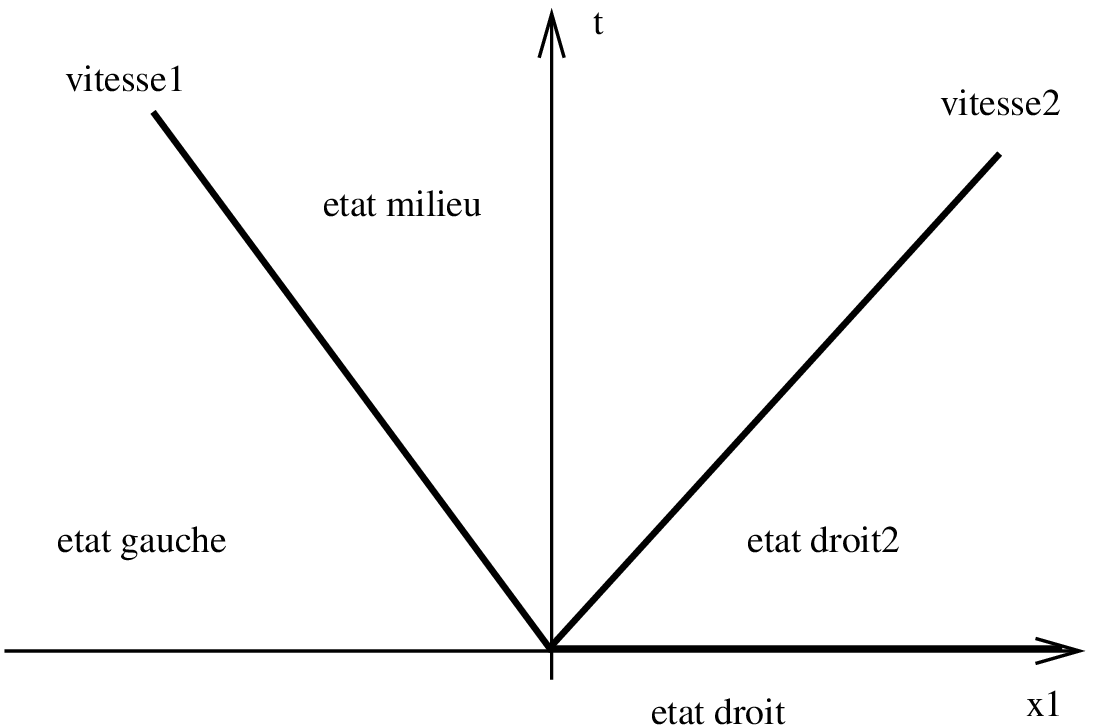}}
\hfill
\psfrag{t}{$t$}
\psfrag{x1}{$x_{1}$}
\psfrag{etat gauche}{$\theta_{\ell}, \rho_{\ell}$}
\psfrag{vitesse1}{}
\psfrag{etat droit2}{$\theta_{r},\rho^{\ast},\bar{\bar{p}}$}
\psfrag{etat droit}{$\theta_{r}, \rho^{\ast},\bar{p}$}
\subfigure[Case $\theta_{\ell} < \theta_{r}$]{\includegraphics[width=0.3\textwidth]{solriemannlim4.eps}}
\hfill
\null
\caption{Limit solutions of the Riemann problem for $\eps = 0$ and $\rho_{\ell} < \rho^{\ast}$, $\rho_{r}^{\eps} \rightarrow \rho^{\ast}$.}
\label{Fig:limit_Riemann_2}
\end{center}
\end{figure}
\begin{figure}
\begin{center}
\end{center}
\end{figure}

\subsubsection{Case $\rho_{\ell} = \rho_{r} = \rho^{\ast}, \rho_{\ell}^{\eps} < \rho_{r}^{\eps}$ (see fig. \ref{Fig:limit_Riemann_3})} 

We assume in addition that $\eps p(\rho_{\ell}^{\eps})$ and $\eps p(\rho_{r}^{\eps})$ have finite positive limits, denoted by $\bar{p}_{\ell} > 0$ and $\bar{p}_{r} > 0$. Figure \ref{Fig:limit_Riemann_3} provides a sketch of the solutions. 
\begin{proposition} \textbf{(Case $\rho_{\ell} = \rho_{r} = \rho^{\ast}, \rho_{\ell}^{\eps} < \rho_{r}^{\eps}$)} There are only three cases:
\begin{enumerate}[(a)]
	\item \textbf{Subcase $\theta_{\ell} = \theta_{r}$.} The limit solution  consists of a uniform constant state $(\rho^{\ast},\theta_{\ell},\bar{p}_{\ell})$.
  \item \textbf{Subcase $\theta_{\ell} > \theta_{r}$.} The limit solution consists of two contact waves and two cluster contact with infinite travelling speed:
\begin{equation*}
 (\rho^{\ast},\theta_{\ell},\bar{p}_{\ell}) \stackrel{\text{declust.}}{\longrightarrow} (\rho^{\ast},\theta_{\ell},0) \stackrel{\text{contact}}{\longrightarrow} (0,\theta_{\ell}) \stackrel{\text{vacuum}}{\longrightarrow} (0,\theta_{r}) \stackrel{\text{contact}}{\longrightarrow} (\rho^{\ast},\theta_{r},0) \stackrel{\text{declust.}}{\longrightarrow} (\rho^{\ast},\theta_{r},\bar{p}_{r}),
\end{equation*}
  \item \textbf{Subcase $\theta_{\ell} < \theta_{r}$.} The limit solution consists of two shock waves with infinite propagation speed connecting the left state $(\rho^{\ast},\theta_{\ell},\bar{p}_{\ell})$ to $(\rho^{\ast},\widetilde{\theta},+\infty)$ and then $(\rho^{\ast},\widetilde{\theta},+\infty)$ to $(\rho^{\ast},\theta_{r},\bar{p}_{r})$:
\begin{equation*}
 (\rho^{\ast},\theta_{\ell},\bar{p}_{\ell}) \stackrel{\text{shock}}{\longrightarrow} (\rho^{\ast},\widetilde{\theta},+\infty) \stackrel{\text{shock}}{\longrightarrow} (\rho^{\ast},\theta_{r},\bar{p}_{r}),
\end{equation*} 
where $\widetilde{\theta}$ is the only solution of 
\begin{equation}  
\frac{[\Psi(\cos(\theta))]_{r}[\cos(\theta)]_{r}}{[\Psi(\cos(\theta))]_{\ell}[\cos(\theta)]_{\ell}} = \left(\frac{\bar{p}_{\ell}}{\bar{p}_{r}}\right)^{\frac{1}{\gamma}}. 
\label{Eq:cluster_contact_interm_angle}
\end{equation}

\end{enumerate}
\label{Prop:limit_Riemann_3}
\end{proposition}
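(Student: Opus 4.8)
The plan is to follow the same route as in Propositions~\ref{Prop:limit_Riemann_1} and~\ref{Prop:limit_Riemann_2}. In each of the three subcases one first reads off, from Theorem~\ref{Thm:Riemann_Small_Epsilon}, the structure of the solution of the $\eps$-Riemann problem for the congested data at hand ($\theta\in\,]0,\pi[$, $\rho_{\ell}^{\eps}<\rho_{r}^{\eps}$, both tending to $\rho^{\ast}$), then passes each constituent wave to the limit $\eps\to 0$ using the limiting ``L-shape'' of the Hugoniot and rarefaction loci and the rate estimates of Propositions~\ref{Prop:Hugoniot_loci} and~\ref{Prop:Integral_curve}, together with Lemma~\ref{Lemma:Rarefaction_wave_2}, and finally invokes the uniform $BV$ bound on the $\eps$-solutions (hence $L^{1}_{\mathrm{loc}}$-precompactness) so that uniqueness of the limit just identified upgrades to convergence of the whole family; the sign of $\theta$ is then restored exactly as in Proposition~\ref{Prop:Riemann_sign}.

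For subcase (a), $\theta_{\ell}=\theta_{r}$: since $\rho_{\ell}^{\eps}<\rho_{r}^{\eps}$, Theorem~\ref{Thm:Riemann_Small_Epsilon}(1) gives a $1$-shock then a $2$-rarefaction through $(\widetilde{\rho}^{\eps},\widetilde{\theta}^{\eps})$ with $\widetilde{\rho}^{\eps}\in\,]\rho_{\ell}^{\eps},\rho_{r}^{\eps}[$; a squeeze forces $\widetilde{\rho}^{\eps}\to\rho^{\ast}$, and the rarefaction relation~(\ref{OndeDetente}) combined with Proposition~\ref{Prop:Integral_curve}(3) forces $\widetilde{\theta}^{\eps}\to\theta_{\ell}$. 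The $1$-shock then degenerates (both endpoints $\to(\rho^{\ast},\theta_{\ell},\bar{p}_{\ell})$, speed $\to\cos\theta_{\ell}$ by~(\ref{Eq:RH_1})), while the $2$-rarefaction, whose intermediate endpoint has $\widetilde{\rho}^{\eps}\to\rho^{\ast}$, is in the second alternative of Lemma~\ref{Lemma:Rarefaction_wave_2} and becomes an infinite-speed wave sent to $x=+\infty$; this leaves the uniform state $(\rho^{\ast},\theta_{\ell},\bar{p}_{\ell})$. For subcase (b), $\theta_{\ell}>\theta_{r}$: Theorem~\ref{Thm:Riemann_Small_Epsilon}(2) gives a $1$-rarefaction, a vacuum zone, and a $2$-rarefaction, each rarefaction having an endpoint at $\rho=0$; the first alternative of Lemma~\ref{Lemma:Rarefaction_wave_2} and its $1$-field mirror (obtained by the left/right symmetry noted before Proposition~\ref{Prop:limit_Riemann_1}) split each one into a finite-speed contact to/from the vacuum --- with speed $\cos\theta_{\ell}$, resp. $\cos\theta_{r}$, using that $\lambda_{\pm}^{\eps}\to\cos\theta$ as $\rho\to 0$ and that, by Proposition~\ref{Prop:Integral_curve}(3), the $\theta$-values along the two rarefaction curves stay $O(\eps^{1/(2\gamma)})$-close to $\theta_{\ell}$, resp. $\theta_{r}$ --- plus an infinite-speed declustering wave connecting the pressureless cluster state to the $\bar{p}_{\ell}$, resp. $\bar{p}_{r}$ cluster state, which is precisely the announced chain.

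Subcase (c), $\theta_{\ell}<\theta_{r}$, is the substantive one. First one checks, from the limiting shape $\{\theta=\theta_{\ell}\}\cup\{\rho=\rho^{\ast}\}$ of the Hugoniot loci (Proposition~\ref{Prop:Hugoniot_loci}) and from $\rho^{\ast}-\rho_{\ell,r}^{\eps}=O(\eps^{1/\gamma})$, that for $\eps$ small one is in the double-shock subcase of Theorem~\ref{Thm:Riemann_Small_Epsilon}(3): a $1$-shock and a $2$-shock through $(\widetilde{\rho}^{\eps},\widetilde{\theta}^{\eps})$ with $\widetilde{\rho}^{\eps}>\rho_{\ell}^{\eps},\rho_{r}^{\eps}$, so $\widetilde{\rho}^{\eps}\to\rho^{\ast}$ and $\eps p(\widetilde{\rho}^{\eps})\to+\infty$. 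One then writes the Rankine--Hugoniot relations~(\ref{Eq:RH_1})-(\ref{Eq:RH_2}) on each shock, eliminates the speed via $\sigma=[\rho\cos\theta]/[\rho]$, and expands as $\eps\to 0$: using the explicit pressure law to write $\rho^{\ast}-\rho_{\ell}^{\eps}\sim c(\eps/\bar{p}_{\ell})^{1/\gamma}$ and $\rho^{\ast}-\rho_{r}^{\eps}\sim c(\eps/\bar{p}_{r})^{1/\gamma}$, and identifying in each bracketed difference the dominant term, one finds that both shock speeds tend to $\pm\infty$ and that $\widetilde{\theta}^{\eps}$ converges to a solution $\widetilde{\theta}$ of~(\ref{Eq:cluster_contact_interm_angle}); uniqueness of $\widetilde{\theta}$ on $[\min(\theta_{\ell},\theta_{r}),\max(\theta_{\ell},\theta_{r})]$ comes from the strict monotonicity in $\theta$ of the left-hand side of~(\ref{Eq:cluster_contact_interm_angle}).

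The hard part will be this last asymptotic bookkeeping: one must reconcile $[\rho]\to 0$ on each of the two shocks, the blow-up $\eps p(\widetilde{\rho}^{\eps})\to+\infty$, and the finiteness of $\eps p(\rho_{\ell,r}^{\eps})$, which requires sharp control of the rate $\rho^{\ast}-\widetilde{\rho}^{\eps}\to 0$ (strictly faster than $\eps^{1/\gamma}$), and it is precisely the cancellation of the $\Psi$- and $\Phi$-terms against the $\eps p$-terms in the RH relations that produces the clean product form of~(\ref{Eq:cluster_contact_interm_angle}). Everything else reduces to the monotonicity and limiting-straight-line facts already recorded in Propositions~\ref{Prop:Hugoniot_loci}--\ref{Prop:Integral_curve} and to Lemma~\ref{Lemma:Rarefaction_wave_2}.
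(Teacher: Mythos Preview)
Your proposal is correct and follows essentially the same route as the paper: read off the $\eps$-structure from Theorem~\ref{Thm:Riemann_Small_Epsilon}, then pass each wave to the limit via Propositions~\ref{Prop:Hugoniot_loci}, \ref{Prop:Integral_curve} and Lemma~\ref{Lemma:Rarefaction_wave_2}, with the $BV$/$L^{1}_{\mathrm{loc}}$ compactness wrapper. Subcases (a) and (b) are handled identically; for (a) the paper gives a slightly more explicit computation of the $1$-shock speed (using $\eps^{1/\gamma}=O(\rho^{\ast}-\widetilde{\rho}^{\eps})$ together with $|\widetilde{\theta}^{\eps}-\theta_{\ell}|=O(\eps^{1/(2\gamma)})$) to conclude $\bar{\bar{p}}=\bar{p}_{\ell}$, which you assert but do not compute.

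The only substantive difference is in (c). You argue upfront that, for $\eps$ small, the solution is necessarily in the double-shock subcase of Theorem~\ref{Thm:Riemann_Small_Epsilon}(3); this is true (from $H_{\eps}=0$ one gets $\eps p(h_{-}^{\eps}(\theta_{r}))\to+\infty$ while $\eps p(\rho_{r}^{\eps})\to\bar{p}_{r}<\infty$, hence $h_{-}^{\eps}(\theta_{r})>\rho_{r}^{\eps}$), but the bare invocation of ``limiting shape plus $O(\eps^{1/\gamma})$'' is not quite enough --- you need this quantitative comparison. The paper instead treats both the double-shock and the shock--rarefaction configurations and rules out the latter a posteriori by a contradiction ($\eps p(\widetilde{\rho}^{\eps})\to+\infty$ would violate $\widetilde{\rho}^{\eps}<\rho_{r}^{\eps}$). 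Your route is marginally cleaner once the rate comparison is made explicit; the paper's route avoids that comparison altogether. In the double-shock analysis itself, the paper derives~(\ref{Eq:cluster_contact_interm_angle}) exactly as you sketch, by forming the quotient $[\eps p(\rho)]_{r}[\rho]_{r}/[\eps p(\rho)]_{\ell}[\rho]_{\ell}$ and evaluating both sides with the explicit pressure asymptotics.
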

These solutions display only one kind of interface among those discussed in the Formal Statement \ref{Formal_Statement}: the case (b) is an occurence of an interface (C)-(V). According to the cases (a) and (c), the solution inside clusters is continuous. However the case (c) does not provide a meaningful solution since the pressure becomes infinite and this is why Formal Statement \ref{Formal_Statement} does not allow to decide what happens at the interface (C)-(C), i.e. a collision of two clusters. It seems to result from the fact that in this case the Riemann problem models the collision of two infinite one-dimensional clusters. Section \ref{Chap:Clusters_dynamics} provides a description of the collision between finite-size one-dimensional clusters. We have seen that the pressure $\bar{p}$ involves a Dirac delta in time. Indeed, according to case (c), the infinite propagation speed of the waves inside clusters implies the discontinuity of the function $\theta$ in time: $\theta = \theta_{\ell} + (\widetilde{\theta} - \theta_{\ell})H(t-t_{c})$, where $H$ denotes here the Heaviside function. Then, equation (\ref{Eq:theta_1D_cons}) leads to 
\begin{equation}
(\Psi(\widetilde{\theta}) - \Psi(\theta_{\ell}))\delta(t-t_{c}) = \partial_{x} \bar{p},
\end{equation}
which justifies to look for a pressure with a dirac delta in time. The Riemann problem does not allow to take into account such a pressure.   

The proof of the proposition is deferred appendix \ref{Appendix:limit_Riemann_3}.

\begin{figure}
\begin{center}
\null
\hfill
\psfrag{t}{$t$}
\psfrag{x1}{$x_{1}$}
\psfrag{etat gauche}{$\theta_{r}, \rho^{\ast},\bar{p}_{\ell}$}
\psfrag{etat milieu}{$\theta_{r},\rho^{\ast},\bar{p}_{\ell}$}
\psfrag{etat droit}{$\theta_{r}, \rho^{\ast},\bar{p}_{r}$}
\subfigure[Case $\theta_{\ell} = \theta_{r}$]{\includegraphics[width=0.3\textwidth]{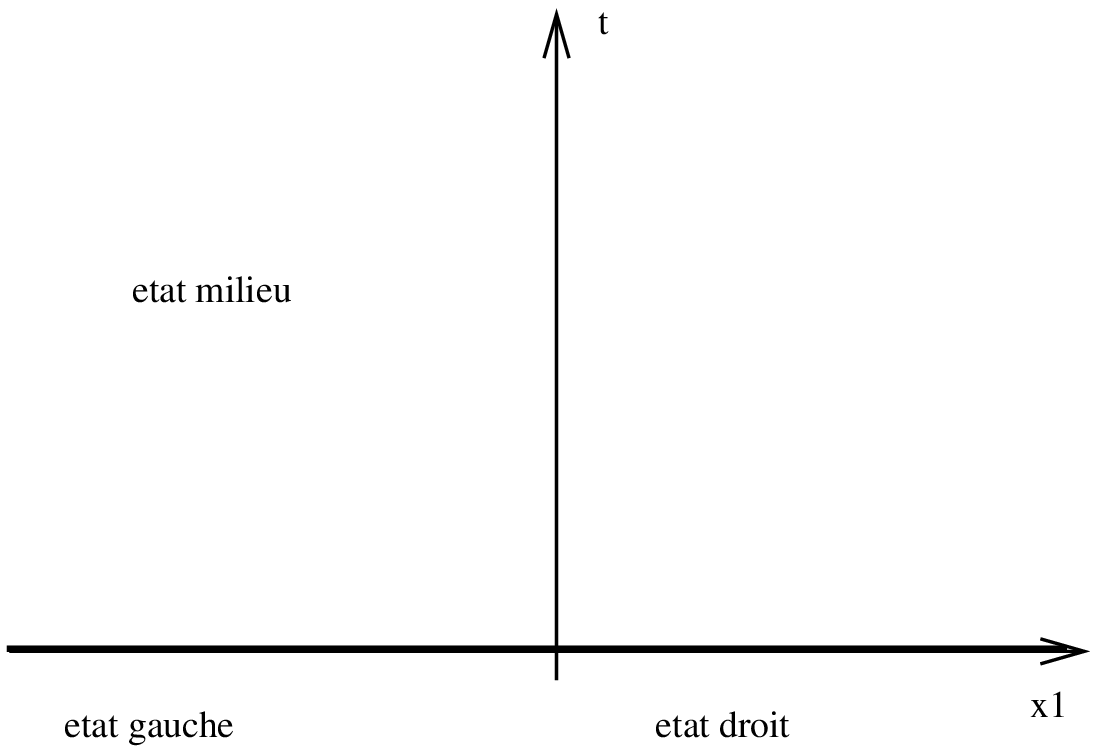}}
\hfill
\psfrag{t}{$t$}
\psfrag{x1}{$x_{1}$}
\psfrag{etat gauche}{$\theta_{\ell}, \rho^{\ast},\bar{p}_{\ell}$}
\psfrag{etat gauche2}{$\theta_{\ell}, \rho^{\ast},0$}
\psfrag{etat milieu}{Vacuum}
\psfrag{vitesse1}{{\footnotesize $\cos\theta_{\ell}$}}
\psfrag{vitesse2}{{\footnotesize $\cos\theta_{r}$}}
\psfrag{etat droit2}{$\theta_{r},\rho^{\ast},0$}
\psfrag{etat droit}{$\theta_{r}, \rho^{\ast},\bar{p}_{r}$}
\subfigure[Case $\theta_{\ell} > \theta_{r}$]{\includegraphics[width=0.3\textwidth]{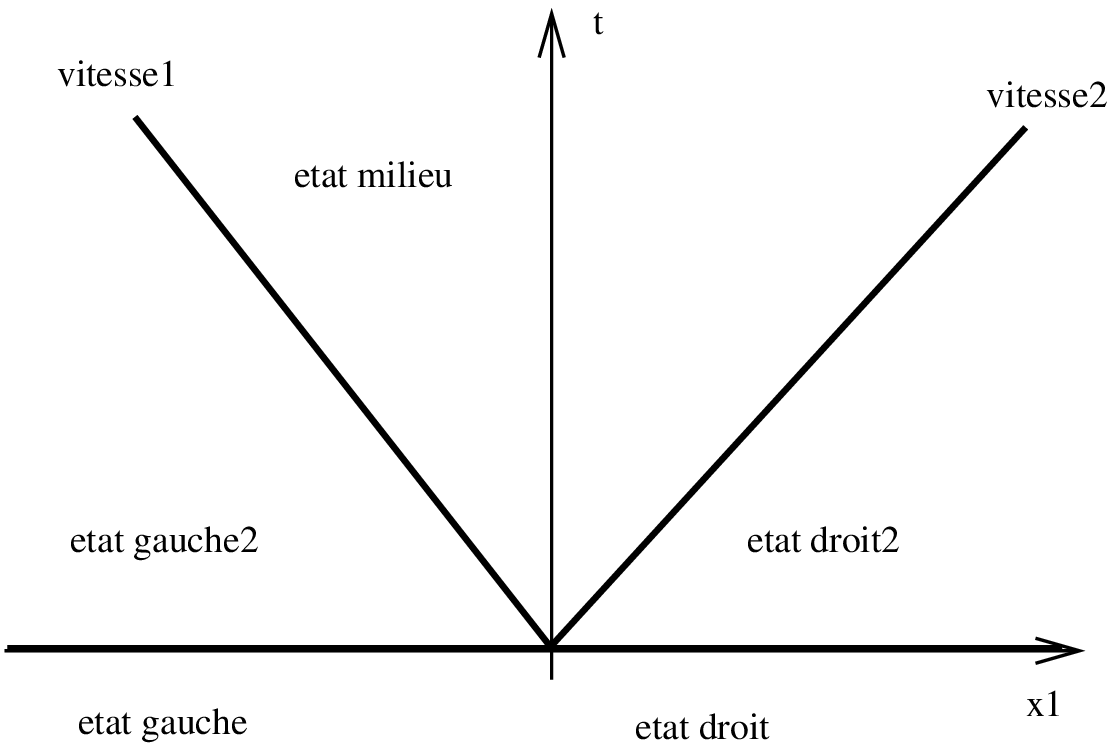}}
\hfill
\psfrag{t}{$t$}
\psfrag{x1}{$x_{1}$}
\psfrag{etat gauche}{$\theta_{\ell}, \rho^{\ast},\bar{p}_{\ell}$}
\psfrag{etat milieu}{$\widetilde{\theta},\rho^{\ast},+\infty$}
\psfrag{etat droit}{$\theta_{r}, \rho^{\ast},\bar{p}_{r}$}
\subfigure[Case $\theta_{\ell} < \theta_{r}$]{\includegraphics[width=0.3\textwidth]{solriemannlim5.eps}}\hfill
\null
\caption{Limit solutions of the Riemann problem for $\eps = 0$ and $\rho_{\ell}^{\eps}, \rho_{r}^{\eps} \rightarrow \rho^{\ast}$, $\rho_{\ell}^{\eps} < \rho_{r}^{\eps}$.}
\label{Fig:limit_Riemann_3}
\end{center}
\end{figure}
\begin{figure}
\begin{center}
\end{center}
\end{figure}

\subsection{Connecting the Riemann problem analysis to the Formal Statement \ref{Formal_Statement}}

We remind that, by contrast with the finite $\eps$ system (\ref{Eq:rho_eps})-(\ref{Eq:constraint_eps}), which is a standard hyperbolic system, the limit system (\ref{Eq:rho_lim})-(\ref{Eq:constraint_rho_lim}) exhibits two additional characteristics:
\begin{enumerate}
\item[(i)] the appearance of clusters corresponding to the saturation of the constraint $\rho = \rho^{\ast}$,

\item[(ii)] the appearance of vacuum.
\end{enumerate}

The previous study helps in understanding the dynamics of the interfaces between unclustered states (UC) $0 < \rho < \rho^{\ast}$, vacuum (V) $\rho = 0$ and clusters (C) $\rho = \rho^{\ast}$. Up to now, no rigorous theory for the limit $\eps \rightarrow 0$ exists and so, we cannot have access to these dynamics rigorously. Our method is to investigate these dynamics through the inspection of the limit $\eps \rightarrow 0$ of the solutions of the Riemann problem of the finite $\eps$ system.

The first remark is that waves with infinite speed correspond to an instantaneous transition from the initial data to some different solution. To some extent, this means that the corresponding initial datum is unstable, and therefore, that it will never appear spontaneously in the course of the evolution of the system. Therefore, we can discard initial data which exhibit this phenomenon, and replace them by the one which is found after the infinite speed wave has been applied.

The second remark is that the various solutions of the Riemann problem can be grouped by situations corresponding to the four cases listed in the formal statement \ref{Formal_Statement}, i.e. interfaces (C)-(UC), (UC)-(V), (C)-(V) and (UC)-(UC). As discussed in the lines following statement \ref{Prop:limit_Riemann_3}, the case (C)-(C) is not accessible by the Riemann problem analysis because the collision dynamics of two clusters depend on their size, and because the Riemann problem only allows to consider infinite size clusters. This is why cluster collisions are analyzed separately in proposition \ref{Prop:cluster_collision}. However, cluster collisions is a complex phenomenon in 2D and proposition \ref{Prop:cluster_collision} only provides a one-dimensional analysis. The two-dimensional analysis is still in progress.
 
To highlight the link between the Riemann problem analysis and the Formal Statement \ref{Formal_Statement}, we point out which solutions of the Riemann problem correspond to which case in the Formal Statement \ref{Formal_Statement}:

\paragraph{Interface (C)-(UC):} it appears in

- prop. \ref{Prop:limit_Riemann_1}, subcase (c): we see that the intermediate congested state is separated from the left and right states by two (C)-(UC) interfaces. We notice that (\ref{Eq:FS_CUC_pressure}) and (\ref{Eq:FS_CUC_speed}) are respectively the relations for the pressure and the shock speed stated at prop. \ref{Prop:limit_Riemann_1}.

- prop. \ref{Prop:limit_Riemann_2}, subcase (c) (if ignoring the contact wave with infinite speed). We can also view subcase (a) as a particular case where the velocities of (C) and (UC) are equal (in which case $\bar{p} = 0$ in the cluster and the interface moves with the common velocity). Again, these two cases are consistent with (\ref{Eq:FS_CUC_pressure}) and (\ref{Eq:FS_CUC_speed}). 

\paragraph{Interface (UC)-(V):} it appears in

- prop. \ref{Prop:limit_Riemann_1}, subcase (b) as two contact waves between the unclustered left and right states and the vacuum intermediate state,

- prop. \ref{Prop:limit_Riemann_2}, subcase (b) where the first wave is a contact between the left unclustered state and the vacuum middle state.
 
In both cases, the velocity of the interface is that of the non-vacuum states and, of course, the pressure is identically zero. Therefore, the situation is as depicted in Formal Statement \ref{Formal_Statement}.
 
\paragraph{Interface (C)-(V):} it appears in

- prop. \ref{Prop:limit_Riemann_2}, subcase (b), where the second wave is a contact wave between the vacuum middle state and the right clustered state. We notice that in this case, the clustered state  must have zero pressure (otherwise, a declustering wave instantaneously relaxes the pressure to zero), 

- prop. \ref{Prop:limit_Riemann_3}, subcase (c), where the left and right clustered states are seperated by a vacuum  intermediate state. Again, in this case, the pressure inside the clusters is identically zero.

In both cases, the velocity of the (C)-(V) interface is that of the cluster. Therefore, the situation is as depicted in the Formal Statement \ref{Formal_Statement}. 
 
\paragraph{Interface (UC)-(UC):} it appears in

- prop. \ref{Prop:limit_Riemann_1}, subcase (a). We see that this situation is that of a standard contact discontinuity  for the uncongested system. The velocities on the  uncongested states are equal and equal to that of the interface, and of course, the pressure is identically zero. Therefore, the situation is again as depicted is the Formal Statement \ref{Formal_Statement}.    
\\
 
We feel that these observation provide a very strong support to the Formal Statement~\ref{Formal_Statement}. As pointed out above, this statement allows to close the system (\ref{Eq:rho_lim})-(\ref{Eq:constraint_rho_lim}) at least until clusters meet. In the one-dimensional framework, proposition \ref{Prop:cluster_collision} provides the cluster collision dynamics. The investigation of cluster dynamics in the two-dimensional case is still work in progress.  

\section{Conclusion}

In this paper, we have studied a continuum model describing a particle system with short-range repulsive and long-range attractive interaction. This is a model for the study of gregariousness among mammal species for instance. We have focused on the effect of the short-range repulsion and looked at the regime where the interaction is turned on suddenly when the local density becomes close to some limit associated to congestion. We have modeled this effect by introducing a perturbation parameter $\eps$ and studied the limit $\eps \rightarrow 0$. We have shown that, in the limit regime, the congested  regions are domains where the flow is incompressible. The complete determination of the limit system requires the knowledge of the interface conditions at the boundaries of the congested regions. We have derived these conditions by looking at a model one-dimensional situation (corresponding to the normal direction to the interface) and analyzing the solutions of the Riemann problem for the perturbation model (with finite $\eps$). Taking the limit $\eps \rightarrow 0$ in the solutions of the Riemann problem allowed us to provide the missing conditions at the interfaces. 

The perspectives of this work are, at the theoretical level, to try to provide more solid justifications to these interface conditions and to analyze the cluster collision dynamics, which is not accessible by the Riemann problem. At the numerical level, we will seek numerical methods to solve this constrained hyperbolic problem and we will perform numerical comparisons between the continuum model and the more fundamental particle system.

\renewcommand{\thesection}{\Alph{section}}
\setcounter{section}{0} 
\section{Derivation of a macroscopic model of short-range repulsive and long-range attractive interactions}
\label{Appendix:derivation}
\subsection{Individual Based Model with speed and congestion constraints.}

We consider $N$ particles in $\R^{2}$ labeled by $k \in \left\{1,..,N\right\}$. These particles are discs of radii $d$. The motion of the particles is described by the time evolution of their positions $\vec{X}_{k}$ and velocity vectors $\vec{\omega}_{k}$. Like in the Vicsek algorithm \cite{95_Vicsek_PhasTrans2d,2008_ContinuumLimit_DM,2002_Couzin_CollectMem}, the velocity magnitude of each particle is the same, is constant in time and supposed equal to $c > 0$. The velocity direction $\vec{\omega}_{k}$ belongs to the unity circle $\S^{1} = \left\{ \vec{\omega} \in \R^{2}, \left|\vec{\omega}\right|^{2} = 1\right\}$. This is a usual assumption in the modeling of several biological systems like flocks of birds \cite{2008_Topologic_Ballerini}, schools of fish \cite{2008_PTW_Gautrais,2008_PTA_D_SM} or herds of sheep \cite{MHPillotM2,MHPillot_Art}.

We start with a simple continuous-in-time model of a particle system subject to attractive-repulsive binary interactions which describe the aggregation of particles with occupation constraints. The evolution of the positions and velocities is given by: 
\begin{eqnarray}
\frac{d\vec{X}_{k}}{dt} &=& c\vec{\omega}_{k},\label{Eq:x}\\
\frac{d\vec{\omega}_{k}}{dt} &=& (\mbox{Id} - \vec{\omega}_{k}\otimes\vec{\omega}_{k})(\nu^{a}_{k}\vec{\xi}_{k}^{a} - \nu^{r}_{k}\vec{\xi}_{k}^{r}),\label{Eq:omega}
\end{eqnarray}
where $\nu^{a}_{k}\vec{\xi}^{a}$ and $\nu^{r}_{k}\vec{\xi}^{r}$ are the attractive and repulsive forces respectively. The matrix $(\mbox{Id} - \vec{\omega}_{k}\otimes\vec{\omega}_{k})$ is the orthogonal projector onto the plane orthogonal to $\vec{\omega}_{k}$. It is applied to both forces in order to keep the magnitude of the speed constant in time. $\vec{\xi}^{a}$ and $\vec{\xi}^{r}$ are the local centers of mass of the particle distribution inside interaction discs centered at $\vec{X}_{k}$ with radii respectively equal to $R_{a}$ and $R_{r}$
\begin{equation*}
\vec{\xi}_{k}^{a} = \frac{\displaystyle{\sum_{j, |\vec{X}_{j} - \vec{X}_{k}| \leq R_{a}}} (\vec{X}_{j} - \vec{X}_{k})}{\displaystyle{\sum_{j, |\vec{X}_{j} - \vec{X}_{k}| \leq R_{a}}} 1},\quad  \vec{\xi}_{k}^{r} = \frac{\displaystyle{\sum_{j, |\vec{X}_{j} - \vec{X}_{k}| \leq R_{r}}} (\vec{X}_{j} - \vec{X}_{k})}{\displaystyle{\sum_{j, |\vec{X}_{j} - \vec{X}_{k}| \leq R_{r}}} 1}.
\end{equation*} 
$\nu_{k}^{a}$ and $\nu_{k}^{r}$ are scaling factors which provide the intensities of the forces. The repulsive force radius $R_{r}$ is supposed much smaller than the attractive force radius $R_{a}$. The resulting force attracts the particles towards the center of mass  of the particle distribution at large distances and repels them from the center of mass of the particle distribution at short distances. To some extent, it is an implementation of the attractive-repulsive scheme proposed by Couzin \cite{2002_Couzin_CollectMem}.

Finally, we suppose that $\nu_{a}$ is a constant and $\nu_{r}$ depends on the local density inside the repulsive interaction disc:
\begin{equation*}
\nu^{a}_{k} = \nu_{a},\quad \nu^{r}_{k} = \nu_{r}\left( \rho_{k}^{r}\right),\quad \rho_{k}^{r} = \frac{\pi d^{2}\sum_{j, |\vec{X}_{j} - \vec{X}_{k}| \leq R_{r}} 1}{\pi R_{r}^{2}}.
\end{equation*}
where $\nu_{r}$ is an increasing function. The function $\nu_{r}$ prevents the local density from exceeding the maximal density $\rho^{\ast}$ which corresponds to the case where all particles are in contact with their neighbours. Clearly, $\rho^{\ast}$ is the ratio of the maximal occupied surface in a disk of radius $R_{r}$ by disks of radii $d$ and is of the order of unity. Therefore, the function $\nu_{r}$ tends to infinity as $\rho_{k}^{r} \rightarrow \rho^{\ast}$. We defer the explicit choice of the function $\nu_{r}$ to the end of the section.
          
\subsection{Mean-field model, hydrodynamic limit and macroscopic model}

The goal of this appendix is to provide a model for large systems of interacting particles according to (\ref{Eq:x})-(\ref{Eq:omega}) at large time and space scales. For this purpose, we will perform a sequence of rescalings. The first rescaling aims at taking into account the large number of interacting particles: it leads to the so-called mean field model. Assuming that the system is included in a fixed box, the limit $N \rightarrow + \infty$ implies that the area $\pi d^{2}$ occupied by each particle tends to $0$ like $1/N$ in such a way that the total area $N\pi d^{2}$ occupied by the particles remains constant. We denote by $\alpha = \lim_{N \rightarrow +\infty} N\pi d^{2}$ the fraction of the surface occupied by the particles. The second rescaling is a hydrodynamic scaling where large time and space scales are considered. Both scaling are classical and are well detailed in \cite{1994_DiluteGazBook_Cercignani,2003_BoltzmannReview_Degond}. They have been applied to swarming model in \cite{2008_ContinuumLimit_DM,2009_Carillo_DoubleMilling,2008_Flocking_HaTadmor}.

\subsubsection{Formal derivation of the mean-field model}

We refer to \cite{spohn_book} for classical references on the mean-field limit. We consider the empirical distribution $f^{N}(\vec{x},\vec{\omega},t)$ defined by
\begin{equation*}
f^{N}(\vec{x},\vec{\omega},t) = \frac{1}{N}\sum_{k = 1}^{N} \delta(\vec{x} - \vec{X}_{k}(t))\delta(\vec{\omega},\vec{\omega}_{k}(t)).
\end{equation*}
$\delta (\vec{x})$ denote the Dirac delta on $\R^{2}$, while $\delta(\vec{\omega},\vec{\omega}_{0})$ denotes the Dirac delta on $\S^{1}$ centered at $\vec{\omega}_{0}$ (i.e. $\delta(\vec{\omega},\vec{\omega}_{0})$ is the probability measure supported by $\left\{\vec{\omega}_{0}\right\}$). It is an easy matter to check that $f^{N}$ satisfies the following kinetic equation 
\begin{equation*}
\partial_{t}f^{N} + c\vec{\omega} \cdot \nabla_{\vec{x}} f^{N} + \nabla_{\vec{\omega}} \cdot \left(\left(F_{a}^{N} - F_{r}^{N}\right)f^{N}\right) = 0,
\end{equation*}
where $F_{a}^{N}$ and $F_{r}^{N}$ are the attractive and repulsive forces, given by
\begin{equation*}
F_{a}^{N}(\vec{x},\vec{\omega},t) = \nu_{a}(\mbox{Id} - \vec{\omega}\otimes\vec{\omega})\vec{\xi_{a}}^{N},\qquad F_{r}^{N}(\vec{x},\vec{\omega},t) = \nu_{r}^{N}(\mbox{Id} - \vec{\omega}\otimes\vec{\omega})\vec{\xi_{r}}^{N},
\end{equation*}
with
\begin{eqnarray*}
&&\vec{\xi_{a}}^{N}(\vec{x},\vec{\omega},t) = \frac{\int K_{a}(\vec{y} - \vec{x})(\vec{y} - \vec{x})\rho^{N}(y,t)dy}{\int K_{a}(\vec{y} - \vec{x})\rho^{N}(\vec{y},t)d\vec{y}},\ \vec{\xi_{r}}^{N}(\vec{x},\vec{\omega},t) = \frac{\int K_{r}(\vec{y} - \vec{x})(\vec{y} - \vec{x})\rho^{N}(\vec{y},t)d\vec{y}}{\int K_{r}(\vec{y} - \vec{x})\rho^{N}(\vec{y},t)d\vec{y}},\\
&&\nu_{r}^{N} = \nu_{r}\left(\frac{N\pi d^{2}\int K_{r}(\vec{y} - \vec{x})\rho^{N}(\vec{y},t)d\vec{y}}{\int K_{r}(\vec{y} - \vec{x})(\vec{y},t)d\vec{y}}\right),
\end{eqnarray*}
where $\rho^{N}(\vec{x},t) = \int_{\Omega \in \S^{1}} f^{N}(\vec{x},\Omega,t)d\Omega$ is the local density and $K_{a}$ (resp. $K_{r}$) is the indicator function of the disc of radius $R_{a}$ (resp. $R_{r}$). Here, it is clear that more general kernels $K_{a}$, $K_{r}$ can be used. 

The formal mean-field limit $N \rightarrow +\infty$ of this model is (we recall that $\alpha = \lim_{N \rightarrow +\infty} N \pi d^{2}$):
\begin{eqnarray}
&&\hspace{-0.5cm}\partial_{t}f + c\vec{\omega} \cdot \nabla_{\vec{x}} f + \nabla_{\vec{\omega}} \cdot \left(\left(F_{a} - F_{r}\right)f\right) = 0,\label{Eq:distrib2}\\ 
&&\hspace{-0.5cm}F_{a}(\vec{x},\vec{\omega},t) = \nu_{a}(\mbox{Id} - \vec{\omega}\otimes\vec{\omega})\vec{\xi_{a}},\qquad F_{r}(\vec{x},\vec{\omega},t) = \nu_{r}(\mbox{Id} - \vec{\omega}\otimes\vec{\omega})\vec{\xi_{r}},\\
&&\hspace{-0.5cm}\vec{\xi_{a}}(\vec{x},\vec{\omega},t) = \frac{\int K_{a}(\vec{y} - \vec{x})(\vec{y} - \vec{x})\rho(\vec{y},t)d\vec{y}}{\int K_{a}(\vec{y} - \vec{x})\rho(\vec{y},t)d\vec{y}},\quad \vec{\xi_{r}}(x,\vec{\omega},t) = \frac{\int K_{r}(\vec{y} - \vec{x})(\vec{y} - \vec{x})\rho(\vec{y},t)d\vec{y}}{\int K_{r}(\vec{y} - \vec{x})\rho(\vec{y},t)d\vec{y}},\\
&&\hspace{-0.5cm}\nu_{r} = \nu_{r}\left(\frac{\int K_{r}(\vec{y} - \vec{x})\rho(\vec{y},t)d\vec{y}}{\alpha\int K_{r}(\vec{y} - \vec{x})d\vec{y}}\right), \rho(\vec{x},t) = \int f(\vec{x},\vec{\omega},t) d\vec{\omega}.\label{Eq:repulsive2}
\end{eqnarray}
Rigorous justifications of this limit are outside the scope of this article.

\subsubsection{Hydrodynamic scaling}

In order to select the relevant scales, we first rewrite our system in dimensionless variables. We consider a space scale $x_{0}$ (typically the range $R_{r}$ of the repulsive force) and we choose a time scale $t_{0} = x_{0}/c$. The associated dimensionless time and space variables are $t' = t/t_{0}$ and $\vec{x'}= \vec{x}/x_{0}$. We also introduce scaled collision kernels $K'_{a,r}$ such that $K_{a,r}(x_{0}\vec{x'}) = K_{a,r}'(\vec{x})$, scaled intensities $\nu_{a}' = \nu_{a}c/x_{0}^{2}$, $\nu_{r}' = \nu_{r}c/x_{0}^{2}$ and a scaled distribution function $f' = \alpha f$. In the case where $K_{a,r}$ are indicator functions of balls of radii $R_{a,r}$, this amounts to rescaling the radii to new values $R'_{a,r} = R_{a,r}/x_{0}$. After removing the primes, the system in the new variables and with the new unknowns is similar to  (\ref{Eq:distrib2})-(\ref{Eq:repulsive2}) but with $c = 1$ and $\alpha = 1$.

To derive the large time and space dynamics, we introduce the following change of variables $\vec{\tilde{x}} = \eta \vec{x}$, $\tilde{t} = \eta t$ with $\eta \ll 1$. In the new variables, the distribution function $f^{\eta}(\vec{\tilde{x}},\vec{\omega},\tilde{t}) = f(\vec{x},\vec{\omega},t)$ satisfies the following system (omitting the tildes):
\begin{eqnarray*}
&&\eta\left(\partial_{t}f^{\eta} + \vec{\omega} \cdot \nabla_{\vec{x}} f^{\eta}\right) + \nabla_{\vec{\omega}} \cdot \left(\left(F_{a}^{\eta} - F_{r}^{\eta}\right)f^{\eta}\right) = 0,\\ 
&&F_{a}^{\eta}(\vec{x},\vec{\omega},t) = \nu_{a}^{\eta}(\mbox{Id} - \vec{\omega}\otimes\vec{\omega})\vec{\xi_{a}}^{\eta},\ \vec{\xi_{a}}^{\eta}(\vec{x},\vec{\omega},t) = \frac{1}{\eta}\frac{\int K_{a}^{\eta}\left(\frac{\vec{y} - \vec{x}}{\eta}\right)(\vec{y} - \vec{x})\rho^{\eta}(\vec{y},t)d\vec{y}}{\int K_{a}^{\eta}\left(\frac{\vec{y} - \vec{x}}{\eta}\right)\rho^{\eta}(\vec{y},t)d\vec{y}},\\
&&F_{r}^{\eta}(\vec{x},\vec{\omega},t) = \nu_{r}^{\eta}(\mbox{Id} - \vec{\omega}\otimes\vec{\omega})\vec{\xi_{r}}^{\eta},\ \vec{\xi_{r}}^{\eta}(\vec{x},\vec{\omega},t) = \frac{1}{\eta}\frac{\int K_{r}^{\eta}\left(\frac{\vec{y} - \vec{x}}{\eta}\right)  (\vec{y} - \vec{x})\rho^{\eta}(\vec{y},t)d\vec{y}}{\int K_{r}^{\eta}\left(\frac{\vec{y} - \vec{x}}{\eta}\right) \rho^{\eta}(\vec{y},t)d\vec{y}},\\
&&\nu_{r}^{\eta} = \nu_{r}^{\eta}\left(\frac{\frac{1}{\eta}\int K_{r}^{\eta}\left(\frac{\vec{y} - \vec{x}}{\eta}\right) \rho^{\eta}(\vec{y},t)d\vec{y}}{\frac{1}{\eta}\int K_{r}^{\eta}\left(\frac{\vec{y} - \vec{x}}{\eta}\right)d\vec{y}}\right),\ \rho^{\eta}(\vec{x},t) = \int f^{\eta}(\vec{x},\vec{\omega},t) d\vec{\omega},
\end{eqnarray*}  
where $K_{a}^{\eta}$ and $K_{r}^{\eta}$ are the scaled interaction kernels and $\nu_{a}^{\eta}$, $\nu_{r}^{\eta}$, the scaled intensities. 

We first suppose that the repulsive kernel $K_{r}^\eta$ and the repulsive intensity $\nu_r^\eta$ are unchanged in the scaling: $K_{r}^{\eta} = K_{r}$, $\nu_{r}^{\eta}(\rho) = \nu_{r}(\rho)$. This means that the range of the repulsive force is supposed of order $\eta$. 
To analyze the limit $\eta \rightarrow 0$, we first need an expansion of $\vec{\xi_{r}}^{\eta}$ in terms of $\eta$. The following lemma provides the result for an isotropic kernel $K_{r}$ ($K_{r}(\vec{z}) = K_{r}(|\vec{z}|)$) 
\begin{lemma} \label{DL1} Under suitable regularity assumptions on $\rho^{\eta}$, we have the expansion
\begin{eqnarray*}
&&\frac{1}{\eta}\int K_{r}\left(\left|\frac{\vec{y} - \vec{x}}{\eta}\right|\right) \rho^{\eta}(\vec{y})d\vec{y} = a\rho^{\eta}(\vec{x}) + o(\eta),\\
&&\vec{\xi_{r}}^{\eta}(\vec{x},\vec{\omega},t) = \eta \frac{{\mathbf B}\nabla_{\vec{x}}\rho^{\eta}(x)}{a\rho^{\eta}(\vec{x})} + o(\eta),\\
&&\nu_{r}^{\eta}\left(\frac{\frac{1}{\eta}\int K_{r}^{\eta}\left(\frac{\vec{y} - \vec{x}}{\eta}\right) \rho^{\eta}(\vec{y},t)d\vec{y}}{\frac{1}{\eta}\int K_{r}^{\eta}\left(\frac{\vec{y} - \vec{x}}{\eta}\right)d\vec{y}}\right) = \nu_{r}(\rho) + o(1),
\end{eqnarray*}
where $a = \int K_{r}(|\vec{z}|)d\vec{z}$ and ${\mathbf B} = \int K_{r}(|\vec{z}|) \vec{z} \otimes \vec{z} d\vec{z} = \left(\int K_{r}(r)r^{3}dr\, \mbox{Id}\right)$.  
\end{lemma}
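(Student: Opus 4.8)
The plan is to reduce all three expansions to a single second-order Taylor expansion of $\rho^{\eta}$ about the base point $\vec{x}$, performed after rescaling the integration variable to the microscopic scale, and then to use the isotropy of $K_{r}$ to decide which moments of the kernel survive. First I would substitute $\vec{z} = (\vec{y}-\vec{x})/\eta$ in every integral, so that $\vec{y} = \vec{x}+\eta\vec{z}$ and $d\vec{y} = \eta^{2}\,d\vec{z}$ in $\R^{2}$; this replaces each $\eta$-dependent convolution against $K_{r}(|\cdot/\eta|)$ by an integral against the fixed kernel $K_{r}(|\vec{z}|)$, the parameter $\eta$ now entering only through the argument of $\rho^{\eta}$ and through explicit powers of $\eta$ which, combined with the prefactor and the Jacobian, reproduce the normalisations written in the statement. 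Under the regularity hypothesis — which I would make precise as: on the region of interest $\rho^{\eta}$ is of class $\mathcal{C}^{2}$ with first and second derivatives bounded uniformly in $\eta$, and $\rho^{\eta}\geq c_{0}>0$ there — I would write
\begin{equation*}
\rho^{\eta}(\vec{x}+\eta\vec{z}) = \rho^{\eta}(\vec{x}) + \eta\,\vec{z}\cdot\nabla_{\vec{x}}\rho^{\eta}(\vec{x}) + O(\eta^{2}|\vec{z}|^{2}),
\end{equation*}
with a uniform remainder, and insert it into the rescaled integrals.

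For the first identity, since $K_{r}$ is isotropic we have $\int K_{r}(|\vec{z}|)\vec{z}\,d\vec{z}=0$, so the first-order term drops and the leading contribution is $a\,\rho^{\eta}(\vec{x})$ with $a=\int K_{r}(|\vec{z}|)\,d\vec{z}$, the error being $O(\eta^{2})$ hence $o(\eta)$ as soon as the low-order moments $\int K_{r}$, $\int K_{r}|\vec{z}|$, $\int K_{r}|\vec{z}|^{2}$ are finite (automatic for a compactly supported $K_{r}$, in particular for an indicator function). For $\vec{\xi}_{r}^{\eta}$ I would apply the same substitution to numerator and denominator: in the numerator the extra factor $\vec{y}-\vec{x}=\eta\vec{z}$ annihilates the $\rho^{\eta}(\vec{x})$ term by isotropy, leaving a contribution proportional to $\eta\,{\mathbf B}\nabla_{\vec{x}}\rho^{\eta}(\vec{x})$ with ${\mathbf B}=\int K_{r}(|\vec{z}|)\,\vec{z}\otimes\vec{z}\,d\vec{z}$; isotropy forces the off-diagonal entries of ${\mathbf B}$ to vanish and the diagonal ones to coincide, so ${\mathbf B}$ is a scalar multiple of the identity, the scalar being the second radial moment of $K_{r}$ computed in polar coordinates. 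The denominator is $a\,\rho^{\eta}(\vec{x})+O(\eta^{2})$ by the first part, and the lower bound $\rho^{\eta}(\vec{x})\geq c_{0}$ makes the division legitimate, giving $\vec{\xi}_{r}^{\eta}=\eta\,{\mathbf B}\nabla_{\vec{x}}\rho^{\eta}(\vec{x})/(a\rho^{\eta}(\vec{x}))+o(\eta)$.

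For the intensity, the argument of $\nu_{r}$ is the quotient of $\tfrac{1}{\eta}\int K_{r}^{\eta}(\tfrac{\vec{y}-\vec{x}}{\eta})\rho^{\eta}\,d\vec{y}$ by $\tfrac{1}{\eta}\int K_{r}^{\eta}(\tfrac{\vec{y}-\vec{x}}{\eta})\,d\vec{y}$; by the first expansion numerator and denominator carry the same power of $\eta$ with coefficients $a\rho^{\eta}(\vec{x})$ and $a$, so the quotient equals $\rho^{\eta}(\vec{x})+O(\eta)$, and continuity of $\nu_{r}$ on the compact range of densities staying below the congestion value $\rho^{\ast}$ (where $\nu_{r}$ is singular) yields $\nu_{r}(\rho^{\eta}(\vec{x}))+o(1)$.

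The only genuine difficulty is to make the remainders $o(\eta)$, respectively $o(1)$, \emph{uniformly} in $\eta$ and in $\vec{x}$ over the region where the expansion is used. This is precisely the role of the unspecified suitable regularity assumptions: uniform-in-$\eta$ $\mathcal{C}^{2}$ bounds and a uniform positive lower bound on $\rho^{\eta}$, together with finiteness of the first few moments of $K_{r}$, so that the Taylor remainder can be integrated against $K_{r}$ and controlled by dominated convergence. Once these are granted, everything else is the bookkeeping of the change of variables and the vanishing of the odd moments of the isotropic kernel.
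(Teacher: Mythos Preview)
The paper does not give a proof of this lemma at all: it simply says ``The proof of this lemma is elementary and omitted.'' Your approach --- rescale $\vec{z}=(\vec{y}-\vec{x})/\eta$, Taylor-expand $\rho^{\eta}(\vec{x}+\eta\vec{z})$ to second order, and use the isotropy of $K_{r}$ to kill odd moments --- is exactly the elementary argument the authors have in mind, and your identification of the regularity and moment assumptions needed to control the remainders is appropriate.
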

The proof of this lemma is elementary and omitted. In the case where $K_{r}$ is the indicator function of the disc of radius $R_{r}$, the coefficients $a$ and ${\mathbf B}$ are equal to $a = \pi R_{r}^{2}$ and ${\mathbf B} = \pi \frac{R_{r}^{4}}{4}\mbox{Id}$. Now we consider the scaling of the attractive kernel $K_{a}^{\eta}$ and attractive intensity $\nu_{a}^{\eta}$. We suppose that the attractive force remains non-local as $\eta$ tends to $0$ and weaker than the repulsive force. To express these assumptions, we suppose that the scaled attractive kernel $K_{a}^{\eta}$ and intensity $\nu_{a}^{\eta}$ are given by 
\begin{equation*}
K_{a}^{\eta}(\vec{z}) = K_{a}(\eta \vec{z}),\quad \nu_{a}^{\eta} = \eta^{2}\nu_{a}.
\end{equation*}
For simplicity, we choose $\nu_{a} = 1$. We also fix the space unit $x_{0}$ in such a way that $B/a = 1$. In particular, in the case where $K_{r}$ is the indicator of the ball of radius $R_{r}$, we can fix $x_{0} = R_{r}/2$. 

Under all these modelling assumptions and thanks to lemma \ref{DL1}, the system can be written formally, in the limit $\eta \rightarrow 0$:
\begin{eqnarray}
&&\partial_{t}f + \vec{\omega} \cdot \nabla_{\vec{x}} f + \nabla_{\vec{\omega}} \cdot \left(\left(F_{a} - F_{r}\right)f\right) = 0,\label{cinetik}\\
&&F_{a}(\vec{x},\vec{\omega},t) = (\mbox{Id} - \vec{\omega}\otimes\vec{\omega})\vec{\xi_{a}},\ \vec{\xi_{a}}(\vec{x},t) = \left(\frac{\int K_{a}\left(\left|\vec{y} - \vec{x}\right|\right)(\vec{y} - \vec{x})\rho(\vec{y},t)d\vec{y}}{\int K_{a}\left(\left|\vec{y} - \vec{x}\right|\right)\rho(\vec{y},t)d\vec{y}}\right),\\
&&F_{r}(\vec{x},\vec{\omega},t) = \nu_{r}(\rho)(\mbox{Id} - \vec{\omega}\otimes\vec{\omega})\vec{\widetilde{\xi}_{r}},\ \vec{\widetilde{\xi}_{r}}(\vec{x},t) = \frac{\nabla_{\vec{x}}\rho(\vec{x},t)}{\rho(\vec{x},t)}.
\end{eqnarray}

\subsubsection{Macroscopic model}

The last step is to obtain the dynamics of macroscopic quantities associated to the flow. Here we will only consider the density and momentum. We find that under suitable regularity and decay assumptions on $f$, the density $\rho = \int f d\vec{\omega}$ and momentum $\rho\Omega = \int f \vec{\omega} d\vec{\omega}$ satisfy the following system of mass and momentum balance equations: 
\begin{eqnarray}
&&\partial_{t}\rho + \nabla_{\vec{x}} \cdot \rho\Omega = 0,\label{EQrho1}\\
&& \partial_{t}\rho\Omega + \nabla_{\vec{x}} \cdot \left(\int f \vec{\omega}\otimes\vec{\omega} d\vec{\omega}\right) = \left( \int (\mbox{Id} -  \vec{\omega}\otimes\vec{\omega}) f d\vec{\omega} \right) \,  \left(\vec{\xi_{a}} - \nu_{r}(\rho)\vec{\widetilde{\xi}_{r}}\right).\label{EQrhoOmega1}\\
&&\vec{\xi_{a}}(\vec{x},t) = \left(\frac{\int K_{a}\left(\left|\vec{y} - \vec{x}\right|\right)(\vec{y}-\vec{x})\rho(\vec{y},t)d\vec{y}}{\int K_{a}\left(\left|\vec{y} - \vec{x}\right|\right)\rho(\vec{y},t)d\vec{y}}\right),\quad \vec{\widetilde{\xi}_{r}}(\vec{x},t) = \frac{\nabla_{\vec{x}}\rho(\vec{x},t)}{\rho(\vec{x},t)}.\label{EQxi}
\end{eqnarray}

To close system (\ref{EQrho1})-(\ref{EQrhoOmega1}), we assume that $f$ is a monokinetic distribution: 
\begin{equation}
f(\vec{x},\vec{\omega},t) = \rho(\vec{x},t)\delta(\vec{\omega},\Omega(\vec{x},t)),
\end{equation} 
with  $\left|\Omega(\vec{x},t)\right| = 1$. This assumptions presupposes that a local equilibrium is reached  where all particles are locally aligned. Although no justification of this assumption can be made at this point, the features displayed by the system seem meaningful in view of gregariousness modelling. We find:
\begin{eqnarray}
\partial_{t}\rho + \nabla_{\vec{x}}\cdot \rho\Omega &=& 0,\label{Eq:rho2}\\
\partial_{t}\left(\rho\Omega\right) + \nabla_{\vec{x}}\cdot\left(\rho \Omega\otimes\Omega\right) &=& \rho(\mbox{Id} -  \Omega\otimes\Omega)(\vec{\xi_{a}} - \nu_{r}(\rho)\vec{\widetilde{\xi}_{r}}),\label{Eq:rhoOmega2}
\end{eqnarray}
where $\vec{\xi_{a}}$ and $\vec{\widetilde{\xi}_{r}}$ are given by (\ref{EQxi}). Factoring out $\rho$ in (\ref{Eq:rhoOmega2}), using (\ref{Eq:rho2}), we also get the following form of the system:
\begin{eqnarray}
&&\partial_{t}\rho + \nabla_{\vec{x}}\cdot \rho\Omega = 0,\\
&&\partial_{t}\Omega + \Omega \cdot \nabla_{\vec{x}}\Omega + \nu_{r}(\rho)(\mbox{Id} -  \Omega\otimes\Omega)\vec{\widetilde{\xi}_{r}} = (\mbox{Id} -  \Omega\otimes\Omega)\vec{\xi_{a}},\\ 
&&\vec{\xi_{a}}(\vec{x},t) = \left(\frac{\int K_{a}\left(\left|\vec{y} - \vec{x}\right|\right)(\vec{y}-\vec{x})\rho(\vec{y},t)d\vec{y}}{\int K_{a}\left(\left|\vec{y} - \vec{x}\right|\right)\rho(\vec{y},t)d\vec{y}}\right),\quad \vec{\widetilde{\xi}_{r}}(\vec{x},t) = \frac{\nabla_{\vec{x}}\rho(\vec{x},t)}{\rho(\vec{x},t)}. 
\end{eqnarray} 
\subsection{Repulsive force intensity and macroscopic model}

Let us return now to the choice of the function $\nu_{r}$. This function tends to infinity when $\rho \rightarrow \rho^{\ast}$. Like in the traffic model devised in \cite{2008_Traffic_DegondRascle}, we assume that this function behaves like $\rho^{\gamma}$ when $\rho \ll \rho^{\ast}$ and tends to infinity when $\rho \rightarrow \rho^{\ast}$. The prototype of such a function is
\begin{equation}
p(\rho) = \frac{1}{\left(\frac{1}{\rho^{\ast}} - \frac{1}{\rho}\right)^{\gamma}},
\label{Def:p}
\end{equation} 
where $\gamma \geq 1$. We will keep this example constantly in the paper for simplicity but the results are valid for all functions having the same properties. We consider that $\nu_{r}(\rho) = \rho p'(\rho)$. In this way, we suppose that repulsion acts like a standard presure force in a gas, but, when the density reaches the congestion density $\rho^{\ast}$, the pressure tends to infinity. Since the equation for $\Omega$ is used instead of that for $\rho\Omega$, the interpretation of $p$ in standard gas dynamics terms would rather be that of an enthalpy (i.e. $p'(\rho)  = P'(\rho)/\rho$ where $P$ is the actual fluid mechanical pressure), but the results would be similar if we considered the equation for $\rho\Omega$ instead. Indeed, because of the constraint $|\Omega| = 1$, the system is non-conservative in the projection term $(\mbox{Id} -  \Omega\otimes\Omega)$. Finally, we get the following system
\begin{eqnarray}
&&\partial_{t}\rho + \nabla_{\vec{x}}\cdot \rho\Omega = 0,\label{Eq:rho_annex}\\
&&\partial_{t}\Omega + \Omega \cdot \nabla_{\vec{x}}\Omega + (\mbox{Id} -  \Omega\otimes\Omega)\nabla_{\vec{x}}p(\rho) = (\mbox{Id} -  \Omega\otimes\Omega)\vec{\xi_{a}}, \label{Eq:Omega_annex}\\
&&\vec{\xi_{a}}(\vec{x},t) = \left(\frac{\int K_{a}\left(\left|\vec{y} - \vec{x}\right|\right)(\vec{y}-\vec{x})\rho(\vec{y},t)d\vec{y}}{\int K_{a}\left(\left|\vec{y} - \vec{x}\right|\right)\rho(\vec{y},t)d\vec{y}}\right).
\end{eqnarray} 

This system provides the starting point of the present article. Since this paper is focused on the treatment of congestion phenomena, we remove the non-local attractive force. Indeed, this term is a zero-th order derivative term and does not 
intervene in the jump relations across discontinuities. 

\section{Conservative laws for the one-dimensional system} 
\label{Appendix:conservative_laws}

In this appendix, we are looking for conservative forms of the one-dimensional system (\ref{Eq:rho_1d})-(\ref{Eq:theta_1d}). The most general conservative form is written: 
\begin{equation}
\partial_{t} g(\rho,\theta) + \partial_{x} f(\rho,\theta) = 0.\label{Eq:cons_form} 
\end{equation}
where $g$ and $f$ are smooth functions of $\rho$ and $\theta$. The following proposition exhibits an infinite set of such conservative forms. 
\begin{proposition} If $(g,f)$ is a conservative form of (\ref{Eq:rho_1d})-(\ref{Eq:theta_1d}), then their partial derivatives are related by  
\begin{equation}
\frac{\partial f}{\partial \rho} = \frac{\partial g}{\partial \rho}\cos\theta - \frac{\partial g}{\partial \theta} \sin\theta p'(\rho),\quad \frac{\partial f}{\partial \theta} = \frac{\partial g}{\partial \theta}\cos\theta - \frac{\partial g}{\partial \rho}\rho\sin\theta.
\label{Eq:derivative_relation}
\end{equation}
Moreover, if $g$ is a function with separated variables $g(\rho,\theta) = u(\theta)v(\rho)$, then $u$ and $v$ satisfy
\begin{eqnarray}
&&\rho v''(\rho) = k p'(\rho)v(\rho),\label{Eq:v}\\
&& u''(\theta) + (\text{cotan}\theta) u'(\theta) = ku(\theta),\label{Eq:u}
\end{eqnarray}
where $k$ is a constant real number. Each $k \in \R$ gives rise to possible $(g,f)$ pairs.
\end{proposition}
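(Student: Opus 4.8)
The plan is to read the relations (\ref{Eq:derivative_relation}) directly off the requirement that $\partial_{t} g+\partial_{x} f=0$ hold along every smooth solution, then to recast the solvability of (\ref{Eq:derivative_relation}) for $f$ as a single linear second-order PDE for $g$, and finally to separate variables in that PDE. For the first point: given smooth $g,f$ and a smooth solution $(\rho,\theta)$ of (\ref{Eq:rho_1d})--(\ref{Eq:theta_1d}), the chain rule gives $\partial_{t}g+\partial_{x}f=g_{\rho}\rho_{t}+g_{\theta}\theta_{t}+f_{\rho}\rho_{x}+f_{\theta}\theta_{x}$; using (\ref{Eq:rho_1d})--(\ref{Eq:theta_1d}) to replace $\rho_{t}$ and $\theta_{t}$ by the corresponding linear combinations of $\rho_{x}$ and $\theta_{x}$ (with coefficients depending on $\rho,\theta$ only), this expression takes the form $P(\rho,\theta)\,\rho_{x}+Q(\rho,\theta)\,\theta_{x}$. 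Since through any state $(\rho_{0},\theta_{0})$ one can run a smooth solution whose spatial gradient $(\rho_{x},\theta_{x})$ at a chosen point is arbitrary (prescribe the Cauchy data freely; the identity (\ref{Eq:cons_form}) must then hold at $t=0$), both coefficients vanish identically, $P\equiv0$ and $Q\equiv0$, which are exactly the two equations in (\ref{Eq:derivative_relation}).

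Next I would characterise the admissible densities $g$. With $g$ fixed, (\ref{Eq:derivative_relation}) prescribes $\nabla f$, equivalently the $1$-form $f_{\rho}\,d\rho+f_{\theta}\,d\theta$. On the simply connected parameter domain $\{\,0\le\rho<\rho^{\ast},\ \theta\in(0,\pi)\,\}$ a flux $f$ with this gradient exists, and is unique up to an additive constant, precisely when that $1$-form is closed, i.e.\ when $\partial_{\theta}(g_{\rho}\cos\theta-g_{\theta}\sin\theta\,p'(\rho))=\partial_{\rho}(g_{\theta}\cos\theta-g_{\rho}\rho\sin\theta)$. Carrying out the two differentiations, the mixed-derivative terms $g_{\rho\theta}\cos\theta$ and the terms $g_{\rho}\sin\theta$ cancel, and after dividing by $\sin\theta$ one is left with the single linear second-order equation $\rho\,g_{\rho\rho}=p'(\rho)(g_{\theta\theta}+\cot\theta\,g_{\theta})$. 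Thus the conservative forms of (\ref{Eq:rho_1d})--(\ref{Eq:theta_1d}) are, modulo the harmless additive constant in $f$, in one-to-one correspondence with the solutions of this PDE, which form an infinite-dimensional family; the simplest choice (\ref{Eq:theta_1D_cons}) corresponds to $g=\Psi(\cos\theta)$.

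Finally, for a density with separated variables $g(\rho,\theta)=u(\theta)v(\rho)$ I would substitute into the PDE just obtained and divide by $u(\theta)v(\rho)$ (legitimate away from the zero sets of $u$ and $v$), getting $\rho v''(\rho)/(p'(\rho)v(\rho))=(u''(\theta)+\cot\theta\,u'(\theta))/u(\theta)$; the left-hand side depends only on $\rho$ and the right-hand side only on $\theta$, so each equals a common real constant $k$, which is exactly (\ref{Eq:v}) together with (\ref{Eq:u}). Conversely, for every $k\in\R$ these two decoupled linear ODEs possess nontrivial solutions $v$ and $u$; the product $g=uv$ then solves the compatibility PDE, and the previous step furnishes a matching flux $f$ (unique up to a constant), so each $k$ genuinely gives rise to conservative pairs $(g,f)$.

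No step looks like a real obstacle: the argument is an organised unwinding of the definition of a conservative form followed by separation of variables. The only places deserving a word of care are the \emph{separate} vanishing of $P$ and $Q$ in the first step --- which relies on the freedom to realise any $1$-jet $(\rho_{0},\theta_{0},\rho_{x},\theta_{x})$ by an actual smooth solution --- and the upgrade from local to global existence of $f$ in the second step, which is unproblematic because the natural parameter domain is simply connected, provided one stays away from the degeneracy loci $\theta\in\{0,\pi\}$ and $\rho=\rho^{\ast}$.
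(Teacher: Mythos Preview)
Your proof is correct and follows essentially the same route as the paper's: chain rule plus the system to obtain (\ref{Eq:derivative_relation}), then the equality of mixed partials $\partial_{\theta}f_{\rho}=\partial_{\rho}f_{\theta}$ to derive the second-order PDE $\rho g_{\rho\rho}=p'(\rho)(g_{\theta\theta}+\cot\theta\,g_{\theta})$ for $g$, and finally separation of variables. You supply more detail than the paper does---in particular the justification that $P$ and $Q$ must vanish separately via the freedom in the Cauchy data, and the simply-connectedness argument for the global existence of $f$---but the underlying argument is the same.
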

\begin{proof} Performing the chain rule in (\ref{Eq:cons_form}) and using (\ref{Eq:rho_1d}),(\ref{Eq:theta_1d}), we easily get (\ref{Eq:derivative_relation}). Then, using that differentiations with respect to $\rho$ and $\theta$ commute, (\ref{Eq:derivative_relation}) gives rise to an elliptic equation satisfied by $g$ and inserting the hypothesis of separated variables, we obtain (\ref{Eq:v}),(\ref{Eq:u}). 

Equation (\ref{Eq:u}) is the Legendre differential equation (in polar coordinates). The two-dimensional vector space of solutions of this equation is spanned by the Legendre functions of first and second species and each of them gives rise to possible $(g,f)$ pairs. \hfill$\square$
\end{proof}
The solutions of (\ref{Eq:v}) exists for all $k \in \R$. However, they have a priori no explicit expression exept for $k = 0$. In this case, the 2-dimensional vector space of solutions of (\ref{Eq:v}) is spanned by $\left\{1,\rho\right\}$ and for (\ref{Eq:u}), is $\left\{1,\Psi(\cos\theta)\right\}$. We can actually check that the following $(g,f)$ pairs
\begin{eqnarray}
&&(g,f) = (\rho,\rho\cos\theta),\label{Eq:cons_form1}\\
&&(g,f) = (\Psi(\cos\theta),\Phi(\cos\theta) + p(\rho)),\label{Eq:cons_form2}\\
&&(g,f) = (\rho\Psi(\cos\theta),\rho\cos\theta\Psi(\cos\theta) + P(\rho)),\label{Eq:cons_form3}
\end{eqnarray} where $P$ is an antiderivative of $\rho p'(\rho)$, are non trivial solutions. The conservative form studied in this article corresponds to the pairs (\ref{Eq:cons_form1}) and (\ref{Eq:cons_form2}). The pairs (\ref{Eq:cons_form1}) and (\ref{Eq:cons_form3}) form another such conservative system. 

\section{Proof of proposition \ref{Prop:cluster_collision} (cluster collisions)}
\label{Appendix:cluster_collision}

\begin{proof} 1- Let $x_{0} \in [a(t_{c}),m]$ and $h \in \mathcal{C}_{c}^{\infty}(D'')$, where $D''$ is a neighbourhood of $x_{0}$ in $D$ (cf. figure \ref{Fig:cluster_collision}). We apply the Green formula on the domain $D''$:
\begin{eqnarray}
<\partial_{t}\Psi(\cos\theta) + \partial_{x}\Phi(\cos\theta),h > &=& - \iint_{D''} \Phi(\cos\theta)\partial_{x}h + \Psi(\cos\theta)\partial_{t}h\ dtdx\nonumber\\
&=& \int_{\partial D''} h[(\Phi(\cos\theta),\Psi(\cos\theta))\cdot n] ds\nonumber\\ 
&=& (\Psi(\cos\theta) - \Psi(\cos\theta_{\ell})) \int_{x_{0}-\eta}^{x_{0}+\eta} h(t_{c},x)dx,
\label{Eq:collisioncluster_lim1}
\end{eqnarray} 
where $<.,.>$ denotes the duality brackets. Since we look for $\bar{p}(x,t) = \pi(x)\delta(t-t_{c})$, we also have
\begin{equation}
- <\partial_{x} \bar{p},h> = \iint_{D''}\bar{p}\partial_{x}h\ dtdx = \int_{x_{0}-\eta}^{x_{0}+\eta} \pi(x)\partial_{x}h(t_{c},x)dx = - \int_{x_{0}-\eta}^{x_{0}+\eta} \partial_{x}\pi(x) h(t_{c},x)dx.
\label{Eq:collisioncluster_lim2}
\end{equation}
If (\ref{Eq:theta_1D_cons}) is satisfied, then we have
\begin{equation}
<\partial_{t}\Psi(\cos\theta) + \partial_{x}\Phi(\cos\theta),h > = - <\partial_{x} \bar{p},h>,
\end{equation}
and equations (\ref{Eq:collisioncluster_lim1}) and (\ref{Eq:collisioncluster_lim2}) imply that
\begin{equation*}
(\Psi(\cos\theta) - \Psi(\cos(\theta_{\ell}))) = - \partial_{x}\pi(x_{0}).
\end{equation*}
The same arguments (for any $x \in [a(t_{c}),b(t_{c})]$) lead to 
\begin{equation*}
- \partial_{x}\pi(x) = \left\{\begin{array}{ll}(\Psi(\cos\theta) - \Psi(\cos(\theta_{\ell}))),&\text{ if }x \in [a(t_{c}),m],\\ (\Psi(\cos\theta) - \Psi(\cos(\theta_{r}))),&\text{ if }x \in [m,b(t_{c})],\end{array}\right.
\end{equation*}
and (supposing $\pi$ continuous) to 
\begin{equation*}
\pi(x) = \left\{\begin{array}{ll}(\Psi(\cos\theta) - \Psi(\cos(\theta_{\ell})))(m - x)\\
 \quad + (\Psi(\cos\theta) - \Psi(\cos(\theta_{r})))(b(t_{c}) - m),&\text{ if }x \in [a(t_{c}),m],\\
(\Psi(\cos\theta) - \Psi(\cos(\theta_{r})))(b(t_{c}) - x),&\text{ if }x \in [m,b(t_{c})],\end{array}\right.
\end{equation*}
Supposing that $\bar{p}$ and then $\pi$ equal zero outsidse the clusters, we get
\begin{equation*}
\pi(a(t_{c})) = (\Psi(\cos\theta) - \Psi(\cos(\theta_{\ell})))(m - a(t_{c})) + (\Psi(\cos\theta) - \Psi(\cos(\theta_{r})))(b(t_{c}) - m) = 0.
\end{equation*}

2 - Let $h$ be a test function in the neighbourhood $D'$ of $D$ (cf. figure \ref{Fig:cluster_collision}). We denote $D_{1} = D\cap\left\{t \leq t_{c}\right\}$ and $D_{2} = D\cap\left\{t \geq t_{c}\right\}$. Applying Green's formula, we obtain:  
\begin{eqnarray*}
\lefteqn{<\partial_{t}\rho + \partial_{x}(\rho\cos\theta),h > = - \iint_{D} \rho\cos\theta\partial_{x}h + \rho\partial_{t}h\ dtdx}\\
&=& \int_{\partial D_{1}} h[(\rho\cos\theta,\rho)\cdot n] ds + \int_{\partial D_{2}} h[(\rho\cos\theta,\rho)\cdot n] ds\\ 
&=& - \sum_{i \in \left\{\ell,r\right\}} \left[\int_{t_{c}-\delta}^{t_{c}} (- \rho^{\ast}\cos(\theta_{i}) + a_{i}'(t)\rho^{\ast} ) h(t,a_{i}(t)) dt + \int_{a_{i}(t_{c})}^{b_{i}(t_{c})} \rho^{\ast}h(t_{c},x)dx\right.\\
&& \left.+ \int_{t_{c}-\delta}^{t_{c}} (- \rho^{\ast}\cos(\theta_{i}) + b_{i}'(t)\rho^{\ast} ) h(t,b_{i}(t)) dt\right]\\
&& + \int_{t_{c}}^{t_{c} + \delta} (- \rho^{\ast}\cos\theta + a'(t)\rho^{\ast}) h(t,a(t)) dt + \int_{a(t_{c})}^{b(t_{c})} \rho^{\ast}h(t_{c},x)dx\\
&& - \int_{t_{c}}^{t_{c}+\delta} (-  \rho^{\ast}\cos\theta + b'(t)\rho^{\ast} ) h(t,b(t)) dt\\
&=& - \rho^{\ast}\int_{a_{\ell}(t_{c})}^{ b_{\ell}(t_{c})} h(t_{c},x) dx - \rho^{\ast}\int_{a_{r}(t_{c})}^{ b_{r}(t_{c})} h(t_{c},x) dx + \rho^{\ast} \int_{a(t_{c})}^{b(t_{c})} h(t_{c},x) dx\\
&=& 0.
\end{eqnarray*}
since $n \, ds = \pm(-1,x'(t))dt$ on the left and right sides of the domains $D_{1}$ and $D_{2}$ and $n \, ds = \pm(0,1)dx$ on theirs top and bottom sides. The last equality stems from the identity 
\begin{equation*}
(b_{\ell}(t_{c}) - a_{\ell}(t_{c})) + (b_{r}(t_{c}) - a_{r}(t_{c})) = (b(t_{c}) - a(t_{c})).
\end{equation*} 
The density equation (\ref{Eq:rho_1D_cons}) is satisfied in the distributional sense. 

If we now apply Green's formula with a test function $h \in \mathcal{C}_{c}^{\infty}(D)$, we obtain
\begin{eqnarray*}
\lefteqn{<\partial_{t}\Psi(\cos\theta) + \partial_{x}\Phi(\cos\theta),h > = - \iint_{D} \Phi(\cos\theta)\partial_{x}h + \Psi(\cos\theta)\partial_{t}h\ dtdx}\\
&=& \int_{\partial D_{1}} h[(\Phi(\cos\theta),\Psi(\cos\theta))\cdot n] ds + \int_{\partial D_{2}} h[(\Phi(\cos\theta),\Psi(\cos\theta))\cdot n] ds\\ 
&=& \int_{a(t_{c})}^{m} (\Psi(\cos\theta) - \Psi(\cos\theta_{\ell})) h(t_{c},x)dx + \int_{m}^{b(t_{c})} (\Psi(\cos\theta) - \Psi(\cos\theta_{r})) h(t_{c},x)dx\\
&=& - \int_{a(t_{c})}^{b(t_{c})} \partial_{x}\pi h(t_{c},x)dx = - < \bar{p},h>.  
\end{eqnarray*}
Eq. (\ref{Eq:theta_1D_cons}) is satisfied in the distributional sense. Note that in this case, the test function has a compact support in $D$ since $\Psi(\theta)$ is not defined in the vacuum region $\rho = 0$.  
\qed
\end{proof}


\section{Proof of proposition \ref{Prop:Hugoniot_loci} (study of the Hugoniot loci)}
\label{Appendix:Hugoniot_loci}

In this section, we provide a detailed study of the Hugoniot curves. Let $\left(\rho_{\ell},\theta_{\ell}\right) \in \left]0,\rho^{\ast}\right[\times\left]0,\pi\right[$ be an arbitrary left state. We need to find the geometric behaviour of the Hugoniot loci associated to this left state.
The classical theory of nonlinear conservation laws provides only information on the local behaviour of $\mathcal{H}_{\pm}^{\eps}$. Each $\mathcal{H}_{+}^{\eps}$, $\mathcal{H}_{-}^{\eps}$ consists of a one-dimensional manifold tangent to the integral curves of the right eigenvectors up to the second order. In the $(\rho,\Psi(\cos\theta))$-plane, the 1-Hugoniot curve $\mathcal{H}_{-}^{\eps}$ is thus locally decreasing and the 2-Hugoniot curve $\mathcal{H}_{+}^{\eps}$ is locally increasing because of the direction of the vectors $\vec{r}_{\pm}^{\eps} = (\pm \rho\left|\sin\theta\right|,\sqrt{\eps p'(\rho)\rho})$. In the $(\rho,\theta)$-plane, the 1-Hugoniot curve $\mathcal{H}_{-}^{\eps}$ defines a locally increasing function $\theta = (h_{-}^{\eps})^{-1}(\rho)$ while the 2-Hugoniot curve $\mathcal{H}_{+}^{\eps}$ defines a locally  decreasing function $\theta = (h_{+}^{\eps})^{-1}(\rho)$. Actually, this property is global (i.e. $(h_{-}^{\eps})^{-1}$ (resp. $(h_{+}^{\eps})^{-1}$) is a globally increasing (resp. decreasing) function of $\rho$ for all $\rho \in ]0,\rho^{\ast}[$). To prove this, let us begin with a simple and useful lemma.
\begin{lemma} For all $u \in [-1,1]$, the function $f_{u} : v \in  ]-1,1[ \rightarrow \Phi(v) - u\Psi(v)$ is convex and has a minimum at the point $u$. In particular, we have 
\begin{equation*}
\forall v \neq u,\quad \left(\Phi(v) - \Phi(u)\right) - u\left(\Psi(v) - \Psi(u)\right)  = f_{u}(v) - f_{u}(u) > 0.
\end{equation*}\label{fonctionfu}
\end{lemma}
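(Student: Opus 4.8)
The plan is to reduce the statement to an elementary one-variable calculus fact about $f_u$. The first step is to rewrite $\Phi$ and $\Psi$ as explicit functions of the single variable $v=\cos\theta$, $v\in\,]-1,1[$ (we may take $\theta\in\,]0,\pi[$). Using $|\sin\theta|=\sqrt{1-v^{2}}$ and $|\tan(\theta/2)|=\sqrt{(1-v)/(1+v)}$ one gets
\begin{equation*}
\Phi(v)=-\frac{1}{2}\ln(1-v^{2}),\qquad\Psi(v)=\frac{1}{2}\ln\frac{1+v}{1-v},
\end{equation*}
whence $\Phi'(v)=v/(1-v^{2})$ and $\Psi'(v)=1/(1-v^{2})$; these are exactly the derivative relations recorded just after the definitions of $\Psi$ and $\Phi$ (read as derivatives with respect to $v=\cos\theta$). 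It then remains to study $f_u(v)=\Phi(v)-u\Psi(v)$ on $]-1,1[$ with $u\in[-1,1]$ a fixed parameter.

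Next I would differentiate twice. One finds immediately $f_u'(v)=(v-u)/(1-v^{2})$, which vanishes exactly at $v=u$, is negative for $v<u$ and positive for $v>u$; hence $f_u$ attains its minimum over $]-1,1[$ at $v=u$ (when $u\in\,]-1,1[$). For convexity I would compute
\begin{equation*}
f_u''(v)=\frac{1+v^{2}-2uv}{(1-v^{2})^{2}},
\end{equation*}
and then simply bound the numerator: $1+v^{2}-2uv\ge 1+v^{2}-2|v|=(1-|v|)^{2}>0$ for $|v|<1$ and $|u|\le 1$ (equivalently, $1+v^{2}-2uv$ is affine in $u$, hence minimized at $u=\pm1$ where it equals $(1\mp v)^{2}>0$). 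Thus $f_u''>0$ on $]-1,1[$, so $f_u$ is strictly convex there.

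Finally, the displayed inequality is then automatic: for $u\in\,]-1,1[$ and $v\neq u$, strict convexity together with $f_u'(u)=0$ gives $f_u(v)-f_u(u)>0$, i.e. $\big(\Phi(v)-\Phi(u)\big)-u\big(\Psi(v)-\Psi(u)\big)>0$. The only point requiring a word of care is the degenerate case $u=\pm1$, for which $v=u$ lies on the boundary of $]-1,1[$: there $f_u$ has no interior critical point but $f_u'$ keeps a constant sign (e.g. $f_1'(v)=-1/(1+v)<0$), so $f_u$ is monotone — still strictly convex, with its infimum only approached at the corresponding endpoint. I do not anticipate any real obstacle in this proof; the entire content is the nonnegativity of the quadratic $1+v^{2}-2uv$, which is trivial.
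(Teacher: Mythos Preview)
Your argument is correct and complete. The paper itself simply states ``The proof is elementary and omitted,'' so there is nothing to compare against; your computation of $f_u'(v)=(v-u)/(1-v^{2})$ and $f_u''(v)=(1+v^{2}-2uv)/(1-v^{2})^{2}>0$ is exactly the kind of direct verification the authors had in mind, and your remark on the endpoint cases $u=\pm1$ is a nice touch (though in the paper's applications one always has $u=\cos\theta_\ell$ with $\theta_\ell\in\,]0,\pi[$, so the degenerate case never occurs).
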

The proof is elementary and omitted. We now analyze the behaviour of $\mathcal{H}_{\pm}^{\eps}$  in more detail when $\eps$ becomes small. Proposition \ref{Prop:Hugoniot_loci} is an immediate consequence of the following lemma.  
\begin{lemma}\label{Lemma:Hugoniot_loci} The behaviour of $\mathcal{H}_{+}^{\eps}$, $\mathcal{H}_{-}^{\eps}$ does not depend on the left state. Let $(\rho_{\ell},\theta_{\ell})$ be a left state. Then:
	\begin{enumerate} 
	\item[(i)] Suppose $\theta_{r}$ is fixed. The function $\rho_{r} \rightarrow H_{\eps}(\rho_{\ell},\theta_{\ell},\rho_{r},\theta_{r})$ has at most two zeros and there exists $\eps' > 0$ such that for all $\eps < \eps'$, the function $\rho_{r} \rightarrow H_{\eps}(\rho_{\ell},\theta_{\ell},\rho_{r},\theta_{r})$ has only one positive zero. This zero tends to $\rho^{\ast}$ as $\eps$ tends to $0$.  
  \item[(ii)] Suppose $\rho_{r}$ is fixed. Then $\forall \eps > 0$, the function $\theta_{r} \rightarrow H_{\eps}(\rho_{\ell},\theta_{\ell},\rho_{r},\theta_{r})$ has two zeros, one lower and one larger than $\theta_{\ell}$, and both of them tend to $\theta_{\ell}$ as $\eps$ tends to $0$. 
  	\end{enumerate}  
  	The Hugoniot locus tends to the union of the straight lines $\left\{\theta = \theta_{\ell}\right\}$ and $\left\{\rho = \rho^{\ast}\right\}$.
\end{lemma}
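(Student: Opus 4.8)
The plan is to reduce $H_\eps$ to an explicit form in which the dependence on $p$ and on $\theta$ decouples, and then argue by elementary monotonicity and convexity; the functions $f_w(s):=\Phi(s)-w\Psi(s)$ of Lemma~\ref{fonctionfu} are the right bookkeeping device. Writing $[\rho\cos\theta]=\cos\theta_r[\rho]+\rho_\ell[\cos\theta]$, expanding the jumps, and using $\Phi'(c)=c\Psi'(c)$ together with $\Psi'>0$ (both following from Lemma~\ref{fonctionfu}, the latter since $f_w''(w)=\Psi'(w)$), I expect to obtain, with $u=\cos\theta_\ell$, $v=\cos\theta_r$,
\[
H_\eps(\rho_\ell,\theta_\ell,\rho_r,\theta_r)=\eps\,(p(\rho_r)-p(\rho_\ell))(\rho_r-\rho_\ell)-\kappa\,\rho_r-\mu\,\rho_\ell,\qquad \kappa:=f_v(u)-f_v(v),\ \ \mu:=f_u(v)-f_u(u).
\]
By Lemma~\ref{fonctionfu} both $\kappa,\mu\ge0$, with equality iff $\theta_r=\theta_\ell$, and $\kappa+\mu=(v-u)(\Psi(v)-\Psi(u))\ge0$. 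I would then record the properties of $g(\rho):=(p(\rho)-p(\rho_\ell))(\rho-\rho_\ell)$: it vanishes only at $\rho_\ell$, is strictly decreasing on $[0,\rho_\ell]$, strictly increasing on $[\rho_\ell,\rho^\ast[$, strictly convex on $[\rho_\ell,\rho^\ast[$ (because $p''>0$, a short check for $p(\rho)=(\tfrac1\rho-\tfrac1{\rho^\ast})^{-\gamma}$, $\gamma\ge1$), with $g(0)=\rho_\ell p(\rho_\ell)$ and $g(\rho)\to+\infty$ as $\rho\to\rho^\ast$.

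For part (i) I fix $\theta_r\neq\theta_\ell$, so $\kappa,\mu>0$ are constants and $H_\eps=\eps g-\kappa\rho_r-\mu\rho_\ell$ is viewed as a function of $\rho_r\in[0,\rho^\ast[$. On $[\rho_\ell,\rho^\ast[$ it is strictly convex with $H_\eps(\rho_\ell)=-(\kappa+\mu)\rho_\ell<0$ and $H_\eps\to+\infty$, hence has exactly one zero there. On $[0,\rho_\ell]$ it is strictly decreasing, with $H_\eps(\rho_\ell)<0$ and $H_\eps(0)=\rho_\ell(\eps p(\rho_\ell)-\mu)$, so there is no zero once $\eps<\eps':=\mu/p(\rho_\ell)$ and at most one in general. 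This gives the stated dichotomy (at most two zeros in general, exactly one positive zero for $\eps$ small). Finally, that zero $\rho_r^\eps\in]\rho_\ell,\rho^\ast[$ satisfies $\eps\,g(\rho_r^\eps)=\kappa\rho_r^\eps+\mu\rho_\ell\ge\mu\rho_\ell>0$, so $g(\rho_r^\eps)\to+\infty$; since $g$ is bounded on compact subsets of $[0,\rho^\ast[$ this forces $\rho_r^\eps\to\rho^\ast$.

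For part (ii) I fix $\rho_r\neq\rho_\ell$, set $g_0:=g(\rho_r)>0$ and $N(\theta_r):=\kappa(\theta_r)\rho_r+\mu(\theta_r)\rho_\ell$, so $H_\eps=\eps g_0-N$. Differentiating in $v$ and using $\Phi'=c\Psi'$ gives $\tfrac{d\kappa}{dv}=\Psi(v)-\Psi(u)$ and $\tfrac{d\mu}{dv}=(v-u)\Psi'(v)$, hence $\tfrac{dN}{dv}=\rho_r(\Psi(v)-\Psi(u))+\rho_\ell(v-u)\Psi'(v)$, which has the sign of $v-u$ since $\Psi'>0$ and $\Psi$ is increasing. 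As $\cos\theta_r$ is decreasing in $\theta_r$, this makes $N$ a strictly monotone bijection of $]0,\theta_\ell[$ onto $]0,+\infty[$ (decreasing) and of $]\theta_\ell,\pi[$ onto $]0,+\infty[$ (increasing), with $N(\theta_\ell)=0$ and $N\to+\infty$ at $\theta_r=0,\pi$ (the blow-up being carried by $\mu$, via the logarithmic singularities of $\Phi$ and $\Psi$). Therefore $H_\eps(\theta_\ell)=\eps g_0>0$, $H_\eps\to-\infty$ at both ends, and $H_\eps$ is strictly monotone on each side of $\theta_\ell$, so it has exactly one zero below and one above $\theta_\ell$; since $N$ equals $\eps g_0\to0$ at those zeros, both converge to $\theta_\ell$.

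Combining the two parts: (ii) shows the portion of the Hugoniot locus with $\rho<\rho^\ast$ concentrates on $\{\theta=\theta_\ell\}$ as $\eps\to0$, and (i) shows the portion with $\theta\neq\theta_\ell$ concentrates on $\{\rho=\rho^\ast\}$, so $\mathcal H_+^\eps\cup\mathcal H_-^\eps$ converges (slicewise, or in Hausdorff distance on compact sets) to $\{\theta=\theta_\ell\}\cup\{\rho=\rho^\ast\}$; the global monotone-curve description of the branches $\mathcal H_\pm^\eps$ on the stated domains then follows from these zero counts together with the local structure at $(\rho_\ell,\theta_\ell)$ read off from the eigenvector directions, and since the reduced formula for $H_\eps$ involves the left state only through $\rho_\ell$, $\cos\theta_\ell$, $\kappa$, $\mu$, the whole analysis is uniform in $(\rho_\ell,\theta_\ell)\in\,]0,\rho^\ast[\,\times\,]0,\pi[$. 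I expect the main obstacle to be the first step — spotting and checking the clean identity for $H_\eps$ via the $f_w$; after that the work is routine, the only mildly delicate points being the ``exactly one zero'' statements, obtained by pairing strict convexity (part (i) on $]\rho_\ell,\rho^\ast[$) or strict monotonicity (part (ii)) with a sign change of $H_\eps$ across the interval, plus the endpoint asymptotics $N\to+\infty$ needed in part (ii).
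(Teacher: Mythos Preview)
Your proof is correct and follows essentially the same strategy as the paper's: Lemma~\ref{fonctionfu} for the signs, convexity/monotonicity of $H_\eps$ in $\rho_r$, monotonicity in $\theta_r$ on each side of $\theta_\ell$, and endpoint asymptotics. Your explicit decomposition $H_\eps=\eps\,g(\rho_r)-\kappa\rho_r-\mu\rho_\ell$ is a tidy repackaging that the paper does not write down; the paper instead differentiates $H_\eps$ directly in each variable and uses the identities $\Psi=\Phi+\ln(1+c)$, $\Psi=-\Phi+\ln(1-c)$ for the blow-up at $\theta_r\to0,\pi$.

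One place where your argument is actually cleaner: for part~(i) the paper asserts $\partial^2 H/\partial\rho_r^2=\eps\bigl(2p'(\rho_r)+\rho_r\,p''(\rho_r)\bigr)>0$ and infers global strict convexity on $]0,\rho^\ast[$, but the correct second derivative is $\eps\bigl(2p'(\rho_r)+(\rho_r-\rho_\ell)p''(\rho_r)\bigr)$, which need not be positive for $\rho_r<\rho_\ell$ (and indeed fails near $0$ when $\gamma>1$). Your split into ``strictly decreasing on $[0,\rho_\ell]$'' (from $g'<0$ and $\kappa>0$) and ``strictly convex on $[\rho_\ell,\rho^\ast[$'' is the right fix and still yields at most two zeros. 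For the convergence of zeros in~(ii), the paper uses the identity $H_\eps=H_1-(1-\eps)g_0$ while you read it off directly from $N=\eps g_0\to0$; both are equivalent.
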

Note that these results imply that the Hugoniot locus consists of two monotonous curves as functions of $\rho$ (otherwise $H$ with fixed $\theta_{r}$ would have more than two zeros). The local behaviour of the Hugoniot locus enables us to determine that the increasing curve is associated to the first eigenvalue $\lambda_{-}^{\eps}$ and the decreasing curve to the second eigenvalue $\lambda_{+}^{\eps}$.   
\begin{proof} (i) Let us fix the left state $(\rho_{\ell},\theta_{\ell})$ and the right angle $\theta_{r}$. So as to get a more readible proof, the function $\rho_{r} \rightarrow H_{\eps}(\rho_{\ell},\theta_{\ell},\rho_{r},\theta_{r})$ will be denoted by $H$ but its derivative will be denoted by a partial derivative $\partial_{\rho_{r}}H$. We look for the zero set of $H$ in the interval $\left]0,\rho^{\ast}\right[$. We compute:
\begin{equation*}
\frac{\partial^{2} H}{\partial\rho_{r}^{2}}(\rho_{r}) = \eps\left(2p'(\rho_{r}) + p''(\rho_{r})\rho_{r}\right)\quad > 0.
\end{equation*}
As $p$ and its first two derivatives are strictly positive on $\left]0,\rho^{\ast}\right[$, the function $H$ is strictly convex and thus has at most two zeros. Moreover, the value of $H$ at $\rho_{r} = \rho_{\ell}$ is strictly negative,
\begin{equation*}
H(\rho_{\ell}) = - \rho_{\ell}\left[\Psi(\cos\theta)\right]\left[\cos\theta\right]\quad < 0, 
\end{equation*}
if $\theta_{r}$ is not equal to $\theta_{\ell}$. Like the function $p$, $H$ tends to $+ \infty$ when $\rho_{r}$ tends to the maximal density $\rho^{\ast}$. Then $H$ has only one zero in $\left]\rho_{\ell},\rho^{\ast}\right[$. We have
\begin{equation*}
H(0) = \rho_{\ell}\left(\eps p(\rho_{\ell}) +  \left[\Psi(\cos\theta)\right]\cos\theta_{\ell} - \left[\Phi(\cos\theta)\right]\right).
\end{equation*}
Lemma \ref{fonctionfu} implies that the second term of this expression is strictly negative and thus $H(0)$ becomes stricly negative for small $\eps$. Thanks to its convexity, we deduce that there exists $\eps'$ such that for all $\eps < \eps'$, the function $H$ has no zero in the interval $\left]0,\rho_{\ell}\right[$. 

To show that the only zero of $H$ tends to $\rho^{\ast}$, let us rewrite $H$ as follows
\begin{equation*}
H(\rho_{r})  = \left[\eps p(\rho)\right]\left[\rho\right] + \left[\Phi(\cos\theta)\right]\left[\rho\right] - \left[\Psi(\cos\theta)\right]\left[\rho\right]\cos\theta_{r} - \left[\Psi(\cos\theta)\right]\left[\cos\theta\right]\rho_{\ell},
\end{equation*} 
and thanks to lemma \ref{fonctionfu}, the zero of $H$ satisfies
\begin{eqnarray*}
\left[\eps p(\rho)\right]\left[\rho\right]  &=&  - \left[\Phi(\cos\theta)\right]\left[\rho\right] + \left[\Psi(\cos\theta)\right]\left[\rho\right]\cos(\theta_{r}) + \left[\Psi(\cos\theta)\right]\left[\cos\theta\right]\rho_{\ell} \\&\geq& \left[\Psi(\cos\theta)\right]\left[\cos\theta\right]\rho_{\ell} > 0.
\end{eqnarray*} 
So we can easily conclude that the zero of $H$ tends to $\rho^{\ast}$.

(ii) Like in the first point, let us denote the function $\theta_{r} \rightarrow H_{\eps}(\rho_{\ell},\theta_{\ell},\rho_{r},.)$ by $H$. First, the value taken by $H$ at $\theta_{r} = \theta_{\ell}$ is positive:
\begin{equation*}
H(\theta_{\ell}) =  \eps \left[p(\rho)\right]\left[\rho\right] > 0.
\end{equation*}
Some easy computations leads to the following expression of the first (partial) derivative of $H$: 
\begin{equation*}
\frac{\partial H}{\partial \theta_{r}}(\theta_{r}) = \frac{1}{\sin\theta_{r}}\left( \rho_{\ell}\left[\cos\theta\right] + \left[\Psi(\cos\theta)\right]\rho_{r}\sin^{2}\theta_{r}\right).
\end{equation*}
As $\rho_{\ell}$ and $\rho_{r}$ are positive, the sign of the derivative is the same as the sign of $\left[\cos\theta\right]$. Thus $H$ is increasing on $[0,\theta_{\ell}]$ and decreasing on $[\theta_{\ell},\pi]$. Moreover using the fact that $\Psi(u) = \Phi(u) + \log(1 + u)$, we can write $H$ as
\begin{equation*}
H(\theta_{r}) =  \Phi(\cos\theta_{r})\left[\rho(1 - \cos\theta)\right] + \log(1 + \cos\theta_{r})\left[\rho\cos\theta\right] + A(\eps,\rho_{r},\theta_{r})
\end{equation*}
where $A$ is a bounded function. It implies that $H$ tends to $- \infty$ when $\theta_{r}$ tends to $0$. In the same way and by using the identity $\Psi(u) = - \Phi(u) + \log(1 - u)$, we can show that $H$ tends also to $- \infty$ when $\theta_{r}$ tends to $\pi$. We deduce that $H$ has exactly two zeros. 

Let us remark that 
\begin{equation*}
H_{\eps} = H_{1} - (1 - \eps)\left[p(\rho)\right]\left[\rho\right].
\end{equation*}  
This implies that $H_{\eps}^{-1}(0)  = H_{1}^{-1}((1 - \eps)\left[p(\rho)\right]\left[\rho\right])$ and then that the zeros of $H_{\eps}$ tend to $\theta_{\ell}$ as $\eps$ tends to $0$. 
\qed
\end{proof}   

\section{Proof of proposition \ref{Prop:Integral_curve} (study of the integral curves of the right eigenvectors)}
\label{Appendix:Integral_curve}

\begin{proof} $1.$ We easily check that $\theta'(\rho) = \mp \sqrt{\eps p'(\rho)/\rho}$, leading to the result.

$2.$ For a fixed $\rho$, the quantity
\begin{equation*}
\theta^{\eps} = (i_{\pm}^{\eps})^{-1}(\rho) = \theta_{\ell} \mp \sqrt{\eps}\left(\int_{\rho_{\ell}}^{\rho} \sqrt{\frac{\eps p'(u)}{u}}du\right)
\end{equation*}
 converges to $\theta_{\ell}$ as $\eps$ goes to $0$. For a fixed $\theta$, the quantity $\sqrt{\eps}\left(\int_{\rho_{\ell}}^{\rho^{\eps}} \sqrt{\frac{\eps p'(u)}{u}}du\right) = \theta - \theta_{\ell}$ is a constant. So as $\eps$ tends to $0$, the integral term has to tend to $+\infty$, which implies the convergence of $\rho^{\eps} = i_{+}^{\eps}(\theta)$ to $\rho^{\ast}$. Besides, the function inside the integral behaves like $O\left(\sqrt{\eps}(\rho^{\ast} -u)^{-\frac{\gamma+1}{2}}\right)$ when $\rho \rightarrow \rho^{\ast}$. This leads to a diverging integral for $\gamma > 1$. Then the integral behaves like $O\left(\sqrt{\eps}(\rho^{\ast} -\rho_{d}^{\eps})^{-\frac{\gamma-1}{2}}\right)$ and thus we get $\rho^{\ast} - i_{\pm}^{\eps}(\theta) = O\left(\eps^{\frac{1}{k-1}}\right)$.

$3.$ Let $\eps' > 0$ and $\rho < \rho_{r}^{\eps'}$. From the rarefaction curve equation (\ref{OndeDetente}), $(i^{\eps}_{\pm})^{-1}(\rho)$ satisfies
\begin{equation*}
|(i^{\eps}_{\pm})^{-1}(\rho) - \theta_{r}| \leq  \int_{0}^{\rho_{r}^{\eps}} \sqrt{\frac{\eps p'(u)}{u}}du = \sqrt{\eps}\int_{0}^{\rho_{r}^{\eps}} \frac{\gamma u^{\frac{\gamma-2}{2}}\rho^{\ast \gamma+1} }{(\rho^{\ast} -u)^{\frac{\gamma+1}{2}}}du,
\end{equation*}
 Assuming that the limit of $\eps p(\rho_{r}^{\eps})$ is finite, we get $\rho^{\ast} - \rho_{r}^{\eps} = O(\eps^{\frac{1}{\gamma}})$. Thus, the function inside the integral behaves like $\sqrt{\eps}(\rho^{\ast} -u)^{-\frac{\gamma+1}{2}}$ when $\rho \rightarrow \rho^{\ast}$. This leads to a diverging integral for $\gamma > 1$, and then the integral behaves like $O\left(\sqrt{\eps}(\rho^{\ast} -\rho_{r}^{\eps})^{-\frac{\gamma-1}{2}}\right)$ and thus like $O(\eps^{\frac{1}{2\gamma}})$. 
\qed
\end{proof}

\section{Proofs of theorem \ref{Thm:Riemann_Small_Epsilon} and proposition \ref{Prop:Riemann_sign} (solutions of the Riemann problem for $\eps > 0$)}

\subsection{Proof of theorem \ref{Thm:Riemann_Small_Epsilon}}
\label{Appendix:Riemann_Small_Epsilon}

\begin{proof} Let $(\rho_{\ell},\theta_{\ell})$ and $(\rho_{r},\theta_{r})$ be left and right states respectively and let us suppose that the intersection of the 1-forward wave curve $W_{-}^{f,\eps}$ issued from the left state and the 2-backward wave curve $W_{+}^{b,\eps}$ issued from the right state reduces to one point $(\widetilde{\rho},\widetilde{\theta})$ (in all the proof, the 1-wave will be implicitly relative to the left state while the 2-wave curve will be implicitly relative to the right state). The solution of the Riemann problem depends on which parts of the wave curves meet: for instance, if $(\widetilde{\rho},\widetilde{\theta})$ is the intersection of the 1-shock curve with the 2-rarefaction curve, then the solution will be the combination of a shock wave (with a speed given by the Rankine-Hugoniot condition (\ref{Eq:RH_1})) and a rarefaction wave separated by an intermediate constant state. To find where the intersection on the shock curves is located, the main arguments will be the monotony of the wave curves given by propositions \ref{Prop:Hugoniot_loci} and \ref{Prop:Integral_curve} (independently of the location of the states in the $\left(\rho,\theta\right)$-plane) and their convergence speed to their asymptotic limit. The wave curves will be considered as functions of $\theta$ in their domain of definition: 
\begin{equation*}
w_{-}^{\eps} = \left\{\begin{array}{ll}i_{-}^{\eps} &\mbox{ for }\theta \in [(i_{-}^{\eps} )^{-1}(0),\theta_{\ell}],\\h_{-}^{\eps} &\mbox{ for }\theta \in [\theta_{\ell},\pi[,\end{array}\right.\quad w_{+}^{\eps}  = \left\{\begin{array}{ll}h_{+}^{\eps}&\mbox{ for }\theta \in [0,\theta_{r}],\\i_{+}^{\eps} &\mbox{ for }\theta \in [\theta_{r},(i_{+}^{\eps} )^{-1}(0)[,\end{array}\right.
\end{equation*}
where $w_{-}$ is an increasing function and $w_{+}$ is a decreasing function. The functions $h_{\pm}$ and $i_{\pm}$ are respectively defined in propositions \ref{Prop:Hugoniot_loci} and \ref{Prop:Integral_curve}. Let us examine the different cases suggested by the theorem successively: $\theta_{\ell}$ greater or lower or equal to $\theta_{r}$. For the reader's convenience, the corresponding geometric configurations of the wave curves are illustrated in figure \ref{Fig_appendix:Riemann_Small_Epsilon}. 

\textbf{Case $\theta_{\ell} > \theta_{r}$} (Fig. \ref{Fig_appendix:Riemann_Small_Epsilon}, (a)). From proposition \ref{Prop:Integral_curve}, $(i_{-}^{\eps})^{-1}(0)$ (resp. $(i_{+}^{\eps})^{-1}(0)$) tends to $\theta_{\ell}$ (resp. $\theta_{r}$) as $\eps$ goes to zero (and the third point of the same proposition asserts that it is still the case when $\rho_{r}$ tends to $\rho^{\ast}$). So, assuming that there exists $\alpha$ such that $(i_{-}^{\alpha})^{-1}(0) < (i_{+}^{\eps})^{-1}(0)$, there exists $\beta < \alpha$ such that $\theta_{r} < (i_{-}^{\beta})^{-1}(0) = (i_{+}^{\beta})^{-1}(0) < \theta_{\ell}$. So, since the domains of definition of $w_{-}$ and $w_{+}$ are respectively $[(i_{-}^{\beta})^{-1}(0),\pi[$ and $]0,(i_{+}^{\beta})^{-1}(0)]$, the only intersection point of $w_{+}^{\beta}$ and $w_{-}^{\beta}$ is the intersection of the 1 and 2-rarefaction curves at $(i_{+}^{\beta})^{-1}(0)$. As $\eps$ decreases, the intersection point disappears since the domains of definition are separated. However, the integral curves meet the $\left\{\rho = 0\right\}$ axis at the states $(0,i_{-}^{\eps}(0))$, $(0,i_{+}^{\eps}(0))$ and these states are connected by vacuum.
\begin{figure}
\begin{center}

\null
\hfill
\psfrag{rho}{$\rho$}
\psfrag{theta}{$\theta$}
\psfrag{rhomax}{$\rho^{\ast}$}
\psfrag{pi}{$\pi$}
\psfrag{0}{$0$}
\psfrag{rhol}{\textcolor{blue}{$\rho_{\ell}$}}
\psfrag{rhor}{\textcolor{blue}{$\rho_{r}$}}
\psfrag{thetal}{\textcolor{blue}{$\theta_{\ell}$}}
\psfrag{thetar}{\textcolor{blue}{$\theta_{r}$}}
\psfrag{W+}{$i_{+}$}
\psfrag{W-}{$i_{-}$}
\psfrag{W+(0)}{\footnotesize $(i_{+})^{-1}(0)$}
\psfrag{W-(0)}{\footnotesize $(i_{-})^{-1}(0)$}
\subfigure[$\theta_{\ell} > \theta_{r}$]{\includegraphics[width=0.45\textwidth]{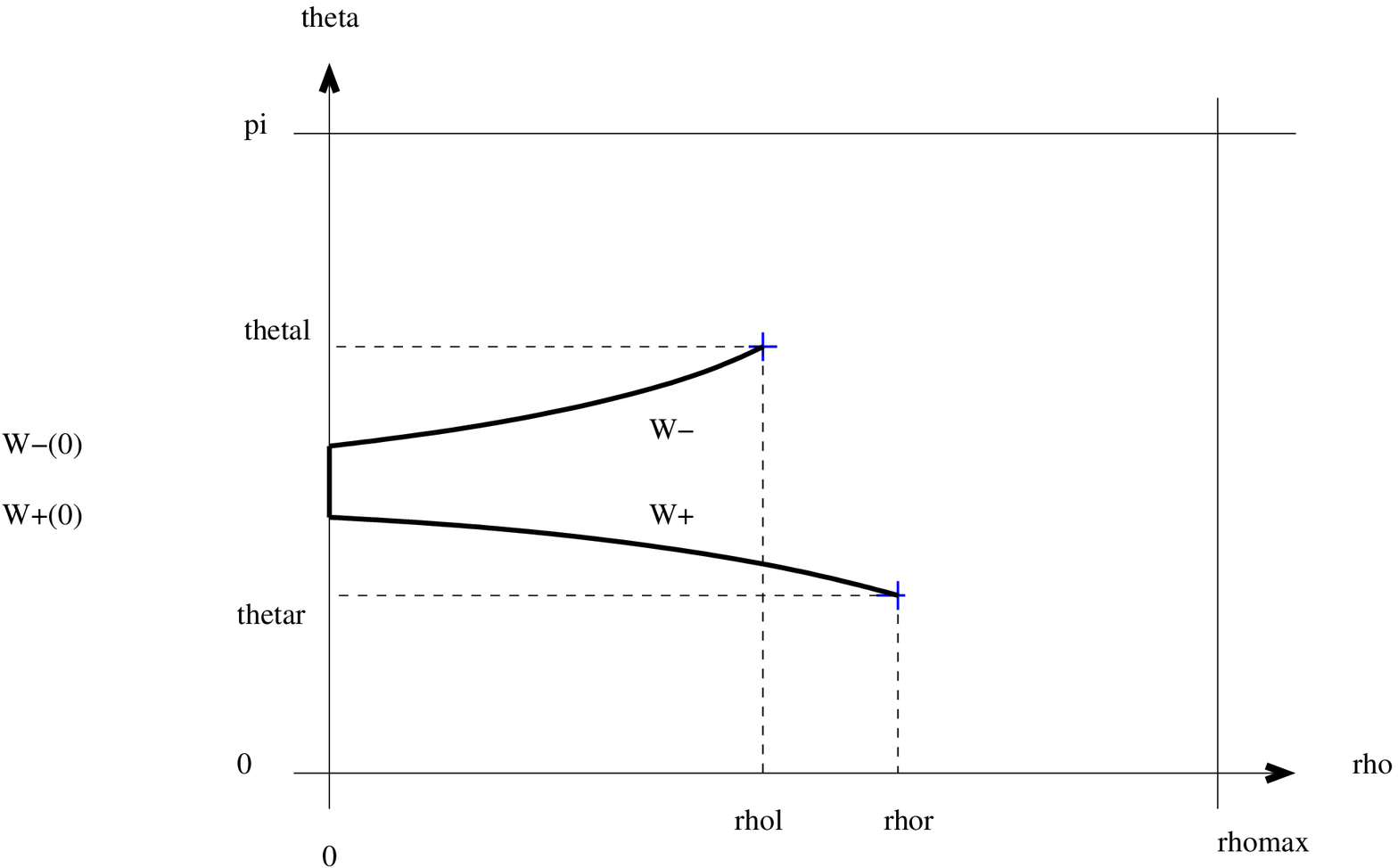}}
\hfill
\psfrag{rho}{$\rho$}
\psfrag{theta}{$\theta$}
\psfrag{rhomax}{$\rho^{\ast}$}
\psfrag{pi}{$\pi$}
\psfrag{0}{$0$}
\psfrag{rhol}{\textcolor{blue}{$\rho_{\ell}$}}
\psfrag{rhor}{\textcolor{blue}{$\rho_{r}$}}
\psfrag{thetal}{\textcolor{blue}{$\theta_{\ell}$}}
\psfrag{thetar}{\textcolor{blue}{$\theta_{r}$}}
\psfrag{W+}{$h_{-}$}
\psfrag{W-}{$i_{+}$}
\psfrag{hl(rhor)}{\footnotesize $h_{-}(\theta_{r})$}
\subfigure[$\theta_{\ell} < \theta_{r}$, $\rho_{r} > h_{-}(\theta_{r})$]{\includegraphics[width=0.45\textwidth]{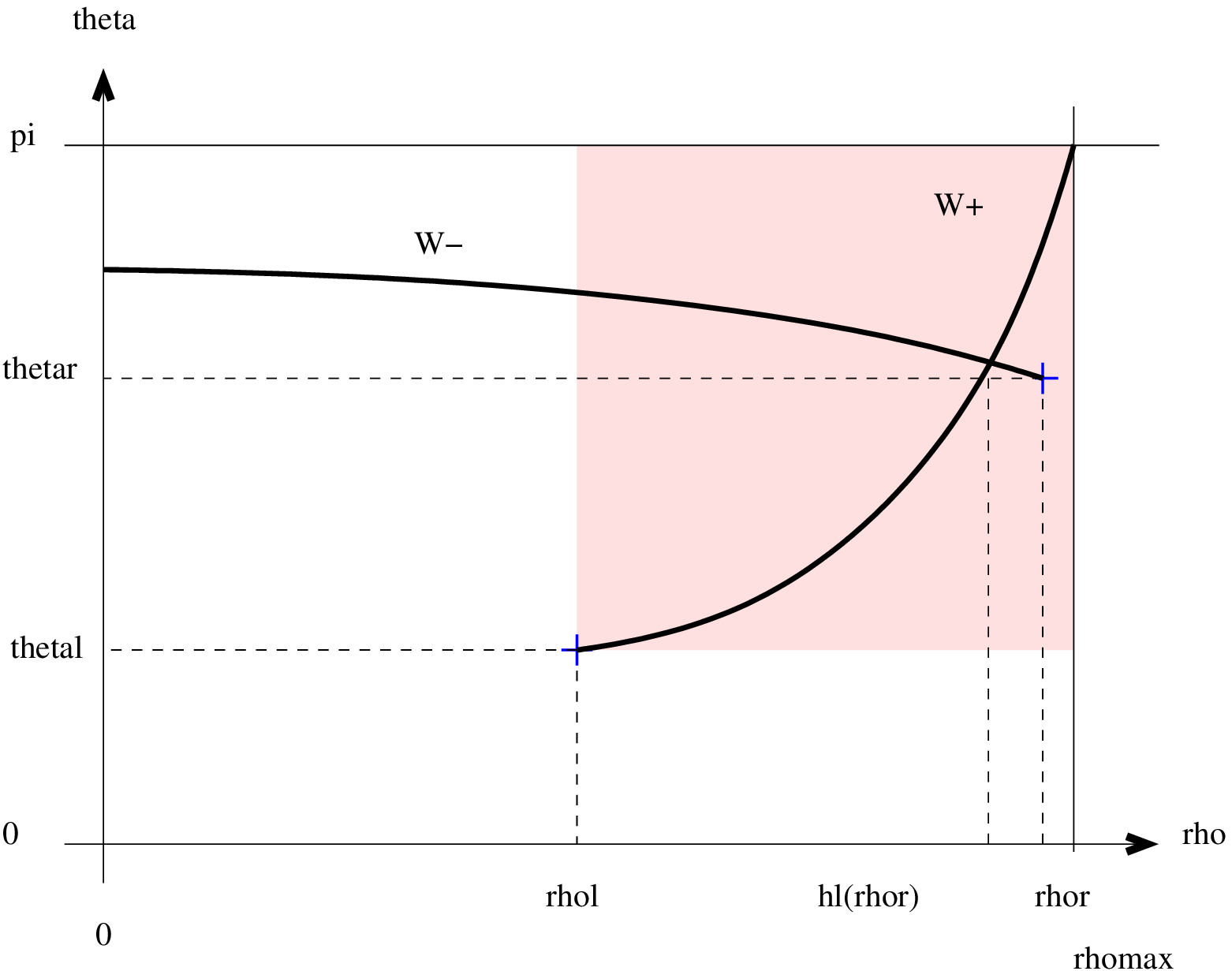}}
\hfill
\hfill
\null

\null
\hfill
\psfrag{rho}{$\rho$}
\psfrag{theta}{$\theta$}
\psfrag{rhomax}{$\rho^{\ast}$}
\psfrag{pi}{$\pi$}
\psfrag{0}{$0$}
\psfrag{rhol}{\textcolor{blue}{$\rho_{\ell}$}}
\psfrag{rhor}{\textcolor{blue}{$\rho_{r}$}}
\psfrag{thetal}{\textcolor{blue}{$\theta_{\ell}$}}
\psfrag{thetar}{\textcolor{blue}{$\theta_{r}$}}
\psfrag{W+}{$h_{-}$}
\psfrag{W-}{$h_{+}$}
\psfrag{hl(rhor)}{\footnotesize $h_{-}(\theta_{r})$}
\subfigure[$\theta_{\ell} < \theta_{r}$, $\rho_{r} < h_{-}(\theta_{r})$]{\includegraphics[width=0.45\textwidth]{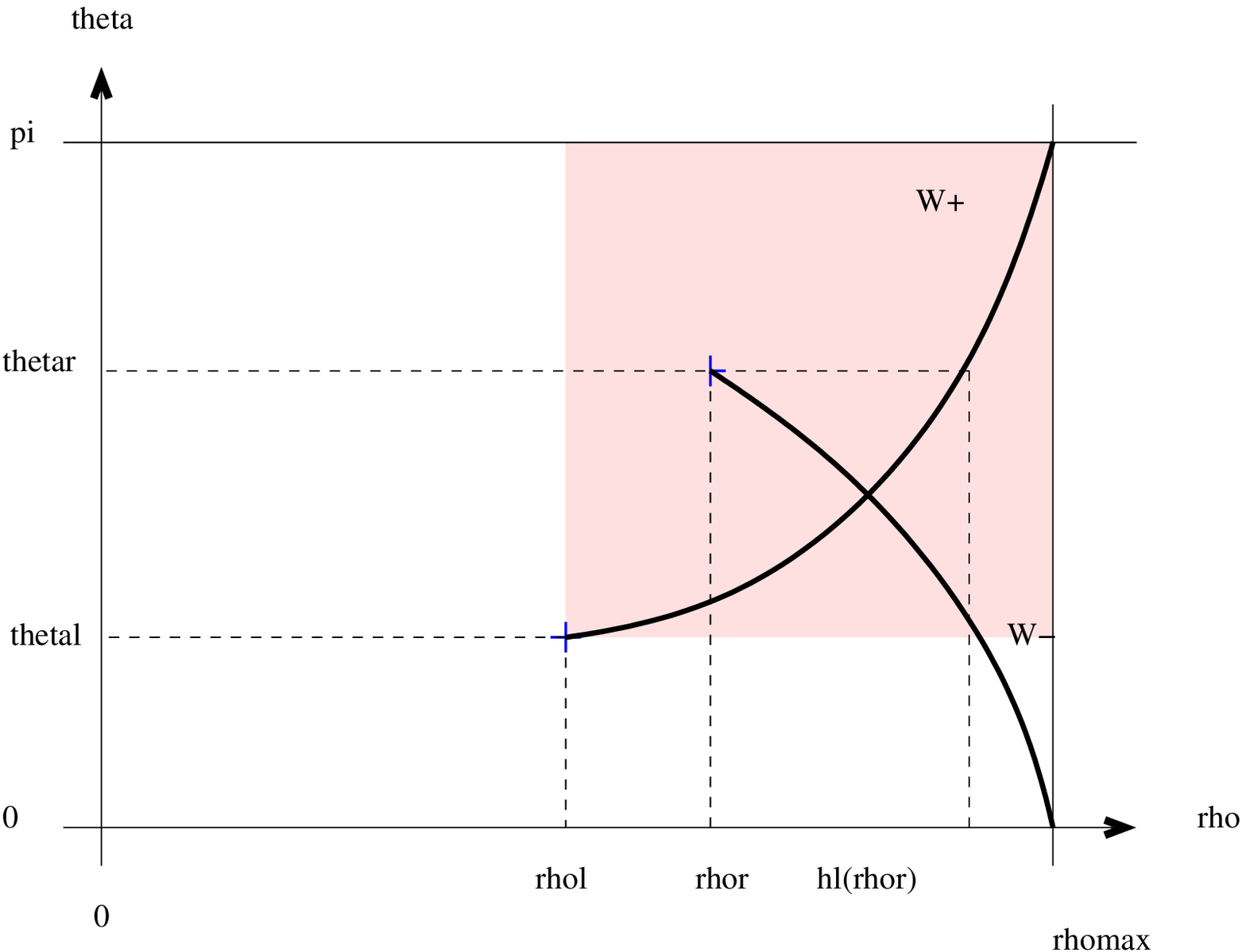}}
\hfill
\psfrag{rho}{$\rho$}
\psfrag{theta}{$\theta$}
\psfrag{rhomax}{$\rho^{\ast}$}
\psfrag{pi}{$\pi$}
\psfrag{0}{$0$}
\psfrag{rhol}{\textcolor{blue}{$\rho_{\ell}$}}
\psfrag{rhor}{\textcolor{blue}{$\rho_{r}$}}
\psfrag{thetal}{\textcolor{blue}{$\theta_{\ell} = \theta_{r}$}}
\psfrag{W+}{$h_{+}$}
\psfrag{W-}{$i_{-}$}
\psfrag{W-(0)}{\footnotesize $(i_{-})^{-1}(0)$}
\subfigure[$\theta_{\ell} = \theta_{r}$]{\includegraphics[width=0.45\textwidth]{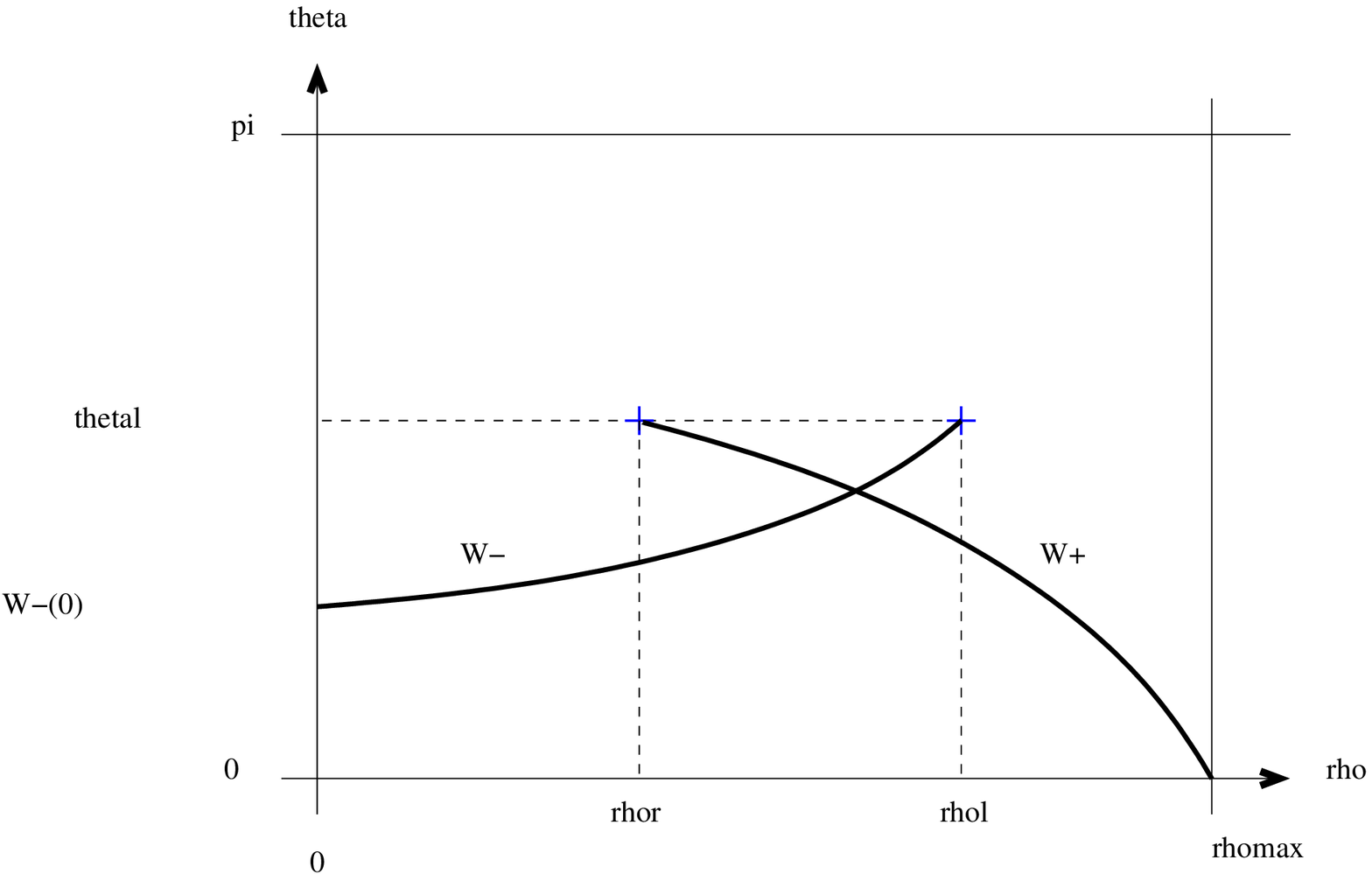}}
\hfill
\null

\caption{Schematics of the intersections of the wave curves (proof of theorem \ref{Thm:Riemann_Small_Epsilon}). Only the parts which meet are represented.}
\label{Fig_appendix:Riemann_Small_Epsilon}
\end{center}
\end{figure}

\textbf{Case $\theta_{\ell} < \theta_{r}$.} We suppose that $\rho_{\ell}$ is lower than $\rho_{r}$. For all $\eps$, the increasing 1-shock curve issued from the left state divides the domain $[\rho_{\ell},\rho^{\ast}]\times[\theta_{\ell},2\pi]$ (to which the right state belongs) in two parts: the left domain where the right state is on the left side of the 1-Hugoniot curve issued from the left state ($\rho_{r}^{\eps} < h_{-}^{\eps}(\theta_{r}^{\eps})$) and the right domain where the right state is on the right side of the 1-Hugoniot curve issued from the left state ($\rho_{r}^{\eps} > h_{-}^{\eps}(\theta_{r}^{\eps})$).

 - Assume that for all $\eps$ the right state is on the right side of the 1-shock curve (Fig. \ref{Fig_appendix:Riemann_Small_Epsilon}, (b)). We consider $w_{-}^{\eps} - w_{+}^{\eps}$ on the interval $[(i_{-}^{\eps})^{-1}(0),(i_{+}^{\eps})^{-1}(0)]$, which is the intersection of the domains of definition of $w_{-}^{\eps}$ and $ w_{+}^{\eps}$. The function $w_{-}^{\eps} - w_{+}^{\eps}$ is increasing. We have $(w_{-}^{\eps} - w_{+}^{\eps})((i_{-}^{\eps})^{-1}(0)) = - (w_{+})((i_{-})^{-1}(0)) < 0$ and  $(w_{-}^{\eps} - w_{+}^{\eps})(\theta_{r}) =  h_{-}^{\eps}(\theta_{r}) - \rho_{r} > 0$. So the only zero of $w_{-}^{\eps} - w_{+}^{\eps}$ is in the interval $[(i_{-}^{\eps})^{-1}(0),\theta_{r}]$. So the intersection point of the two wave curves is the intersection of the 1-shock curve and the 2-rarefaction curve. This corresponds to the second subcase of the third case of the theorem. Note that the limit of the 1-shock curve (proposition \ref{Prop:Hugoniot_loci}) implies that the limit of this case should be considered only if $\rho_{r}$ tends to $\rho^{\ast}$.  

- Assume that for all $\eps$ the right state is on the left side of the 1-shock curve (Fig. \ref{Fig_appendix:Riemann_Small_Epsilon}, (c)). Since $\rho_{\ell}$ is lower than $\rho_{r}$, the left state is also on the left side of the 2-Hugoniot curve issued from the right state: $\rho_{\ell}^{\eps} < \rho_{r}^{\eps} =   h_{+}^{\eps}(\theta_{r}^{\eps}) < h_{+}^{\eps}(\theta_{\ell}^{\eps})$. We again consider the increasing function $w_{-}^{\eps} - w_{+}^{\eps}$ on its domain of definition $[(i_{-}^{\eps})^{-1}(0),(i_{+}^{\eps})^{-1}(0)]$. This function is negative at $\theta_{\ell}$ and positive at $\theta_{r}$. So the intersection point of the two wave curves is the intersection of the two  shock curves and $\widetilde{\rho} > \rho_{\ell},\rho_{r}$. This corresponds to the first subcase of the third case of the theorem.

If $\rho_{\ell}$ is greater than $\rho_{r}$, the decreasing 2-shock curve issued from the right state divides the domain $[\rho_{r},\rho^{\ast}]\times[0,\theta_{r}]$ and the same arguments as before lead to the result. 

\textbf{Case $\theta_{\ell} = \theta_{r}$} (Fig. \ref{Fig_appendix:Riemann_Small_Epsilon}, (d)). Assume that $\rho_{r} < \rho_{\ell}$. We again consider the increasing function  $w_{-}^{\eps} - w_{+}^{\eps}$. It is positive at $\theta = \theta_{\ell}$ and negative for $\theta = (i_{-}^{\eps})^{-1}(0) < \theta_{\ell}$ (since $h_{+}^{\eps}\left((i_{-}^{\eps})^{-1}(0)\right) > \rho_{r} > 0$). So it equals zero for a value $\theta < \theta_{r}$. So $\left(\widetilde{\rho},\widetilde{\theta}\right)$ is the intersection of the 1-rarefaction curve and the 2-shock curve, which leads to the solution given in the first case of the theorem. The case  $\rho_{r} > \rho_{\ell}$ is similar. 
\qed
\end{proof}

\subsection{Proof of proposition \ref{Prop:Riemann_sign}}
\label{Appendix:Riemann_sign}

\begin{proof} 1. Consider the domain where $\left\{\rho =\rho_{\ell}\right\}$. 

If the left state is connected to the intermediate state via a rarefaction wave, then this rarefaction fan is contained between the speeds $\lambda_{-}^{\eps} = \cos\theta_{\ell} - \sqrt{\eps p'(\rho_{\ell})\rho_{\ell}}|\sin\theta_{\ell}|$ and $\tilde{\lambda}_{-}^{\eps} = \cos\tilde{\theta} - \sqrt{\eps p'(\tilde{\rho})\tilde{\rho}}|\sin\tilde{\theta}|$. Since $\lambda_{-}^{\eps} < \cos\theta_{\ell}$, the domain $\left\{\rho =\rho_{\ell}\right\}$ cannot contain the contact wave with speed $\cos\theta_{\ell}$.  

If the left state is connected to the intermediate state via a shock wave, then we have $\cos\tilde{\theta} < \cos\theta_{\ell}$ and $\rho_{\ell} < \tilde{\rho}$, which yields
\begin{eqnarray*}
&&s^{-} = \cos\theta_{\ell} + \frac{\tilde{\rho}}{\rho_{\ell} - \tilde{\rho}}(\cos\theta_{\ell} - \cos\tilde{\theta})\\
&&\quad = \cos\tilde{\theta} + \frac{\rho_{\ell}}{\tilde{\rho} - \rho_{\ell}}(\cos\tilde{\theta} - \cos\theta_{\ell})\quad  < \cos\tilde{\theta},
\end{eqnarray*}
So the domain $\left\{\rho =\rho_{\ell}\right\}$ cannot contain the contact wave with speed $\cos\theta_{\ell}$.  

So in both cases, the domain $\left\{\rho =\rho_{\ell}\right\}$ cannot contain the contact wave with speed $\cos\theta_{\ell}$. The same arguments show that the domain $\left\{\rho =\rho_{r}\right\}$ cannot contain the contact wave with speed $\cos\theta_{r}$.

Finally, we easily check that a contact wave with propagation speed $\cos\tilde{\theta}$ can occur within the intermediate domain $\left\{\rho = \tilde{\rho}\right\}$. Indeed, if the intermediate state is connected to the left state (resp. to the right state) via a rarefaction wave, then we have $\tilde{\lambda}_{-}^{\eps} < \cos\tilde{\theta} < \tilde{\lambda}_{+}^{\eps}$ and if the intermediate state is connected to the left state (resp. to the right state) via a shock wave then $s^{-} < \cos\tilde{\theta}$ (resp. $s^{+} > \cos\tilde{\theta}$).

2. Like in the previous point, the contact wave can be located only in the intermediate state $\left\{\rho = 0\right\}$. But the propagation speed is not unique: it can be all the intermediate speeds between the two fans of rarefaction.   
\qed
\end{proof}

\section{Proofs of lemma \ref{Lemma:Rarefaction_wave_2} and propositions \ref{Prop:limit_Riemann_1}, \ref{Prop:limit_Riemann_2} and \ref{Prop:limit_Riemann_3} (limits of solutions of the Riemann problem)}

We recall that the quantities indexed by - (resp. by +) are implicitly those related to the left state (resp. the right state). The characteristic speeds related to the intermediate state will be denoted by $\widetilde{\lambda}_{\pm}^{\eps}$.  

\subsection{Proof of proposition \ref{Prop:limit_Riemann_1}}
\label{Appendix:limit_Riemann_1}

\begin{proof} (a) Let us suppose that $\rho_{\ell} < \rho_{r}$ (the opposite case is similar). According to theorem \ref{Thm:Riemann_Small_Epsilon} and proposition \ref{Prop:Integral_curve}, the intermediate state angle $\widetilde{\theta}^{\eps}$ tends to $\theta_{\ell}$ (since $\widetilde{\rho}^{\eps}$ belongs to the interval $\left]\rho_{\ell},\rho_{r}\right[$ for each $\eps$). In addition, it is easy to check that the two speeds $\widetilde{\lambda}_{+}^{\eps}$ and $\lambda_{+}$ tend to $\cos\theta_{r}$ as $\eps$ tends to $0$. So the rarefaction wave turns into a contact wave with speed $\cos\theta_{r}$. As regards the shock wave, its speed is given by
\begin{equation*}
s^{\eps} = \frac{\widetilde{\rho}^{\eps}\cos\widetilde{\theta}^{\eps} - \rho_{\ell}\cos\theta_{\ell}}{\widetilde{\rho}^{\eps} - \rho_{\ell}} = \cos\theta_{\ell} + \widetilde{\rho}^{\eps}\frac{\cos\widetilde{\theta}^{\eps} - \cos\theta_{\ell}}{\widetilde{\rho}^{\eps} - \rho_{\ell}},
\end{equation*}
and tends to $\cos\theta_{\ell}$ as $\eps$ tends $0$. Finally, the two waves coincide and make a single contact wave.

(b) As in the first case, it is easy to check that the two rarefaction waves (cf. theorem \ref{Thm:Riemann_Small_Epsilon}) turn into contact waves with speeds respectively equal to $\cos\theta_{r}$ and $\cos\theta_{\ell}$.

(c) According to proposition \ref{Prop:Hugoniot_loci}, $h_{-}^{\eps}(\rho_{r}^{\eps})$ tends to $\theta_{\ell} < \theta_{r}$. So, theorem \ref{Thm:Riemann_Small_Epsilon} implies that we are looking for the limit of the intersection point of the two shock curves issued from the left and the right states. These intermediate states $(\widetilde{\rho}^{\eps},\widetilde{\theta}^{\eps})$ are the solutions of the non-linear systems
\begin{eqnarray}
&&H_{\eps}(\rho_{\ell},\theta_{\ell}, \widetilde{\rho}^{\eps},\widetilde{\theta}^{\eps}) = 0,\ H_{\eps}(\widetilde{\rho}^{\eps},\widetilde{\theta}^{\eps},\rho_{r},\theta_{r}) = 0,\\
&&\rho \geq \max(\rho_{\ell},\rho_{r}),\ \theta \in \left]\min(\theta_{\ell},\theta_{r}),\max(\theta_{\ell},\theta_{r})\right[. 
\label{SystEq:Intersection_shock}
\end{eqnarray}
From proposition \ref{Prop:Hugoniot_loci}, for all $\theta \in \left]\min(\theta_{\ell},\theta_{r}),\max(\theta_{\ell},\theta_{r})\right[$, the largest zero of the function $\rho \rightarrow H_{\eps}(\rho_{\ell},\theta_{\ell},\rho,\theta)$ tends to $\rho^{\ast}$ as $\eps \rightarrow 0$. Indeed, if $\widetilde{\rho}^{\eps}$ does not tend to $\rho^{\ast}$, $\widetilde{\theta}^{\eps}$ simultaneously tend to $\theta_{\ell}$ and to $\theta_{r}$ (which is different from $\theta_{\ell}$), which is absurd. Therefore, $\widetilde{\rho}^{\eps}$ increases and tends to $\rho^{\ast}$. Besides, we have the equality 
\begin{equation*}
\eps p(\widetilde{\rho}^{\eps})[\rho]_{\ell} = [\Psi(\cos\theta)]_{\ell}[\rho\cos\theta]_{\ell} - [\Phi(\cos\theta)]_{\ell}[\rho]_{\ell} + \eps p(\rho^{\ell})[\rho]_{\ell}, 
\end{equation*}
which implies that $\eps p(\widetilde{\rho}^{\eps})$ is bounded as $\eps$ tends to $0$ (because $\widetilde{\theta}^{\eps}$ is bounded too) and we deduce that $\eps p(\widetilde{\rho}^{\eps})$ converges to a non-zero value $\bar{p}$. Finally, we can easily check that the system given in  is equivalent to (\ref{SystEq:Intersection_shock}).
\qed
\end{proof}

\subsection{Proof of lemma \ref{Lemma:Rarefaction_wave_2}}
\label{Appendix:Rarefaction_wave_2}

\begin{proof} Suppose that $\widetilde{\lambda}_{+} = \lim \widetilde{\lambda}_{+}^{\eps}$ is finite. Since $\eps p(\rho_{r}^{\eps}) \rightarrow \bar{p}_{r} > 0$, then $\eps p'(\rho_{r}^{\eps}) \rightarrow +\infty$ and consequently $\lambda_{+} \rightarrow +\infty$. The limit rarefaction wave has a fan for speeds $s$ belonging to $]\widetilde{\lambda}_{+},+ \infty[$. The 2-rarefaction wave satisfies, for all $s \in ]\widetilde{\lambda}_{+},+ \infty[$, 
\begin{equation*}
s = \lambda_{+}(\rho(s),\theta(s)) = \cos(\theta(s)) + \sqrt{\eps p'(\rho(s))\rho(s)}|\sin(\theta(s))|.
\end{equation*} 
So, for a fixed $s$, $\rho(s) \rightarrow \rho^{\ast}$ as $\eps \rightarrow 0$ and we have $(\rho^{\ast} - \rho(s)) = O(\eps^{\frac{1}{\gamma + 1}})$. Thus, $\eps p(\rho(s))$ ($= O(\eps^{1 - \frac{\gamma}{\gamma+1}})$)  $\rightarrow 0$ as $\eps \rightarrow 0$. So the rarefaction wave tends to the combination of a shock wave between the states $(\widetilde{\rho},\theta_{r},0)$ and $(\rho^{\ast},\theta_{r},0)$ with speed $\widetilde{\lambda}_{+}$ and a declustering wave. 

If $\widetilde{\rho} < \rho^{\ast}$, then $\widetilde{\lambda}_{+}$ equals $0$ and the previous arguments apply. Let us look at the case $\widetilde{\rho} = \rho^{\ast}$. If $\widetilde{\lambda}_{+}$ is finite, then in the previous conclusion the shock wave disappears since the two states on both sides of the shock wave are equal. And it confirms that $\bar{\bar{p}} = \lim \eps p(\widetilde{\rho}^{\eps})$ equals zero. If  $\widetilde{\lambda}_{+}$ is infinite like $\lim \lambda_{+}^{\eps}$, then the rarefaction wave turns into a shock wave with an infinite speed between the states $(\rho^{\ast},\theta_{r},\bar{p}_{\ell})$ and $(\rho^{\ast},\theta_{r},\bar{p}_{r})$.~\qed
\end{proof}

\subsection{Proof of proposition \ref{Prop:limit_Riemann_2}}
\label{Appendix:limit_Riemann_2}

\begin{proof} 

(a) We want to apply lemma \ref{Lemma:Rarefaction_wave_2}. Here the intermediate state is the intersection of the 2-rarefaction curve and the 1-shock curve. The intersection state exists for all $\eps$ by the monotony of the two curves (cf. propositions \ref{Prop:Hugoniot_loci} and \ref{Prop:Integral_curve}). Let us note that there is no reason to have a finite limit of $\widetilde{\lambda}^{\eps}$ since the intermediate state can tend to $\rho^{\ast}$. By a compatcness argument, we can restrict ourselves to prove the uniqueness of the limit of convergent solutions. So let us consider several cases:

Case (i) $\widetilde{\rho^{\eps}} \rightarrow \rho_{\ell}$. In this case, the 1-shock disapears and the solution is given by lemma \ref{Lemma:Rarefaction_wave_2}.

Case (ii) $\widetilde{\rho^{\eps}} \rightarrow \widetilde{\rho} \in \left]\rho_{\ell},\rho^{\ast}\right[$. In this case, it is easy to check that the 1-shock becomes a 1-contact and the limit of the 2-rarefaction is given by lemma \ref{Lemma:Rarefaction_wave_2}.

Case (iii) $\widetilde{\rho^{\eps}} \rightarrow \rho^{\ast}$. Now let us look at the limit of $\eps p(\widetilde{\rho}^{\eps})$. We have
\begin{equation*}
H_{\eps}(\rho_{\ell},\theta_{\ell},\widetilde{\rho}^{\eps},\widetilde{\theta}^{\eps}) = \left[\Phi(\cos\theta) + \eps p(\rho)\right]_{\ell}\left[\rho\right]_{\ell} - \left[\Psi(\cos\theta)\right]_{\ell}\left[\rho\cos\theta\right]_{\ell} = 0.
\end{equation*}
Since $\cos\tilde{\theta}$ tends to $\cos\theta_{r}$ (cf. third point of proposition \ref{Prop:Integral_curve}), the terms $[\rho\cos\theta]$ and $[\rho]$ are bounded and taking the limit $\eps \rightarrow 0$, we get 
\begin{equation*}
\left[\eps p(\rho)\right]_{\ell}\left[\rho\right]_{\ell} \rightarrow 0
\end{equation*}
So either $\left[\rho\right]_{\ell}$ tends to $0$ or $\eps p(\widetilde{\rho}^{\eps})$ tends to $0$. Thus $\eps p(\widetilde{\rho}^{\eps})$ tends to $0$. Finally lemma \ref{Lemma:Rarefaction_wave_2} applies and we conclude that the 2-rarefaction tends to a declustering wave. Now let us look at the limit of the shock speed. It is written  
\begin{equation*}
s^{\eps} = \frac{\widetilde{\rho}^{\eps}\cos\widetilde{\theta}^{\eps} - \rho_{\ell}\cos\theta_{\ell}}{\widetilde{\rho}^{\eps} - \rho_{\ell}} = \cos\theta_{\ell} + \widetilde{\rho}^{\eps}\frac{\cos\widetilde{\theta}^{\eps} - \cos\theta_{\ell}}{\widetilde{\rho}^{\eps} - \rho_{\ell}}.
\end{equation*}
and so tends to $\cos\theta_{r}$. So the 1-shock tends to a contact discontinuity.

(b) The limit of the 2-rarefaction is given by lemma \ref{Lemma:Rarefaction_wave_2} (with $\widetilde{\rho}^{\eps} = 0$ and $\lambda_{+} = \cos\theta_{r}$).The 1-rarefaction wave turns into a contact wave as before. 

(c) We first consider the case where the intermediate state is the intersection point of two shock curves (for all $\eps$, $h_{-}^{\eps}(\theta_{r}^{\eps}) > \rho_{r}^{\eps}$): by the monotony of these curves, $\widetilde{\rho}^{\eps}$ is larger than $\rho_{r}^{\eps}$ and so tends to $\rho^{\ast}$ too. Besides, the intermediate state $(\widetilde{\rho}^{\eps},\widetilde{\theta}^{\eps})$ satisfies 
\begin{eqnarray*}
&&H_{\eps}(\rho_{\ell},\theta_{\ell},\widetilde{\rho}^{\eps},\widetilde{\theta}^{\eps}) = \left[\Phi(\cos\theta) + \eps p(\rho)\right]_{\ell}\left[\rho\right]_{\ell} - \left[\Psi(\cos\theta)\right]_{\ell}\left[\rho\cos\theta\right]_{\ell} = 0,\\
&&\widetilde{\rho}^{\eps} \geq \max(\rho_{\ell},\rho_{r}),\quad \widetilde{\theta}^{\eps} \in \left]\theta_{\ell},\theta_{r}\right[,\label{Eq:RH_left}
\end{eqnarray*}
which implies that $\eps p(\widetilde{\rho}^{\eps})$ converges to a value denoted $\bar{\bar{p}}$. By taking the limit $\eps \rightarrow 0$ in the Rankine-Hugoniot relation (\ref{Eq:RH_2}), we obtain
\begin{equation*}
\left[\Psi(\cos\theta)\right]_{r}[\rho\cos\theta]_{r} = \left[\Phi(\cos\theta) + \eps p(\rho)\right]_{r}[\rho]_{r},
\end{equation*}
we obtain 
\begin{equation*}
\left[\Psi(\cos\theta)\right]_{r}[\cos\theta]_{r} =  0
\end{equation*}
($\tilde{\theta}$ is bounded and $[\rho]_{r}$ tends to $0$), which implies that $\tilde{\theta}$ equals $\theta_{r}$. Finally, we have 
\begin{equation*}
[\eps p(\rho)]_{r} = [\Psi(\cos\theta)]_{r} \frac{[\rho \cos\theta]_{r}}{[\rho]_{r}} - [\Phi(\cos\theta)]_{r}. 
\end{equation*}
If the limit $[\eps p(\rho)]_{r}$  is non zero, then the propagation speed is infinite and if it is zero, there is no discontinuity.
    
Consider now the case where the intermediate state is the intersection of the 1-shock curve issued from the left state and the 2-rarefaction curve issued from the right state (for all $\eps$, $h_{-}^{\eps}(\theta_{r}^{\eps}) < \rho_{r}^{\eps}$). From proposition \ref{Prop:Integral_curve} (point 3), the intermediate angle $\widetilde{\theta}^{\eps}$ tends to $\theta_{r}$. Besides, thanks to proposition \ref{Prop:Hugoniot_loci}, the intermediate density $\widetilde{\rho}^{\eps}$ tends to $\rho^{\ast}$. From (\ref{Eq:Shock}), $\eps p(\widetilde{\rho}^{\eps})$ converges to a value denoted by $\bar{\bar{p}}$ which is given by the limit Rankine Hugoniot relation. So the limit intermediate state is $(\rho^{\ast},\theta_{r},\bar{\bar{p}})$. Finally lemma \ref{Lemma:Rarefaction_wave_2} applies: the rarefaction turns into a shock.  
\qed
\end{proof}

\subsection{Proof of proposition \ref{Prop:limit_Riemann_3}}
\label{Appendix:limit_Riemann_3}

\begin{proof} (a) Since the intermediate density $\widetilde{\rho}^{\eps}$ is comprised between the left and right ones: $\rho_{\ell}^{\eps} < \widetilde{\rho}^{\eps} < \rho_{r}^{\eps}$ (cf. theorem \ref{Thm:Riemann_Small_Epsilon}) and since $\rho_{\ell}^{\eps}$, $\rho_{r}^{\eps} \rightarrow \rho^{\ast}$, we also have $\widetilde{\rho}^{\eps} \rightarrow \rho^{\ast}$. For the intermediate angle $\widetilde{\theta}^{\eps}$, the previous proof shows that it tends to $\theta_{r}$. Let us note that the 2-rarefaction wave tends to a contact wave (because $\lambda_{r+}^{\eps},\widetilde{\lambda}_{+}^{\eps} \rightarrow + \infty$). Let $s$ be the limit of the 1-shock speed and let us note that $s$ is lower than $\cos\theta_{\ell}$. Like in prop. \ref{Prop:limit_Riemann_2}, subcase (c), the intermediate pressure is equal to
\begin{equation*}
\bar{\bar{p}} = \bar{p}_{\ell} + \lim \frac{\left[\rho\cos\theta\right]_{\ell}\left[\Psi\right]_{\ell}}{\left[\rho\right]_{\ell}}.  
\end{equation*}
Now let us look at the limit of the shock speed. For finite $\eps$, the shock speed is given by:  
\begin{equation*}
s^{\eps} = \frac{\widetilde{\rho}^{\eps}\cos\widetilde{\theta}^{\eps} - \rho_{\ell}\cos\theta_{\ell}}{\widetilde{\rho}^{\eps} - \rho_{\ell}} = \cos\theta_{\ell} + \widetilde{\rho}^{\eps}\frac{\cos\widetilde{\theta}^{\eps} - \cos\theta_{\ell}}{\widetilde{\rho}^{\eps} - \rho_{\ell}}.
\end{equation*}
Since $\eps p(\rho_{r}^{\eps}) \rightarrow \bar{p}_{r} > 0$, we have $\eps^{\frac{1}{\gamma}} = O(\rho^{\ast} - \rho_{r}^{\eps})$ and then $\eps^{\frac{1}{\gamma}} = O(\rho^{\ast} - \widetilde{\rho}^{\eps})$. On the other hand, from lemma \ref{Prop:Integral_curve}  we have $(i^{\eps})^{-1}(\rho) - \theta_{r} = O(\eps^{\frac{1}{2\gamma}})$ and therefore we get 
\begin{equation*}
\cos\widetilde{\theta}^{\eps} - \cos\theta_{\ell} = - 2\sin\left(\frac{\widetilde{\theta}^{\eps} +\theta_{\ell}}{2}\right)\sin\left(\frac{\widetilde{\theta}^{\eps} - \theta_{\ell}}{2}\right) = O(\eps^{\frac{1}{2\gamma}}).
\end{equation*}
Thus, we easily get that $s^{\eps}$ tends to $\cos\theta_{\ell}$ and then that the pressure $\bar{\bar{p}}$ equals $\bar{p}_{\ell}$. 

(b) Here the proof is similar to the case where only one state converges to the congested state (see proof of prop. \ref{Prop:limit_Riemann_2}).

(c) Consider the case where the solution is the limit of two shock waves. By the monotony of the shock curves, the intermediate density is larger than the right and left ones (cf. theorem \ref{Thm:Riemann_Small_Epsilon}) and so it tends to the congested density too.
Suppose that the intermediate angle is not equal to $\theta_{\ell}$. As regards the 1-shock speed, we have 
\begin{equation*}
s^{\eps} = \frac{[\rho\cos\theta]_{\ell}}{[\rho]_{\ell}} = \cos\widetilde{\theta}^{\eps} + \rho_{\ell}\frac{(\cos\theta_{\ell}- \cos\widetilde{\theta}^{\eps})}{\widetilde{\rho}^{\eps} - \rho_{\ell}}. 
\end{equation*}
and so the limit 1-shock speed is $- \infty$. Besides, we have
\begin{equation*}
[\eps p(\rho)]_{\ell} = [\Psi(\cos\theta)]_{\ell}\frac{[\rho \cos\theta]_{\ell}}{[\rho]_{\ell}} - [\Phi\cos\theta]_{\ell},
\end{equation*}
which implies that $\eps p(\tilde{\rho}^{\eps})$ tends to $+ \infty$. Then we have 
\begin{equation*}
\frac{[\rho \cos\theta]_{r}}{[\rho]_{r}}[\Psi(\cos\theta)]_{r} =  [\eps p(\rho)]_{r} + [\Phi(\cos\theta)]_{r}.
\end{equation*}
Since the right hand side tends to $+\infty$, the 2-shock speed has to tend to $+\infty$ too. If the intermediate angle tends to $\theta_{\ell}$, then it does not tend to $\theta_{r}$ and the same arguments apply. The quantities
\begin{eqnarray*}
&&[\eps p(\rho)]_{\ell}[\rho]_{\ell} = [\Psi(\cos\theta)]_{\ell}[\rho \cos\theta]_{\ell} - [\Phi(\cos\theta)]_{\ell}[\rho]_{\ell},\\
&&[\eps p(\rho)]_{r}[\rho]_{r} = [\Psi(\cos\theta)]_{r}[\rho \cos\theta]_{r} - [\Phi(\cos\theta)]_{r}[\rho]_{r}
\end{eqnarray*}
are bounded. The limit of their quotient is 
\begin{equation*}
\frac{[\eps p(\rho)]_{r}[\rho]_{r}}{[\eps p(\rho)]_{\ell}[\rho]_{\ell}}  \underset{\eps \rightarrow 0}{\longrightarrow} \frac{[\Psi(\cos\theta)]_{r}[\cos\theta]_{r}}{[\Psi(\cos\theta)]_{\ell}[\cos\theta]_{\ell}}.
\end{equation*}
Besides, it is easily checked that  
\begin{equation*}
\frac{[\eps p(\rho)]_{r}[\rho]_{r}}{[\eps p(\rho)]_{\ell}[\rho]_{\ell}} \underset{\eps \rightarrow 0}{\sim} \frac{\tilde{\rho} - \rho_{r}}{\tilde{\rho} - \rho_{\ell}} \underset{\eps \rightarrow 0}{\sim}  \frac{\rho^{\ast} - \rho_{r}}{\rho^{\ast} - \rho_{\ell}},
\end{equation*}
where the last equivalence results from the fact that $(\rho^{\ast} - \tilde{\rho}) = o(\eps^{\frac{1}{\gamma}})$ since $\eps p(\tilde{\rho}) \rightarrow + \infty$ and $(\rho^{\ast} - \rho_{\ell,r}) = O(\eps^{\frac{1}{\gamma}})$, $\eps^{\frac{1}{\gamma}} = O(\rho^{\ast} - \rho_{\ell,r})$. Finally, we have 
\begin{equation*}
\frac{\rho^{\ast} - \rho_{r}}{\rho^{\ast} - \rho_{\ell}} = \left(\frac{\eps p(\rho_{\ell})}{\eps p(\rho_{r})}\right)^{\frac{1}{\gamma}} \rightarrow \left(\frac{\bar{p}_{\ell}}{\bar{p}_{r}}\right)^{\frac{1}{\gamma}}.
\end{equation*}

Consider now the limit of a solution consisting of one shock wave and one rarefaction wave. From lemma \ref{Lemma:Rarefaction_wave_2}, the intermediate angle $\widetilde{\theta}^{\eps}$ tends to $\theta_{r}$ and from the Rankine-Hugoniot relation, we have 
\begin{equation*}
\left[\eps p(\rho)\right]_{\ell}\left[\rho\right]_{\ell} = \left[\Psi(\cos\theta)\right]_{\ell}\left[\rho\cos\theta\right]_{\ell} - \left[\Phi(\cos\theta)\right]_{\ell}\left[\rho\right]_{\ell} .
\end{equation*}
So, since $\theta$ is  bounded and the densities $\rho_{\ell}^{\eps} < \widetilde{\rho}^{\eps}$ tend to $\rho^{\ast}$, the right hand side tends to a non zero value: $\rho^{\ast}\left[\Psi(\cos\theta)\right]_{\ell}\left[\cos\theta\right]_{\ell}$. Because $\left[\rho\right]_{\ell}$ tends to $0$, $\eps p(\widetilde{\rho}^{\eps})$ has to tend to $+ \infty$, which is absurd since $\eps p(\widetilde{\rho}^{\eps}) < \eps p(\rho_{r}^{\eps})$.\qed
\end{proof}

{\bf Acknowledgements:} 
The authors wish to thank Jacques Gautrais, Marie-H\'el\`ene Pillot and Guy Th\'eraulaz for stimulating discussions. This work has been supported by the "Agence Nationale de la Recherche", under contracts "PANURGE", ref. 07-BLAN-0208, and by the Marie Curie Actions of the European Commission in the frame of the DEASE project (MEST-CT-2005-021122).

\end{document}